\newcommand{\EXP}[1]{\llbracket #1 \rrbracket}
\newcommand{\CE}[0]{\mathcal{E}}
\newcommand{\SEM}[1]{\mathcal{E}\llbracket #1 \rrbracket}
\newcommand{\PSEM}[1]{\mathcal{E}_P[\![ #1 ]\!]}
\newcommand{\ESEM}[1]{\mathcal{E}_E[\![ #1 ]\!]}
\newcommand{\XSEM}[1]{\mathcal{E}_X[\![ #1 ]\!]}
\newcommand{\TSEM}[1]{\mathcal{T}\llbracket #1 \rrbracket}
\newcommand{\lam}[1]{\lambda #1.\,}
\newcommand{\of}[0]{{:}}
\newcommand{\by}[0]{{:=}}
\newcommand{\INT}[0]{\mathtt{int}}
\newcommand{\fix}[1]{\mathtt{fix}\,#1.\,}
\newcommand{\fst}[1]{\mathsf{fst}(#1)}
\newcommand{\snd}[1]{\mathsf{snd}(#1)}
\newcommand{\emptyenv}[0]{\mathsf{empty}}
\newcommand{\ext}[3]{\mathsf{extend}(#1,#2,#3)}
\newcommand{\tyext}[1]{\mathsf{tyExtend}(#1)}
\newcommand{\lookup}[2]{\mathsf{lookup}(#1,#2)}
\newcommand{\shift}[3]{\,\uparrow^{#1}_{#2}(#3)}
\newcommand{\app}[0]{\;}
\newcommand{\abs}[1]{\mathsf{thunk}(#1)}
\newcommand{\val}[1]{\overline{#1}}
\newcommand{\CASE}[0]{\textbf{case}\,}
\newcommand{\OF}[0]{\,\textbf{of}\,}
\newcommand{\IF}[0]{\textbf{if}\,}
\newcommand{\THEN}[0]{\,\textbf{then}\,}
\newcommand{\ELSE}[0]{\,\textbf{else}\,}
\newcommand{\return}[0]{\mathsf{return}\;}
\newcommand{\down}[0]{\mathsf{down}\;}
\newcommand{\wrong}[0]{\mathsf{wrong}}
\newcommand{\zero}[0]{\mathsf{zero}}
\newcommand{\SET}[1]{\mathcal{P}(#1)}
\newcommand{\FSET}[1]{\mathcal{P}_f(#1)}
\begin{document}
\frontmatter          % for the preliminaries
\pagestyle{headings}  % switches on printing of running heads
\addtocmark{Revisiting Elementary Semantics} % additional mark in the TOC
%
%
%\tableofcontents
%
%\mainmatter              % start of the contributions
%
\title{Revisiting Elementary Denotational Semantics}
\titlerunning{Revisiting Elementary Semantics}  % abbreviated title (for running head)
%                                     also used for the TOC unless
%                                     \toctitle is used
%
\author{Jeremy G. Siek}%\orcidID{0000-0002-9894-4856}
\authorrunning{Jeremy G. Siek} % abbreviated author list (for running head)
%
%%%% list of authors for the TOC (use if author list has to be modified)
\tocauthor{Jeremy G. Siek}
\institute{Indiana University, Bloomington IN 47401, USA,\\
\email{jsiek@indiana.edu}}

\maketitle              % typeset the title of the contribution

\begin{abstract}
Operational semantics have been enormously successful, in large part
due to its flexibility and simplicity, but they are not compositional.
Denotational semantics, on the other hand, are compositional but the
lattice-theoretic models are complex and difficult to scale to large
languages.  However, there are elementary models of the
$\lambda$-calculus that are much less complex: by Coppo,
Dezani-Ciancaglini, and Salle (1979), Engeler (1981), and Plotkin
(1993).

\vspace{4pt}

This paper takes first steps toward answering the question: can
elementary models be good for the day-to-day work of language
specification, mechanization, and compiler correctness?  The
elementary models in the literature are simple, but they are not as
intuitive as they could be. To remedy this, we create a new model that
represents functions literally as finite graphs. Regarding
mechanization, we give the first machine-checked proof of soundness
and completeness of an elementary model with respect to an operational
semantics. Regarding compiler correctness, we define a polyvariant
inliner for the call-by-value $\lambda$-calculus and prove that its
output is contextually equivalent to its input.  Toward scaling
elementary models to larger languages, we formulate our semantics in a
monadic style, give a semantics for System F with general recursion,
and mechanize the proof of type soundness.

\keywords{denotational semantics, intersection types, set-theoretic models, mechanized meta\-theory}
\end{abstract}

\section{Introduction}
\label{sec:intro}

%% \begin{quote}
%%   \emph{When I think of the number of headaches I have caused people
%%     in Computer Science who have tried to figure out the mathematical
%%     details of the Theory of Domains, I have to cringe.}
%%   ---\citet{Scott:1982aa} \\[1ex]
%% %
%%  \emph{A realisation struck me around then...  operational semantics
%%    was both simple and given by elementary mathematical means. So why
%%    not consider dropping denotational semantics and, once again, take
%%    operational semantics seriously?} --- \citet{Plotkin:2004ab}
%% \end{quote}

This paper revisits elementary models of the
$\lambda$-calculus~\citep{Engeler:1981aa,Plotkin:1993ab,Coppo:1979aa,Barendregt:2013aa}
with an eye towards determining whether they are a suitable choice for
modern programming language semantics. That is, are they good for the
day-to-day work of language specification, mechanization, and compiler
correctness?  The author hypothesizes that the answer is yes because
these models satisfy three important properties.
\begin{description}
\item[compositional] The semantics are defined by structural recursion
  on syntax~\citep{Winskel:1993uq}.
\item[extensional] The semantics specify externally observable
  behavior~\citep{Schmidt:2003aa}.
\item[elementary] The semantics use quite simple mathematics~\citep{Plotkin:1993ab}.
\end{description}

Compositionality enables proof by structural induction on the syntax,
which simplifies proofs of properties such as type soundness and
compiler correctness.  Extensionality is beneficial because,
ultimately, a language specification must be extensional; intensional
semantics require the circuitous step of erasing internal
behavior. The use of only elementary mathematics, that is, mathematics
familiar to undergraduates in computer science, is beneficial because
language specifications should be readable by all computer scientists
and because the size of a language's metatheory depends on the
complexity of the mathematics.

%% It is not obvious how to construct a semantics that is compositional,
%% extensional, and elementary, especially with the higher-order features
%% that appear in modern languages. In this paper we present such a
%% semantics and demonstrate that the approach can handle first-class
%% functions in an untyped setting and first-class parametric
%% polymorphism in a typed setting.

Historically, the two main approaches to specifying programming
languages have been denotational
semantics~\citep{Scott:1970dp,Scott:1971ab} and operational
semantics~\citep{Church:1932aa,McCarthy:1960dz,Landin:1964dk,Plotkin:1981aa,Felleisen:1987ab}.
Denotational semantics are compositional and extensional but the
standard lattice-based models are not elementary.  Their mathematical
complexity is evident in the size of mechanized definitions of the
$\lambda$-calculus: \citet{Benton:2009ab} build a model in 11,000 LOC
and \citet{Dockins:2014aa} in 54,104 LOC (though it is difficult to
determine how many of those LOC is strictly necessary).

Operational semantics are elementary but neither extensional nor
compositional. A mechanized definition of the $\lambda$-calculus using
operational techniques is under 100 LOC.
Small-step semantics are intensional in that the input-output behavior
of a program is a by-product of a sequence of transitions. Big-step
semantics are intensional in that the value of a $\lambda$ abstraction
is a syntactic object, a closure~\citep{Kahn:1987aa}.
The lack of compositionality in operational semantics imposes
significant costs when reasoning about programs: sophisticated
techniques such as logical
relations~\citep{Plotkin:1973fk,Statman:1985aa} and
simulations~\citep{Aczel:1988aa,Milner:1995aa} are often
necessary. For example, the correctness proofs for the CompCert C
compiler made extensive use of simulations but sometimes resorted to
translation validation in cases where verification was too difficult
or expensive~\citep{Tristan:2008aa,Leroy:2009aa}. Likewise, the
logical relations necessary to handle modern languages are daunting in
their complexity~\citep{Ahmed:2009aa,Hur:2011aa}.
%[to do: cite CakeML]

But what if there were denotational semantics that were also
elementary? In fact, in the 1970's and 1980's several groups of
researchers discovered elementary models for the untyped
$\lambda$-calculus.  \citet{Plotkin:1972aa,Plotkin:1993ab} and
\citet{Engeler:1981aa} discovered elementary models based on two
insights:
\begin{enumerate}
\item it suffices to use a finite approximation of a function's graph
  when passing it to another function, and
\item self application can be handled by allowing larger
  approximations to be used when a smaller approximation is expected.
\end{enumerate}
\citet{Coppo:1979aa} discovered type-theoretic models for the
$\lambda$-calculus based on two insights:
\begin{enumerate}
\item that the behavior of $\lambda$ abstractions can be completely
  characterized by intersection types, and
\item self application can be handled using subtyping.
\end{enumerate}
\citet{Plotkin:1993ab} shows how the elementary and type-theoretic
models are closely related. The above pairs of insights are really the
same insights.

\paragraph{Contributions} This paper makes several technical
contributions that begin to answer the question of whether elementary
models are a good choice for programming language specification,
mechanization, and compiler correctness. All of the results in this
paper are mechanized in Isabelle and are available in the Archive of
Formal Proof~\citep{Siek:2017aa}.

\begin{enumerate}
\item To yield a more intuitive model for the CBV $\lambda$-calculus,
  we construct a domain that approximates functions literally by their
  finite graphs (Section~\ref{sec:denot-function}). The mechanization
  is under 100 LOC.

\item We give a type-theoretic version of this model based on
  intersection types (50 LOC) and prove that the two are isomorphic
  (Section~\ref{sec:denot-type-system}).

\item We give the first mechanized proofs of soundness and
  completeness for a elementary model with respect to operational
  semantics. We also mechanize soundness with respect to contextual
  equivalence (Section~\ref{sec:correct}).

\item We show how compositionality can be beneficial by proving
  correctness for a compiler optimization pass that performs inlining
  in under 100 LOC (Section~\ref{sec:optimize}).

\item Toward scaling to more language features, we formulate the
  semantics in a monadic style
  (Section~\ref{sec:non-determinism-monad}) and we define a semantics
  for a language with first-class parametric polymorphism and general
  recursion and mechanize its proof of semantic type soundness
  (Section~\ref{sec:cannot-go-wrong}).

\end{enumerate}

We begin with a review of the three elementary semantics of
\citet{Plotkin:1993ab}, \citet{Engeler:1981aa}, and
\citet{Coppo:1979aa} (Section~\ref{sec:elementary-semantics}).  We
discuss related work in Section~\ref{sec:related-work} and 
conclude in Section~\ref{sec:conclusion}.

\section{Background on Elementary Semantics}
\label{sec:elementary-semantics}

\begin{quote}
  \emph{But $f$ is a \emph{function}; an infinite object.  What does it mean
  to ``compute'' with an ``finite'' argument?  In this case it means
  most simply that $h(f)$ is determined by asking of $f$ finitely many
  questions: $f(m_0), f(m_1), ..., f(m_{k-1})$.} ---\citet{Scott:1993uq}
\end{quote}

\begin{figure}[tbp]
\[
\begin{array}{rcll}
  x & \in & \mathbb{X} & \text{variables} \\
  n & \in & \mathbb{Z} & \text{integers} \\
  \oplus & \in & \{ +, \times, -, \ldots \} & \text{arithmetic operators}\\
  e \in \mathbb{E} & ::= & n \mid e \oplus e \mid x \mid \lam{x} e \mid e \; e
    \mid \IF e \THEN e \ELSE e \quad
    & \text{expressions}
\end{array}
\]
\caption{Syntax of a call-by-value $\lambda$ calculus with integer arithmetic.}
  \label{fig:lambda-calculus}
\end{figure}

We review elementary semantics in the setting of a call-by-value (CBV)
untyped $\lambda$-calculus extended with integer arithmetic. The
syntax of this $\lambda$-calculus is defined in
Figure~\ref{fig:lambda-calculus}. We write $n$ for integers, $e_1
\oplus e_2$ for arithmetic operations, $x$ for variables, $\lam{x} e$
for abstraction, $e_1\app e_2$ for application, and $\IF e_1 \THEN e_2
\ELSE e_3$ for conditionals.

\subsection{Set-Theoretic Models}
\label{sec:set-theoretic-models}

The domains of \citet{Plotkin:1972aa,Plotkin:1993ab} and
\citet{Engeler:1981aa} are $\SET{\mathbb{D}_P}$ and
$\SET{\mathbb{D}_e}$, where $\mathbb{D}_P$ and $\mathbb{D}_E$ and
inductively defined by the following recursive equations.
\[
  \mathbb{D}_P = \mathbb{Z} + \FSET{\mathbb{D}_P} \times \FSET{\mathbb{D}_P}
  \qquad
  \mathbb{D}_E = \mathbb{Z} + \FSET{\mathbb{D}_E} \times \mathbb{D}_E
\]
We let $d$ range over elements of $\mathbb{D}_P$ or $\mathbb{D}_E$,
and $D$ ranges over finite sets of them. For Plotkin, an element
$(D,D') \in \SET{\mathbb{D}_P}$ represents a single input-output entry
in the graph of a function. For first-order functions over integers,
$D$ and $D'$ are just singletons. For higher-order functions, $D$
and $D'$ are finite subsets of a function's graph. It turns out that
finite sets in output position are not necessary.  One of Plotkin's
entries $(D, \{ d'_1,\ldots, d'_n\} )$ is instead represented by
multiple entries $(D, d'_1), \ldots, (D, d'_n)$ in Engeler's model.

The wonderful thing about $\SET{\mathbb{D}_P}$ and
$\SET{\mathbb{D}_e}$ is their simplicity. Their construction does not
require advanced techniques such as inverse
limits~\citep{Scott:1970dp,Wand:1979aa,Smyth:1982aa}. Both
$\mathbb{D}_P$ and $\mathbb{D}_E$ are straightforward to define as
algebraic datatypes in proof assistants such as Isabelle or Coq.  In
Isabelle, the \texttt{fset} library provides finite sets, but one
could also use lists at the cost of a few extra lemmas.

\begin{figure}[tbp]
\begin{align*}
\PSEM{ \lam{x} e }\rho &= 
  \{ (D,D') \mid D' \subseteq \PSEM{e}\rho(x{:=}D)\} \\
\PSEM{e_1\app e_2}\rho &= \bigcup \{ D' \mid \exists D.\,
  (D,D') \in \PSEM{e_1}\rho \land D \subseteq \PSEM{e_2}\rho \} \\[1ex]
\ESEM{ \lam{x} e }\rho &= 
  \{ (D,d') \mid d' \in \ESEM{e}\rho(x{:=}D)\} \\
\ESEM{e_1\app e_2}\rho &= \{ d' \mid \exists D.\,
  (D,d') \in \ESEM{e_1}\rho \land D \subseteq \ESEM{e_2}\rho \} \\[1ex]
\XSEM{x}\rho &= \rho(x) \\
\XSEM{ n }\rho &= \{ n \} \\
\XSEM{ e_1 \oplus e_2 }\rho &= \{  n_1 \oplus n_2 \mid 
   n_1 \in \XSEM{ e_1 }\rho \land n_2 \in \XSEM{ e_2 }\rho \} \\
\XSEM{\IF e_1 \THEN e_2 \ELSE e_3}\rho &=
\left\{ v\, \middle|\,
    \exists n.\, n \in \XSEM{e_1}\rho 
  \begin{array}{l}
   \land\, (n\neq 0 \implies v \in \XSEM{e_2}\rho)\\
   \land\, (n=0 \implies v \in \XSEM{e_3}\rho)
  \end{array}
  \right\}
\end{align*}
\caption{Two elementary semantics for CBV $\lambda$-calculus,
  $\mathcal{E}_P$ using Plotkin's model and $\mathcal{E}_E$ using
  Engeler's. The common parts are parameterized, i.e., $\mathcal{E}_X$
  where $X \in \{P,E\}$.}
\label{fig:plotkin-engeler}
\end{figure}

\citet{Plotkin:1972aa,Plotkin:1993ab} used his domain to give a
semantics to the $\lambda\beta$-calculus whereas
\citet{Engeler:1981aa} used his to give semantics to combinatory logic
(the S and K combinators). In this paper, we are instead concerned
with a CBV $\lambda$-calculus. To make for a clear comparison with our
work, we adapt their semantics to CBV $\lambda$-calculus, defining
$\mathcal{E}_P$ and $\mathcal{E}_E$ in
Figure~\ref{fig:plotkin-engeler}. We conjecture that these two
semantics are equivalent to our own.

Now to explain the two semantics in
Figure~\ref{fig:plotkin-engeler}. 
As usual, we write
% $\emptyset$ for the nowhere-defined function and
$\rho(x{:=}d)$ for the map that sends $x$ to $d$ and any other
variable $y$ to $\rho(y)$.  In $\mathcal{E}_P$, the meaning of an
abstraction $\lam{x} e$ is the set of all input-output entries
$(D,D')$ such that $D'$ is a subset of the meaning of the body $e$ in
a context where $x{:=}D$. Similarly, in $\mathcal{E}_E$, the meaning
of an abstraction $\lam{x} e$ is the set of all input-output entries
$(D,d')$ such that $d'$ is an element in meaning of the $e$ with
$x{:=}D$. Regarding the meaning of function application $(e_1\,e_2)$,
$\mathcal{E}_E$ collects up all of the outputs $d'$ from the entries
$(D,d')$ in the meaning of $e_1$ whenever $D$ an finite approximation
of the argument $e_2$.  A finite approximation is good enough because,
during a terminating call to a higher-order function, only a finite
number of calls will be made to its argument. For a non-terminating
call, the semantics assigns the meaning $\emptyset$. The use of subset
in $D \subseteq \ESEM{e_2}\rho$ is critically important, as it enables
self application and thereby general recursion via the $Y$
combinator\footnote{Really the $Z$ combinator because we are in a
  call-by-value setting}.  The $\mathcal{E}_P$ semantics for function
application is slightly more complex because the outputs $D'$ must be
flattened to produce something in $\SET{\mathbb{D}_P}$ and not
$\SET{\FSET{\mathbb{D}_P}}$.  Note that in both of these semantics,
the environment $\rho$ maps variables to finite sets, which either
encode an integer $n$ with a singleton $\{n\}$ or encode a finite
approximation of a function $\{(D_1,D'),\ldots,(D_n,D'_n)\}$.

\subsection{Type-Theoretic Models}
\label{sec:filter-models}

\citet{Coppo:1979aa} showed that a type system based on intersection
types can characterize the behavior of $\lambda$ terms, in particular,
showing that their type system induced the same equalities as the
$\mathcal{P}(\omega)$ model of \citet{Scott:1976lq}. Their work led to
a long line of research on filter models based on intersection types
for many different
$\lambda$-calculi~\citep{Hindley:1982aa,Coppo:1984aa,Coppo:1987aa,Alessi:2003aa,Alessi:2006aa}. \citet{Barendregt:2013aa}
give a detailed survey of this work.  Here we review an intersection
type system that characterizes CBV
$\lambda$-calculus~\citep{Coppo:1984aa,Rocca:2004aa,Alessi:2006aa}.

Figure~\ref{fig:intersection-type-system} defines the syntax for types
$A,B,C \in \mathbb{T}$, a subtyping relation $A<:B$, and a type system
$\Gamma \vdash e : A$.  We write $\top$ for the ``top'' of all
function types, written $\nu$ in the literature~\citep{Egidi:1992aa}.
The $\lambda$-calculus we study here includes integers and arithmetic,
so we have added a singleton type $n$ for every integer. We define a
filter model $\mathcal{E}_C$ in terms of the type system as
follows. The domain is $\mathcal{P}(\mathbb{T})$.
\[
   \mathcal{E}_C \llbracket e \rrbracket \Gamma =
     \{ A \mid \Gamma \vdash e : A \}
\]
The name \emph{filter} comes from topology and order theory, and
refers to a set that is upward closed and closed under finite
intersection. These two properties are satisfied by fiat in
intersection type systems because of the subsumption and
$\wedge$-introduction rules.

\begin{figure}[tbp]
  Types
  \[
   A,B,C \in \mathbb{T} ::= n \mid A \to B \mid A \wedge B \mid \top
  \]
  Subtyping \hfill\fbox{$A <: B$}
\begin{gather*}
  A \to B <: \top
  \quad
  A <: A
  \quad
  A <: A \wedge A
  \quad
  A \wedge B <: A
  \quad
  A \wedge B <: B
  \\[1ex]
  (A \to B) \wedge (A \to C) <: A \to (B \wedge C)
  \\[1ex]
  \inference{A <: B & B <: C}{A <: C}
  \quad
  \inference{A <: A' & B <: B'}{A \wedge B <: A' \wedge B'}
  \quad
  \inference{A' <: A & B <: B'}{A \to B <: A' \to B'}
\end{gather*}
Typing \hfill\fbox{$\Gamma \vdash e : A$}
\begin{gather*}
\inference{}
          {\Gamma \vdash n : n}
\quad
\inference{\Gamma \vdash e_1 : n_1 & \Gamma \vdash e_2 : n_2}
          {\Gamma \vdash e_1 \oplus e_2 : n_1 \oplus n_2}
\quad
\inference{}{\Gamma \vdash x : \Gamma(x)}
\\[1ex]
\inference{\Gamma,x{:}A \vdash e : B}{\Gamma \vdash \lam{x} e : A \to B}
\quad
\inference{}{\Gamma \vdash \lam{x} e : \top}
\quad
\inference{\Gamma \vdash M : A & \Gamma \vdash M : B}
          {\Gamma \vdash M : A \wedge B}
\\[1ex]
\inference{\Gamma \vdash e_1 : A \to B &
           \Gamma \vdash e_2 : A}
          {\Gamma \vdash e_1 \app e_2 : B}
\quad
\inference{\Gamma \vdash e : A & A <: B}
          {\Gamma \vdash e : B}
\\[1ex]
\inference{\Gamma \vdash e_1 : n & n \neq 0 & \Gamma \vdash e_2 : B}
          {\Gamma \vdash \IF e_1 \THEN e_2 \ELSE e_3 : B}
\quad
\inference{\Gamma \vdash e_1 : n & n = 0 & \Gamma \vdash e_3 : B}
          {\Gamma \vdash \IF e_1 \THEN e_2 \ELSE e_3 : B}
\end{gather*}
\caption{An intersection type system that characterizes
  the CBV $\lambda$-calculus.}
\label{fig:intersection-type-system}
\end{figure}

\citet{Alessi:2003aa} show, for many variations of intersection type
systems, that typing is preserved by both reduction and expansion,
that is
\begin{description}
\item[Preservation under Reduction] If $\Gamma \vdash e : A$ and $e \longrightarrow e'$,
  then $\Gamma \vdash e' : A$.
\item[Preservation under Expansion] If $\Gamma \vdash e' : A$ and $e \longrightarrow e'$, 
  then $\Gamma \vdash e : A$.
\end{description}
While type systems generally preserve types under reduction,
preserving under expansion is unusual and is what enables intersection
type systems to completely characterize the behavior of a program.  In
terms of the filter model, meaning is invariant under reduction: $e
\longrightarrow e'$ implies $\mathcal{E}_C \llbracket e \rrbracket =
\mathcal{E}_C \llbracket e' \rrbracket$.

\section{A Straightforward Elementary Semantics}
\label{sec:denot-function}

We wish to use elementary semantics for the specification of real
programming languages, so it is important that the semantics be as
intuitive as possible. To pick some nits, the placement of finite sets
in the domains of Plotkin and Engeler is unintuitive. For example, one
might naively think that the meaning of $(\lam{x}x+1)$ would include
an input-output entry such as $(3,4)$.  But in Plotkin's model we
instead have $(\{3\},\{4\})$ and in Engeler's model we have
$(\{3\},4)$. However, there is a domain from the semantic subtyping
literature that places the finite sets where one would
expect~\citep{Frisch:2008aa}.
\[
  \mathbb{D} = \mathbb{Z} + \FSET{\mathbb{D} \times \mathbb{D}}
\]
The idea is straightforward: a function is represented by a finite
approximation of its graph.  To our knowledge, this domain has never
been used to give meaning to programs, only to types.

So the domain for our elementary semantics is $\SET{\mathbb{D}}$.  Let
$t$ (for table) range over $\FSET{\mathbb{D} \times \mathbb{D}}$.  We
define the semantics $\CE$ in Figure~\ref{fig:denot-function} to take
an expression and an environment and return a set of elements.  Thanks
to the change in domain, the \emph{environment} $\rho$ is simply a
partial map from variables to elements (not sets of elements).

\begin{figure}[tbp]
\hfill\fbox{$\SEM{e}\rho$}
\begin{align*}
\SEM{ \lam{x} e }\rho &= 
  \{ t \mid \forall (d,d')\in t.\, d' \in \SEM{ e }\rho(x{:=}d) \} \\
\SEM{ e_1\;e_2 }\rho &= \left\{ d \, \middle| 
   \begin{array}{l}
   \exists t d_1 d_1' d_2 .\, t {\in} \SEM{ e_1 }\rho \land d_2 {\in} \SEM{ e_2 }\rho \\
   \land\, (d_1, d_1') \in t \land d_1 \sqsubseteq d_2 
   \land d \sqsubseteq d_1'
   \end{array}
\right\} \\
\SEM{ x }\rho &= \{ d \mid d \sqsubseteq \rho(x) \} \\
\SEM{ n }\rho &= \{ n \} \\
\SEM{ e_1 \oplus e_2 }\rho &= \{  n_1 \oplus n_2 \mid 
   n_1 \in \SEM{ e_1 }\rho \land n_2 \in \SEM{ e_2 }\rho \} \\
\SEM{\IF e_1 \THEN e_2 \ELSE e_3}\rho &=
\left\{ d\, \middle|\,
    \exists n.\, n \in \SEM{e_1}\rho 
  \begin{array}{l}
   \land\, (n\neq 0 \implies d \in \SEM{e_2}\rho)\\
   \land\, (n=0 \implies d \in \SEM{e_3}\rho)
  \end{array}
  \right\}
\end{align*}
\hfill\fbox{$d \sqsubseteq d$}\\[-4ex]
\[
    n \sqsubseteq n
    \qquad
    \inference{t \subseteq t'}{t \sqsubseteq t'}
\]
\vspace{-15pt}
\caption{A new elementary semantics for CBV $\lambda$-calculus.}
\label{fig:denot-function}
\end{figure}

The meaning of a $\lam{x}e$ is the set of all finite graphs $t$ such
that for every input-output entry $(d,d') \in t$, the output element
$d'$ is in the meaning of the body $e$ in the environment where
parameter $x$ is bound to the input element $d$.  The meaning of an
application $(e_1\app e_2)$ is basically given by table lookup.  That
is, if $t \in \SEM{e_1}\rho$, $(d_1,d'_1) \in t$, $d_2 \in
\SEM{e_2}\rho$ and $d_1 = d_2$, then $d'_1 \in \SEM{(e_1\app
  e_2)}\rho$. However, following in the footsteps of the prior
elementary semantics, we accommodate self application by allowing the
argument $d_2$ to be a larger approximation than the input entry
$d_1$. To this end we define an ordering relation $\sqsubseteq$ on
$\mathbb{D}$ in Figure~\ref{fig:denot-function}.  Thus, instead of
$d_1 = d_2$ we require $d_1 \sqsubseteq d_2$.

\subsection{Is this semantics a filter model?}
\label{sec:filter}

We don't yet know, but we have part of the story. Recall that a filter
is a set that is upward closed and closed under finite
intersection. First, our ordering $\sqsubseteq$ goes the opposite way,
so the question is whether the semantics forms an \emph{ideal} (the
dual of filter). Our semantics is downwards closed but we do not know
whether it is closed under finite union for arbitrary programs.
However, it is closed under finite unions for closed syntactic values
(integers and closed $\lambda$ abstractions) which is all that is
required for correctness in a CBV setting (Section~\ref{sec:correct}).

The following definition lifts the information ordering to
environments.
\begin{align*}
X \vdash \rho \sqsubseteq \rho' &\equiv
\forall x.\; x \in X \implies \rho(x) \sqsubseteq \rho'(x)
 \\
 \rho \sqsubseteq \rho' &\equiv
     \forall x.\; \rho(x) \sqsubseteq \rho'(x)
\end{align*}

\begin{proposition}[Downward closed aka. Subsumption]\ 
\begin{enumerate}
\item If $d \in \SEM{e}\rho$, $d' \sqsubseteq d$, 
   and $\text{fv}(e) \vdash \rho \sqsubseteq \rho'$,
   then $d' \in \SEM{e}\rho'$.
\item If $d \in \SEM{e}\rho$, $d' \sqsubseteq d$, 
      then $d' \in \SEM{e}\rho$.
\end{enumerate}
\label{prop:change-env-le}
\end{proposition}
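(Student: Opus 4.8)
The plan is to prove statement (1) by structural induction on $e$ and then obtain statement (2) as the special case $\rho' = \rho$, since $\text{fv}(e) \vdash \rho \sqsubseteq \rho$ holds trivially. Statement (1) is the right invariant to induct on: the $\lambda$ case will force a comparison of two \emph{different} environments extended at the bound variable, so the weaker fixed-environment statement (2) would not supply a usable induction hypothesis. Before the induction I would record three elementary facts. First, $\sqsubseteq$ on $\mathbb{D}$ is reflexive and transitive; both are immediate because $\sqsubseteq$ is just equality on $\mathbb{Z}$ and set inclusion on tables, with no recursive descent into table entries. Second, if $X \vdash \rho \sqsubseteq \rho'$ and $Y \subseteq X$ then $Y \vdash \rho \sqsubseteq \rho'$, so the hypothesis can be restricted to the free variables of any subterm. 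Third, if $\text{fv}(\lam{x}e_0) \vdash \rho \sqsubseteq \rho'$ then $\text{fv}(e_0) \vdash \rho(x{:=}d_a) \sqsubseteq \rho'(x{:=}d_a)$ for every $d_a$: at $x$ both sides equal $d_a$ (reflexivity), and at any other free variable $y$ of $e_0$ we have $y \in \text{fv}(\lam{x}e_0)$ and fall back on the hypothesis.

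For the base cases, $e = n$ forces $d = n$, and $d' \sqsubseteq n$ forces $d' = n$, so $d' \in \SEM{n}\rho' = \{n\}$. For $e = x$ we have $d \sqsubseteq \rho(x)$, and since $x \in \text{fv}(x)$ the hypothesis gives $\rho(x) \sqsubseteq \rho'(x)$; chaining $d' \sqsubseteq d \sqsubseteq \rho(x) \sqsubseteq \rho'(x)$ by transitivity puts $d'$ in $\SEM{x}\rho'$. The arithmetic and conditional cases are similar: the result $d$ is an integer, so $d' \sqsubseteq d$ collapses to $d' = d$, and I rebuild membership under $\rho'$ by applying the induction hypothesis to each subterm with a \emph{reflexive} decrease (e.g. $n_1 \sqsubseteq n_1$) together with the restricted environment hypothesis, reusing the same witness integer $n$ in the conditional so that the same branch fires.

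The two interesting cases are $\lambda$ and application, and the $\lambda$ case is where I expect the only genuine subtlety. For $e = \lam{x}e_0$, membership of $t$ means $d_b \in \SEM{e_0}\rho(x{:=}d_a)$ for every $(d_a,d_b) \in t$, and $t' \sqsubseteq t$ unfolds to $t' \subseteq t$. To show $t' \in \SEM{\lam{x}e_0}\rho'$ I take an arbitrary $(d_a,d_b) \in t' \subseteq t$, obtain $d_b \in \SEM{e_0}\rho(x{:=}d_a)$, and apply the induction hypothesis to $e_0$ with the reflexive decrease $d_b \sqsubseteq d_b$ and the extended environment ordering from the third preliminary fact, yielding $d_b \in \SEM{e_0}\rho'(x{:=}d_a)$. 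The key observation is that the genuine decrease has already been ``spent'' at the level of tables via $t' \subseteq t$, so the body only ever needs reflexivity. For application I reuse the existing witnesses $t, d_1, d_1', d_2$: the induction hypothesis on $e_1$ and $e_2$, again with reflexive decreases $t \sqsubseteq t$ and $d_2 \sqsubseteq d_2$, moves $t$ and $d_2$ into the $\rho'$ semantics; the conditions $(d_1,d_1') \in t$ and $d_1 \sqsubseteq d_2$ are unchanged; and the final clause follows from $d' \sqsubseteq d \sqsubseteq d_1'$ by transitivity. Hence every conjunct of the application comprehension holds under $\rho'$, which closes the induction; statement (2) then follows immediately.
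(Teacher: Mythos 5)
Your overall strategy matches the paper's proof exactly: part (1) by structural induction on $e$, with part (2) obtained as the instance $\rho' = \rho$. Your preliminary facts are all sound, and your treatment of the variable, $\lambda$, and application cases is correct. In particular, your observations that the variable case closes by transitivity through $\rho(x) \sqsubseteq \rho'(x)$, and that the application case closes by $d' \sqsubseteq d \sqsubseteq d_1'$, identify precisely the two places where the paper notes that the definition of the semantics was adjusted to make this proposition go through.

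However, your conditional case contains an error. You claim that for $e = \IF e_1 \THEN e_2 \ELSE e_3$ the result $d$ is an integer, so that $d' \sqsubseteq d$ collapses to $d' = d$. That is false: only the guard is forced to be an integer; the element $d$ is drawn from $\SEM{e_2}\rho$ or $\SEM{e_3}\rho$ and may well be a table (e.g. $\IF 1 \THEN \lam{x}x \ELSE 0$ has tables in its meaning). Consequently this case cannot be discharged with only reflexive decreases. The repair is immediate and stays entirely within your induction: apply the induction hypothesis to $e_1$ with the reflexive decrease $n \sqsubseteq n$ to move the witness $n$ into $\SEM{e_1}\rho'$, and then apply the induction hypothesis to whichever branch the guard selects with the \emph{genuine} decrease $d' \sqsubseteq d$, yielding $d' \in \SEM{e_2}\rho'$ (resp. $d' \in \SEM{e_3}\rho'$); the same witness $n$ then exhibits $d' \in \SEM{e}\rho'$. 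With that one-line fix your argument is complete and coincides with the paper's.
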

\begin{proof}
  The proof of part (1) is by induction on $e$.
  Part (2) follows from part (1).
\end{proof}

The proof of Proposition~\ref{prop:change-env-le} is interesting in
that it influenced our definition of the semantics.  Regarding
variables, the semantics for a variable $x$ includes not just the
element $\rho(x)$ but also all elements below $\rho(x)$ (see
Figure~\ref{fig:denot-function}). If instead we had defined
$\SEM{x}\rho = \rho(x)$, then the case for variables in the proof of
Proposition~\ref{prop:change-env-le} would break. The same can be said
regarding $d \sqsubseteq d'_1$ in function application.

Next we consider finite unions, that is, we define a join operator on
$\mathbb{D}$.
\[
  n \sqcup n = n \qquad\qquad
  t \sqcup t' = t \cup t'
  \hspace{1in}\fbox{$d \sqcup d$}
\]
Of course the join is not always defined, e.g., there is no join of
the integers $0$ and $1$. The join is indeed the least upper bound of
$\sqsubseteq$.

\begin{proposition}[Join is the least upper bound of $\sqsubseteq$]\ 
\begin{enumerate}
\item $v_1 \sqsubseteq v_1 \sqcup v_2$, $v_2 \sqsubseteq v_1 \sqcup v_2$, and
\item If $v_1 \sqsubseteq v_3$ and $v_2 \sqsubseteq v_3$,
   then $v_1 \sqcup v_2 \sqsubseteq v_3$.
\end{enumerate}
\end{proposition}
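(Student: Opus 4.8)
The plan is to prove both parts by a single case analysis on the top-level shape of the elements, namely whether each is an integer or a table. No induction is required: although $\mathbb{D}$ is defined recursively, both $\sqsubseteq$ and $\sqcup$ are defined \emph{flatly} at the top level---tables are compared by plain set inclusion $t \subseteq t'$ and joined by plain union $t \cup t'$, never by descending into the elements of $\mathbb{D} \times \mathbb{D}$. Hence each part reduces to elementary reasoning about equality in $\mathbb{Z}$ and about finite sets.

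For part (1), I first observe that $\sqcup$ is partial, so the claim is vacuous unless $v_1 \sqcup v_2$ is defined; there are exactly two such cases. If $v_1 = v_2 = n$ for some integer $n$, then $v_1 \sqcup v_2 = n$ and the goal $n \sqsubseteq n$ is exactly the first ordering rule (symmetrically for $v_2$). If $v_1 = t_1$ and $v_2 = t_2$ are tables, then $v_1 \sqcup v_2 = t_1 \cup t_2$; since $t_1 \subseteq t_1 \cup t_2$, the second ordering rule gives $t_1 \sqsubseteq t_1 \cup t_2$, i.e. $v_1 \sqsubseteq v_1 \sqcup v_2$, and symmetrically for $v_2$.

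For part (2), the key step is inversion on $\sqsubseteq$: because the only two rules deriving $d \sqsubseteq d'$ are $n \sqsubseteq n$ and the table rule, any two ordered elements must carry the same tag. Thus $v_1 \sqsubseteq v_3$ forces $v_1$ and $v_3$ to be of the same kind, and likewise $v_2 \sqsubseteq v_3$; consequently all three of $v_1, v_2, v_3$ are integers, or all three are tables. In the integer case, both hypotheses collapse to equality, so $v_1 = v_2 = v_3 = n$, whence $v_1 \sqcup v_2 = n \sqsubseteq n = v_3$. In the table case, inversion yields $t_1 \subseteq t_3$ and $t_2 \subseteq t_3$, so $t_1 \cup t_2 \subseteq t_3$, and the table rule delivers $v_1 \sqcup v_2 = t_1 \cup t_2 \sqsubseteq t_3 = v_3$.

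The proof is routine; the only two points needing care are the partiality of $\sqcup$ in part (1)---dispatched by restricting to the two defined cases---and the use of inversion in part (2) to rule out the mismatched-tag configurations. This inversion does double duty: it not only aligns the shapes of $v_1, v_2, v_3$ but also guarantees that $v_1 \sqcup v_2$ is in fact defined whenever both upper-bound hypotheses hold, so the conclusion is never vacuous. I expect no genuine obstacle beyond the bookkeeping of these tag cases.
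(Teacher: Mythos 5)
Your proof is correct. The paper states this proposition without giving any proof (the details are deferred to the Isabelle mechanization), and your flat case analysis on the integer/table tags---with inversion of $\sqsubseteq$ in part (2) to align the shapes of $v_1$, $v_2$, $v_3$ and to guarantee that $v_1 \sqcup v_2$ is defined---is exactly the routine argument the statement calls for; your observation that no induction is needed because both $\sqsubseteq$ and $\sqcup$ act on tables by plain set inclusion and union is also accurate.
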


We write $\val{v}$ for \emph{syntactic values} that are closed:
\[
\val{v} ::= n \mid \lam{x} e
    \qquad \text{where}\; \text{fv}(\lam{x}e) = \emptyset
\]
The following lemma says that, if we have two elements $d_1,d_2$ in
the meaning of a syntactic value $\val{v}$, then their join is too.

\begin{lemma}[Closed Under Join on Values]
  \label{lem:combine-values}
  If $d_1 \in \SEM{\val{v}}\rho$ and $d_2 \in \SEM{\overline{v}}\rho$,
  then $d_1 \sqcup d_2 \in \SEM{\val{v}}\rho$.
\end{lemma}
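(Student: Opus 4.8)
The plan is to proceed by case analysis on the syntactic shape of the value $\val{v}$, which by definition is either an integer $n$ or a closed abstraction $\lam{x}e$. In each case I first observe that $d_1$ and $d_2$ have the shape required for the join $d_1 \sqcup d_2$ to be defined---this is forced by their membership in $\SEM{\val{v}}\rho$---and then check that the join again lies in $\SEM{\val{v}}\rho$.

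For the integer case, $\SEM{n}\rho = \{ n \}$ is a singleton, so the hypotheses $d_1 \in \SEM{n}\rho$ and $d_2 \in \SEM{n}\rho$ immediately give $d_1 = d_2 = n$. Hence $d_1 \sqcup d_2 = n \sqcup n = n \in \{ n \} = \SEM{n}\rho$, and the case is discharged.

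The substance of the lemma is the abstraction case $\val{v} = \lam{x}e$. Here $\SEM{\lam{x}e}\rho$ consists exactly of those tables $t$ such that $d' \in \SEM{e}\rho(x{:=}d)$ for every $(d,d') \in t$; in particular every member of this set is a table, so $d_1$ and $d_2$ are tables $t_1, t_2$ and their join is the union $t_1 \cup t_2$. It then remains to verify the defining condition for $t_1 \cup t_2$: given an arbitrary entry $(d,d') \in t_1 \cup t_2$, it lies in $t_1$ or in $t_2$, and in either case the corresponding hypothesis on $t_1$ or on $t_2$ yields $d' \in \SEM{e}\rho(x{:=}d)$. Thus $t_1 \cup t_2 \in \SEM{\lam{x}e}\rho$, as required.

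I do not anticipate a genuine obstacle: the key structural fact is simply that the membership predicate for the abstraction semantics is a universal quantification over table entries, and such predicates are automatically preserved by union. The only points demanding care are the well-definedness of $\sqcup$ in each branch---handled by noting that both arguments are drawn from the same value's denotation and hence share the integer-or-table shape---and, in the mechanized development, keeping track of the disjoint-sum tags that distinguish $\mathbb{Z}$ from $\FSET{\mathbb{D}\times\mathbb{D}}$ in $\mathbb{D}$. Note that the closedness of $\val{v}$ plays no role in the argument itself; it merely fixes which expressions count as values in the CBV setting, and the same environment $\rho$ is used for $d_1$ and $d_2$ throughout.
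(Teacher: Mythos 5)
Your proof is correct: the paper itself states Lemma~\ref{lem:combine-values} without giving a proof (deferring to the Isabelle mechanization), and your case analysis---the singleton argument for integers, and closure under union of the universally quantified table condition for abstractions---is exactly the expected argument. Your observation that closedness of $\val{v}$ is never used is also accurate; the restriction to closed values in the statement reflects what is needed later in the completeness proof, not a hypothesis the argument depends on.
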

%% \begin{proof}
%% The proof is a straightforward induction on $\val{v}$.
%% \end{proof}

\subsection{Is this semantics fully abstract?}
\label{sec:full-abstraction}

No. Recall that a denotational semantics is \emph{fully abstract} if
contextual equivalence implies denotational equivalence. Many
denotational semantics are not fully abstract, and neither is our
elementary semantics. The upshot is that one cannot use the inequality
of two programs denotations to prove that they actually behave
differently. The counter example is the usual one: parallel-or.  The
parallel-or function, $\mathit{por}$, takes two thunks, and returns
true if either of them returns true.  It cannot be implemented in a
sequential language because inside $\mathit{por}$, either $f$ or $g$
must be called first. If the first thunk goes into an infinite loop,
then the second one will never be evaluated.

However, the domain $\SET{\mathbb{D}}$ contains a semantics for
$\mathit{por}$ as follows. For clarity, we add $unit$ and pairs to the
language and encode true as $1$ and false as $0$.
\[
\mathit{POR} =
\begin{array}{l}
  \{ ((t_1, t_2), 1) \mid  (unit, 1) \in t_1 \lor (unit, 1) \in t_2\} \\
  \cup 
  \{ ((t_1, t_2), 0) \mid (unit, 0) \in t_1 \land (unit, 0) \in t_2\}
\end{array}
\]
This denotation will return true if either argument to $\mathit{POR}$
returns true.

Now to use $\mathit{POR}$ to show that contextual equivalence does not
imply denotational equivalence. We first define two test functions
$T_i$ for $i \in \{ 0,1 \}$ as follows. Let $\Omega$ be the divergent
combinator $(\lam{x} x x) \app (\lam{x} x x)$.
\[
T_i = \lam{f}
\begin{array}[t]{l}
\IF f \app (\lam{x}1, \lam{x}\Omega) \THEN \\
\quad
\begin{array}{l}
  \IF f \app (\lam{x}\Omega, \lam{x}1) \THEN \\
    \quad \IF f \app (\lam{x}0, \lam{x}0) \THEN
      \Omega
    \ELSE i \\
  \ELSE \Omega
\end{array}
\\
\ELSE \Omega
\end{array}
\]
We have that $T_0 \simeq T_1$, but $\SEM{T_0} \neq \SEM{T_1}$. To see
why $T_0 \simeq T_1$, consider the possibilities for the input. They
could be given a function that 1) always diverges, 2) forces the first
thunk, 3) forces the second thunk, or 4) forces neither and always
returns the same thing which could be a) zero, b) a non-zero integer,
or c) a function. In case 1), both $T_0$ and $T_1$ diverge in the
first call to $f$. In case 2), depending on the result of the first
call to $f$, both $T_0$ and $T_1$ either get stuck, take the $\ELSE$
branch of the first $\IF$ and diverge, or take the $\THEN$ branch and
diverge in the second call to $f$. In case 3), both $T_0$ and $T_1$
diverge on the first call to $f$. In case 4a) both $T_0$ and $T_1$
take the $\ELSE$ branch of the first $\IF$ and diverge.  In case 4b)
both $T_0$ and $T_1$ take the $\THEN$ branch of all three $\IF$'s and
diverge. In case 4c), both $T_0$ and $T_1$ get stuck after the first
call to $f$.
So we have $T_0 \simeq T_1$, but it remains to show $\SEM{T_0} \neq
\SEM{T_1}$. We have that $(\mathit{POR}, 0) \in \SEM{T_0}$ but
$(\mathit{POR}, 0) \not\in \SEM{T_1}$.

We note that \citet{Rocca:2004aa} proved full abstraction for a filter
model for the CBV $\lambda$-calculus at the cost of some added
complexity.

%% We hope to borrow ideas from
%% that work to obtain a fully abstraction domain.

%% \paragraph{Lambda's as Relations vs. Functions}

%% We have defined the meaning of $\lambda$ abstraction as finite
%% relations and not finite partial functions. We conjecture that using
%% functions would also work, but it would significantly complicate the
%% join operator that plays an important role in the proof of
%% completeness wrt. operational semantics
%% (Section~\ref{sec:complete-wrt-op-sem}). Using relations, the join
%% operator is simply set union. To see how choosing using relations
%% impacts the semantics, consider the program $\lam{y} \lam{x} x$ that
%% ignores its first argument and returns the identity function.  As we
%% have seen above, there are many different tables in the meaning of
%% $\lam{x} x$. For example, $\emptyset$, $\{(0,0)\}$, and
%% $\{(8,8),(5,5)\}$. So the above program can produce tables that map
%% the same input, say $1$, to several different outputs.
%% \[
%%   \emptyset \vdash \lam{y} \lam{x} x \Rightarrow
%%   \{ (1, \emptyset), \; (1, \{(0,0)\}), \; (1, \{(8,8),(5,5)\}) \}
%% \]
%% We conjecture that the tables in the meaning of an expression have the
%% property of mapping consistent inputs to consistent outputs, for some
%% appropriate definition of consistency.

\subsection{Example of a Recursive Function}
\label{sec:factorial}

To provide a concrete example of our semantics, we show the semantics
of the factorial function, implemented using the $Z$ combinator (the
$Y$ combinator for strict languages).  The main idea for how our
semantics gives meaning to recursive functions is that of a Matryoshka
doll: it nests ever-smaller versions of it's graph inside of them.

Recall the $Z$ combinator:
\[
  M  \equiv \lam{x} f \app (\lam{v} (x\app x) \app v) \qquad
  Z  \equiv \lam{f} M \app M 
\]
The factorial function is defined as follows, with a parameter $r$ for
calling itself recursively. We give names ($F$ and $H$) to the
$\lambda$ abstractions because we shall define tables for each of
them.
\[
  F  \equiv \lam{n} \IF n=0 \THEN 1\ELSE n \times r \app (n-1) \qquad
  H \equiv \lam{r} F \qquad
  \mathit{fact} \equiv Z\app H
\]
We begin with the tables for $F$: we define a function $F_t$ that
gives the table for just one input $n$; it simply maps $n$ to $n$
factorial.
\[
  F_t(n) = \{ (n,n!) \}
\]
The tables for $H$ map a factorial table for $n-1$ to a table for $n$.
We invite the reader to check that
$H_t(n) \in \SEM{H}\emptyset$ for any $n$.
\[
H_t(n) = \{ (\emptyset, F_t(0)), (F_t(0), F_t(1)), \ldots, (F_t(n-1), F_t(n))\}
\]
Next we come to the most important part: describing the tables for
$M$. Recall that $M$ is applied to itself; this is where the analogy
to Matryoshka dolls comes in.  We define $M_t$ by recursion on
$n$. Each $M_t(n)$ extends the smaller version of itself, $M_t(n-1)$,
with one more entry that maps the smaller version to the table for
factorial of $n-1$.
\begin{align*}
  M_t(0) &= \emptyset \\
  M_t(n) &= M_t(n-1) \cup \{ (M_t(n-1), F_t(n-1)) \}
\end{align*}
The tables $M_t$ are meanings for $M$ because $M_t(n+1) \in
\SEM{M}(f{:=}H_t(k))$ for any $n \leq k$.  The application of $M$ to
itself is OK, $F_t(n) \in \SEM{M\app M}(f{:=}H_t(k))$, because 
$(M_t(n),F_t(n)) \in M_t(n+1)$, $M_t(n)\sqsubseteq M_t(n+1)$, and
$F_t(n) \sqsubseteq F_t(n)$.

To finish things up, the tables for $Z$ map the $H_t$'s to the
factorial tables.
\[
  Z_t(n) = \{ (H_t(n), F_t(n)) \}
\]
Then we have $Z_t(n) \in \SEM{Z}\emptyset$ for all $n$.  So $F_t(n)
\in \SEM{Z\app H}\emptyset$ for any $n$ so we conclude that $n! \in
\SEM{\mathit{fact}\,n}\emptyset$.

\section{The Elementary Semantics as a Type System}
\label{sec:denot-type-system}

In this section we present a type system based on intersection types,
similar to the one in Section~\ref{sec:filter-models}, and mechanize
the proof that it is isomorphic to our elementary semantics.
%
%% The new elementary semantics can be viewed from a rather different
%% angle: as a type system with intersection and singleton integer types.
%% It may sound impossible to have a semantics that is equivalent to a
%% type system, but this type system is unusual in its precision.  It is
%% so precise that it is undecidable, which in some sense makes it a bad
%% type system.  It also may seem strange to be giving a type system for
%% the untyped $\lambda$-calculus. But this type system comes from the
%% Curry school of super-imposing type systems on untyped
%% languages~\citep{Curry:1958cr}.  Nevertheless, for readers who are
%% most comfortable with the notation of type systems, the semantics as
%% type-system presentation can be helpful.  Also, this connection helps
%% in the proof of correctness (Section~\ref{sec:complete-wrt-op-sem})
%% where we adapt results from the literature on intersection types.
%
The grammar of types is the following and consists of two
non-terminals, one for types of functions and one for types in
general.
\begin{align*}
  F,G,H  ::= & A \to B \mid F \wedge G \mid \top & \text{types of functions} \\
  A,B,C \in \mathbb{T} ::=& n \mid F   & \text{types}
\end{align*}
Again $n$ is a  singleton integer type.  The intersection
type $G \wedge H$ is for functions that have type $F$ and $G$.  All
functions have type $\top$.  The use of two non-terminals enables the
restriction of the intersection types to not include singleton integer
types, which would either be trivial (e.g. $1 \wedge 1$) or garbage
(e.g. $0 \wedge 1$).

We define subtyping $<:$ and type equivalence $\approx$ in
Figure~\ref{fig:subtyping} but with fewer rules than in
Figure~\ref{fig:intersection-type-system}.  We omit the following
rules, i.e., the subtyping rule for functions and for distributing
intersections through functions, because they were not needed for the
isomorphism with our elementary semantics, and therefore not needed in
the proofs of correctness (Section~\ref{sec:correct}).
\[
\inference{A' <: A & B <: B'}
          {A \to B <: A' \to B'}
\qquad
\inference{}{(C\to A) \wedge (C \to B) <: C \to (A \wedge B)}
\]

\begin{figure}[tbp]
\hfill\fbox{$A <: B$}\;\fbox{$F <: G$}
\begin{gather*}
  \inference{}{A <: A}
  \quad
  \inference{A <: B & B <: C}{A <: C}
  \quad
  \inference{A \approx A' & B \approx B'}{A \to B <: A' \to B'}
   \\[2ex]
    \inference{H <: F & H <: G}
              {H <: F \wedge G}
    \quad
    \inference{}
              {F \wedge G <: F}
    \quad
    \inference{}
              {F \wedge G <: G}
   \quad
   \inference{}{A \to B <: \top}
\end{gather*}

\begin{align*}
  A \approx B &\equiv\; A <: B \;\text{ and }\; B <: A
\end{align*}
\caption{Our subtyping relation on intersection types.}
\label{fig:subtyping}
\end{figure}

The set of types $\mathbb{T}$ is isomorphic to the set $\mathbb{D}$
defined in Section~\ref{sec:denot-function}. The singleton types are
isomorphic to integers. The function, intersection, and $\top$ types
taken together are isomorphic to function tables. Each entry in a
table corresponds to a function type. The following two functions,
$\mathsf{typof}$ and $\mathsf{eltof}$, witness the
isomorphism. Strictly speaking, the isomorphism is between
$\mathbb{D}$ and $\mathbb{T}/{\approx}$, so elements of $\mathbb{D}$
are a canonical form for types.
\begin{center}
\begin{minipage}[t]{0.45\textwidth}
\begin{align*}
\mathsf{typof} &: \mathbb{D} \to \mathbb{T} \\
\mathsf{typof}(n) &= n \\
\mathsf{typof}(t) &= \bigwedge_{(d,d')\in t} \mathsf{typof}(d){\to}\mathsf{typof}(d')
\end{align*}
\end{minipage}
\;
\begin{minipage}[t]{0.45\textwidth}
\begin{align*}
\mathsf{eltof} &: \mathbb{T} \to \mathbb{D} \\
\mathsf{eltof}(n) &= \{ n \} \\
\mathsf{eltof}(F) &= \mathsf{tabof}(F) \\[1ex]
\mathsf{tabof}(A \to B) &= \{ (\mathsf{eltof}(A), \mathsf{eltof}(B)) \} \\
\mathsf{tabof}(A \wedge B) &= \mathsf{eltof}(A) \cup \mathsf{eltof}(B) \\
\mathsf{tabof}(\top) &= \emptyset
\end{align*}
\end{minipage} 
\end{center}

The subtyping relation of Figure~\ref{fig:subtyping} is the inverse of
the $\sqsubseteq$ ordering on $\mathbb{D}$.

\begin{proposition}[Subtyping is inverse of $\sqsubseteq$ and is related to $\in$]\ 
\begin{enumerate}
\item If $A <: B$, then $\mathsf{eltof}(B) \sqsubseteq \mathsf{eltof}(A)$.
\item If $d \sqsubseteq d'$, then $\mathsf{typof}(d') <: \mathsf{typof}(d)$.
\item If $(d,d') \in t$, then $\mathsf{typof}(t) <: \mathsf{typof}(d) \to \mathsf{typof}(d')$.
\end{enumerate}
\end{proposition}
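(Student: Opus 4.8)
The plan is to treat the three parts largely separately: Part~1 is an induction on the derivation of $A <: B$, whereas Parts~2 and~3 reduce to an analysis of the iterated intersection $\bigwedge$. Before starting, I would record three elementary order-theoretic facts about $\sqsubseteq$, all immediate from its definition (on tables $\sqsubseteq$ is just $\subseteq$, and on integers it is equality): that $\sqsubseteq$ is reflexive, transitive, and \emph{antisymmetric}, i.e.\ $d \sqsubseteq d'$ and $d' \sqsubseteq d$ imply $d = d'$. Antisymmetry is the one I will lean on hardest.

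For Part~1 I would induct on $A <: B$ and establish $\mathsf{eltof}(B) \sqsubseteq \mathsf{eltof}(A)$ in each case. The reflexivity and transitivity rules of $<:$ map directly onto reflexivity and transitivity of $\sqsubseteq$. The three intersection rules become set-theoretic facts about $\cup$: $F \wedge G <: F$ and $F \wedge G <: G$ hold because $\mathsf{eltof}(F) \subseteq \mathsf{eltof}(F) \cup \mathsf{eltof}(G)$, while the rule deriving $H <: F \wedge G$ holds because $\cup$ is the least upper bound of $\subseteq$ (the join-is-l.u.b.\ property established earlier). The rule $A \to B <: \top$ holds because $\mathsf{eltof}(\top) = \emptyset$ sits below everything.

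The interesting case, and the one I expect to be the main obstacle, is the arrow rule $A \to B <: A' \to B'$ from $A \approx A'$ and $B \approx B'$. Here $\mathsf{eltof}$ sends both sides to \emph{singleton} tables, so $\mathsf{eltof}(A' \to B') \sqsubseteq \mathsf{eltof}(A \to B)$ unfolds to an inclusion between two singletons, which forces the two pairs to be equal. To obtain that equality I would apply the induction hypothesis to \emph{both} directions of $A \approx A'$ (and of $B \approx B'$), getting $\mathsf{eltof}(A') \sqsubseteq \mathsf{eltof}(A)$ and $\mathsf{eltof}(A) \sqsubseteq \mathsf{eltof}(A')$, and then invoke antisymmetry to conclude $\mathsf{eltof}(A) = \mathsf{eltof}(A')$ (and likewise for $B$). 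This is precisely why the type system adopts the equivalence-based arrow rule rather than the contravariant/covariant one: the isomorphism demands equality of element representations, not merely an ordering between them.

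For Parts~2 and~3 the workhorse is a projection lemma about iterated intersection: if $t \subseteq t'$ then $\mathsf{typof}(t') <: \mathsf{typof}(t)$. I would prove it by observing that every conjunct of $\mathsf{typof}(t)$ occurs among the conjuncts of $\mathsf{typof}(t')$, projecting each one out with repeated uses of $F \wedge G <: F$ and $F \wedge G <: G$, and recombining them via $H <: F \wedge G$; the empty and singleton cases of $\bigwedge$ (yielding $\top$ and a bare arrow) must be handled separately, as must the well-definedness of $\bigwedge$ up to $\approx$ under associativity, commutativity, and idempotence of $\wedge$. Part~2 then follows by cases on $d \sqsubseteq d'$: the integer case $n \sqsubseteq n$ is reflexivity of $<:$, and the table case $t \sqsubseteq t'$ (that is, $t \subseteq t'$) is exactly the projection lemma. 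Part~3 is the instance of the lemma with $\{(d,d')\} \subseteq t$, since $\mathsf{typof}(\{(d,d')\}) = \mathsf{typof}(d) \to \mathsf{typof}(d')$.
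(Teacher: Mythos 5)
Your proposal is correct. The paper states this proposition without any accompanying proof (the argument exists only in the Isabelle mechanization), so there is no textual proof to diverge from; your route --- induction on the subtyping derivation for part 1, with antisymmetry of $\sqsubseteq$ resolving the equivalence-premised arrow rule into the forced equality of singleton tables, and a projection/recombination lemma on iterated intersections yielding parts 2 and 3 --- is the natural argument, and you correctly identify the genuine subtleties (the arrow case, the empty-intersection case requiring $F <: \top$, and the well-definedness of $\bigwedge$ over a set of entries), which is also consistent with the paper's own remark that the contravariant arrow rule was omitted precisely because the isomorphism does not need it.
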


\begin{proposition}[Types and Elements are Isomorphic]\ \\
$\mathsf{eltof}(\mathsf{typof}(d)) = d$ and
$\mathsf{typof}(\mathsf{eltof}(A)) \approx A$
\end{proposition}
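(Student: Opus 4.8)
The plan is to prove the two equations by separate inductions: the first, $\mathsf{eltof}(\mathsf{typof}(d)) = d$, by structural induction on the element $d \in \mathbb{D}$, and the second, $\mathsf{typof}(\mathsf{eltof}(A)) \approx A$, by induction on the type $A \in \mathbb{T}$. Only the second will require reasoning modulo $\approx$; the first is an on-the-nose equality. I first record the basic facts about $\approx$ that both parts use freely: that it is an equivalence (reflexivity from $A <: A$, transitivity from the transitivity rule, symmetry by definition) and that it is a congruence for both $\to$ and $\wedge$, all derivable from the rules of Figure~\ref{fig:subtyping}. I also note the easy sublemma that every function type satisfies $F <: \top$ (induction on $F$: the arrow case is the rule $A \to B <: \top$, the intersection case goes through $F \wedge G <: F$, and $\top <: \top$ by reflexivity), which gives $\top \wedge F \approx F$ for every function type $F$.

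For the first equation, the integer case is immediate since $\mathsf{typof}$ and $\mathsf{eltof}$ act as identity on integers. In the table case $d = t$, I unfold $\mathsf{typof}(t) = \bigwedge_{(d_1,d_2)\in t}\mathsf{typof}(d_1) \to \mathsf{typof}(d_2)$ and then apply $\mathsf{eltof} = \mathsf{tabof}$. Because $\mathsf{tabof}(A \wedge B) = \mathsf{eltof}(A) \cup \mathsf{eltof}(B)$ and $\mathsf{tabof}(\top) = \emptyset$, the map $\mathsf{tabof}$ turns the iterated intersection into the iterated union $\bigcup_{(d_1,d_2)\in t}\{(\mathsf{eltof}(\mathsf{typof}(d_1)),\mathsf{eltof}(\mathsf{typof}(d_2)))\}$. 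The induction hypothesis applies to $d_1$ and $d_2$, which are structurally smaller components contained in $t$, collapsing each singleton to $\{(d_1,d_2)\}$; the union is then exactly $t$. No $\approx$-reasoning is needed here, since $\mathsf{tabof}$ rebuilds the table by ordinary set union.

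For the second equation, induction on $A$: the integer and $\top$ cases are immediate ($\mathsf{eltof}(\top)=\emptyset$ and $\mathsf{typof}(\emptyset)=\top$), and the arrow case $A = B \to C$ reduces, since $\mathsf{eltof}(B \to C)$ is the singleton table $\{(\mathsf{eltof}(B),\mathsf{eltof}(C))\}$, to $\mathsf{typof}(\mathsf{eltof}(B)) \to \mathsf{typof}(\mathsf{eltof}(C)) \approx B \to C$ by the induction hypothesis and $\to$-congruence. The interesting case is $A = G \wedge H$, where $\mathsf{eltof}(G \wedge H) = \mathsf{eltof}(G) \cup \mathsf{eltof}(H)$. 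Here I invoke the key auxiliary lemma that for all finite tables $t_1, t_2$, $\mathsf{typof}(t_1 \cup t_2) \approx \mathsf{typof}(t_1) \wedge \mathsf{typof}(t_2)$; combined with the induction hypothesis and $\wedge$-congruence, this gives $\mathsf{typof}(\mathsf{eltof}(G)) \wedge \mathsf{typof}(\mathsf{eltof}(H)) \approx G \wedge H$. I expect this auxiliary lemma to be the main obstacle. Proved by induction on $t_2$, its base case needs $\top$ as a unit ($\top \wedge F \approx F$, from the sublemma above) and its step case needs idempotency $F \approx F \wedge F$ to absorb entries already present in $t_1$. What it really forces us to establish is that $\wedge$ modulo $\approx$ is a bounded semilattice — associative, commutative, idempotent, with unit $\top$ — so that the iterated intersection $\bigwedge_{p \in t}$ is well-defined independently of how the finite set $t$ is enumerated. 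In the mechanization this is exactly the point where one must show that the fold defining $\mathsf{typof}$ over an \texttt{fset} respects $\approx$ and commutes with union; everything surrounding it is routine structural induction.
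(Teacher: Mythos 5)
Your proposal is correct: the two separate inductions (structural induction on $d$ for the exact equality, induction on $A$ for the $\approx$-equation), the derived congruence and ACI-with-unit-$\top$ facts for $\wedge$ modulo $\approx$, and the auxiliary lemma $\mathsf{typof}(t_1 \cup t_2) \approx \mathsf{typof}(t_1) \wedge \mathsf{typof}(t_2)$ are exactly what the argument needs, and you correctly identify that the real content is that $\bigwedge_{(d,d')\in t}$ is well-defined over a finite set only up to $\approx$. The paper states this proposition without any written proof (deferring to the Isabelle mechanization), and your decomposition is the natural one that such a mechanization would follow, so there is nothing substantive to contrast.
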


Figure~\ref{fig:denot-type-system} defines our elementary semantics as
a type system. The type system is the same as the one in
Figure~\ref{fig:intersection-type-system} except that it replaces
$\wedge$-introduction:
\[
  \inference{\Gamma \vdash e : A & \Gamma \vdash e : B}
            {\Gamma \vdash e : A \wedge B}
\]
with a more specialized rule that requires $e$ to be a $\lambda$
abstraction.

%% As a result, this semantics is a filter model for
%% syntactic values, which in a call-by-value setting is all we need.  We
%% do not know whether it is a filter model for programs.

\begin{figure}[tbp]
\hfill\fbox{$\Gamma \vdash e : A$}
\begin{gather*}
%% \inference{}
%%           {\Gamma \vdash n : n}
%% \quad
%% \inference{\Gamma \vdash e_1 : n_1 & \Gamma \vdash e_2 : n_2}
%%           {\Gamma \vdash e_1 \oplus e_2 : n_1 \oplus n_2}
%% \quad
%% \inference{}{\Gamma \vdash x : \Gamma(x)}
%% \\[2ex]
%% \inference{\Gamma,x{:}A \vdash e : B}{\Gamma \vdash \lam{x} e : A \to B}
%% \quad
%% \inference{}{\Gamma \vdash \lam{x} e : \top}
%% \quad
\cdots \qquad
\inference{\Gamma \vdash \lam{x} e : A & \Gamma \vdash \lam{x} e : B}
          {\Gamma \vdash \lam{x} e : A \wedge B}
%% \\[2ex]
%% \inference{\Gamma \vdash e_1 : A \to B &
%%            \Gamma \vdash e_2 : A}
%%           {\Gamma \vdash e_1 \app e_2 : B}
%% \quad
%% \inference{\Gamma \vdash e : A & A <: B}
%%           {\Gamma \vdash e : B}
%% \\[2ex]
%% \inference{\Gamma \vdash e_1 : n & n \neq 0 & \Gamma \vdash e_2 : B}
%%           {\Gamma \vdash \IF e_1 \THEN e_2 \ELSE e_3 : B}
%% \quad
%% \inference{\Gamma \vdash e_1 : n & n = 0 & \Gamma \vdash e_3 : B}
%%           {\Gamma \vdash \IF e_1 \THEN e_2 \ELSE e_3 : B}
\end{gather*}
\caption{The new elementary semantics for CBV $\lambda$-calculus as
  a type system.}
\label{fig:denot-type-system}
\end{figure}

%% To see that the type system in Figure~\ref{fig:denot-type-system} is
%% equivalent to the semantics in Figure~\ref{fig:denot-relation} is
%% relatively straightforward. First, we remove the subsumption rule and
%% sprinkle uses of subtyping throughout the other
%% rules~\citep{Pierce:2002hj} to obtain an equivalent type system, shown
%% below.
%% \begin{gather*}
%% \inference{}
%%           {\Gamma \vdash n : n}
%% \quad
%% \inference{\Gamma \vdash e_1 : n_1 & \Gamma \vdash e_2 : n_2}
%%           {\Gamma \vdash e_1 \oplus e_2 : n_1 \oplus n_2}
%% \quad
%% \inference{\Gamma(x) <: A}{\Gamma \vdash x : A}
%% \\[2ex]
%% \inference{\Gamma,x{:}A \vdash e : B}
%%           {\Gamma \vdash \lam{x} e : A \to B}
%% \quad
%% \inference{}{\Gamma \vdash \lam{x} e : \top}
%% \quad
%% \inference{\Gamma \vdash \lam{x} e : A & \Gamma \vdash \lam{x} e : B}
%%           {\Gamma \vdash \lam{x} e : A \wedge B}
%% \\[2ex]
%% \inference{\Gamma \vdash e_1 : C &
%%            \Gamma \vdash e_2 : A & C <: A' \to B & A <: A' & B <: B'}
%%           {\Gamma \vdash e_1 \app e_2 : B'}
%% \end{gather*}

%% The three rules for $\lambda$ can be collapsed
%% into a single rule:
%% \[
%%   \inference{\forall A B.\, \text{if } C <: A\to B \text{ then }
%%              \Gamma,x{:}A \vdash e : B}
%%             {\Gamma \vdash \lam{x}e : C}
%% \]
%% Finally, we apply the isomorphisms between types and values, and
%% between $<:$ and $\sqsubseteq$ to obtain the following theorem.

\begin{theorem}[Equivalence of type system and elementary semantics]\ 
\begin{enumerate}
\item $d \in \SEM{e}\rho$ implies
  $\mathsf{typof}(\rho) \vdash e : \mathsf{typof}(d)$ and
\item $\Gamma \vdash e : A$ implies
  $\mathsf{eltof}(A) \in \SEM{e}\mathsf{eltof}(\Gamma)$.
\end{enumerate}
where $\mathsf{typof}(\rho)(x) = \mathsf{typof}(\rho(x))$
and $\mathsf{eltof}(\Gamma)(x) = \mathsf{eltof}(\Gamma(x))$.
\end{theorem}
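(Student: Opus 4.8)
The plan is to prove the two inclusions separately, matching each to the natural induction principle of its hypothesis: part~(1) by structural induction on the expression $e$, and part~(2) by rule induction on the derivation $\Gamma \vdash e : A$. Both directions lean on the proposition that subtyping is the inverse of $\sqsubseteq$ (and its relation to $\in$), together with the isomorphism $\mathsf{eltof}(\mathsf{typof}(d)) = d$, which lets us pass freely between elements and types. A bookkeeping fact I would record up front is that the two translations commute with environment extension, i.e. $\mathsf{typof}(\rho(x{:=}d)) = \mathsf{typof}(\rho)(x{:=}\mathsf{typof}(d))$ and dually for $\mathsf{eltof}$, so that the binder cases line up.

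For part~(1), the integer and arithmetic cases are immediate from the typing rules for $n$ and $\oplus$, using $\mathsf{typof}(n) = n$. For a variable $x$ we have $d \sqsubseteq \rho(x)$; the rule for variables gives $\mathsf{typof}(\rho) \vdash x : \mathsf{typof}(\rho(x))$, and since $d \sqsubseteq \rho(x)$ yields $\mathsf{typof}(\rho(x)) <: \mathsf{typof}(d)$ by part~(2) of the subtyping-inverse proposition, subsumption closes the case. Application is the analogous pattern: from $t \in \SEM{e_1}\rho$, $(d_1,d_1') \in t$, $d_1 \sqsubseteq d_2$, $d_2 \in \SEM{e_2}\rho$, and $d \sqsubseteq d_1'$, the two induction hypotheses plus parts~(3) and~(2) of that proposition give $\mathsf{typof}(\rho) \vdash e_1 : \mathsf{typof}(d_1){\to}\mathsf{typof}(d_1')$ and $\mathsf{typof}(\rho) \vdash e_2 : \mathsf{typof}(d_1)$ after subsumption, and a final subsumption along $\mathsf{typof}(d_1') <: \mathsf{typof}(d)$ finishes it. The conditional case splits on the witness $n$ and uses the matching conditional rule.

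The crux of part~(1) is the abstraction case. Here $d$ is a table $t$ with $d' \in \SEM{e}\rho(x{:=}d)$ for every $(d,d') \in t$, and the target $\mathsf{typof}(t) = \bigwedge_{(d,d')\in t} \mathsf{typof}(d){\to}\mathsf{typof}(d')$ is an intersection. I would dispatch this by an inner induction on the finite table $t$: the empty table gives $\top$ via the corresponding rule; a single entry $(d,d')$ gives a single arrow type through the abstraction rule once the outer induction hypothesis on $e$ supplies $\mathsf{typof}(\rho)(x{:=}\mathsf{typof}(d)) \vdash e : \mathsf{typof}(d')$; and a larger table is split and recombined using the specialized $\wedge$-introduction rule, which applies precisely because the subject is the abstraction $\lam{x}e$. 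This is the step I expect to be the main obstacle, both because of the nested induction and because it is exactly where the restriction of $\wedge$-introduction to abstractions must be shown to suffice.

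For part~(2), rule induction mirrors the semantic clauses. The rules for $n$, $\oplus$, the conditional, and the variable rule (using reflexivity of $\sqsubseteq$ in $\SEM{x}$) are direct. The abstraction and application rules both exploit $\mathsf{eltof}(A{\to}B) = \{(\mathsf{eltof}(A),\mathsf{eltof}(B))\}$: for abstraction the singleton table satisfies the defining condition of $\SEM{\lam{x}e}$ by the induction hypothesis, and for application one instantiates the existential witnesses of $\SEM{e_1\app e_2}$ with that single entry, discharging the two ordering side-conditions by reflexivity of $\sqsubseteq$. Subsumption uses part~(1) of the subtyping-inverse proposition ($A <: B$ gives $\mathsf{eltof}(B) \sqsubseteq \mathsf{eltof}(A)$) followed by downward closure, Proposition~\ref{prop:change-env-le}(2). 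The remaining case, $\wedge$-introduction on an abstraction, is where $\mathsf{eltof}(A \wedge B) = \mathsf{eltof}(A) \cup \mathsf{eltof}(B)$ forces us to show that $\SEM{\lam{x}e}\rho$ is closed under union of tables; this holds because each entry of the union lies in one of the two tables, so the defining condition transfers. This union-closure is the part~(2) analogue of the obstacle above, and it is precisely why the type system restricts $\wedge$-introduction to abstractions rather than allowing it for arbitrary expressions.
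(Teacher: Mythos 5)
Your proposal is correct and takes essentially the approach the paper intends: the paper gives no in-text proof of this theorem (deferring to its Isabelle mechanization), but the two propositions it states immediately beforehand --- subtyping as the inverse of $\sqsubseteq$ (and its relation to $\in$) and the $\mathsf{typof}$/$\mathsf{eltof}$ isomorphism --- are exactly the ingredients you deploy, in the natural structural/rule inductions. Your handling of the two crux cases (the inner induction on the finite table for $\lambda$ in part~(1), and union-closure of $\SEM{\lam{x}e}\rho$ justifying the specialized $\wedge$-introduction rule in part~(2)) is exactly right.
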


\section{Mechanized Correctness of Elementary Semantics}
\label{sec:correct}

We prove that our elementary semantics for CBV $\lambda$-calculus is
equivalent to the standard operational semantics on programs.
Section~\ref{sec:sound-wrt-op-sem} proves one direction of the
equivalence and Section~\ref{sec:complete-wrt-op-sem} proves the
other. Then to justify the use of the elementary semantics in compiler
optimizations, we prove in
Section~\ref{sec:sound-wrt-contextual-equiv} that it is sound with
respect to contextual equivalence.

\subsection{Sound with respect to operational semantics}
\label{sec:sound-wrt-op-sem}

In this section we prove that the elementary semantics is sound with
respect to the relational semantics of \citet{Kahn:1987aa}, written
$\rho \vdash e \Rightarrow v$. For each lemma and theorem we give a
proof sketch that includes which other lemmas or theorems were
needed. For the full details we refer the reader to the Isabelle
mechanization.

We relate $\mathbb{D}$ to sets of syntactic values $\mathbb{V}$ of
\citet{Kahn:1987aa} with the following logical relation. Logical
relations are usually type-indexed, not element-indexed, but our
domain elements are isomorphic to types.
\begin{align*}
\mathcal{G} & : \mathbb{D} \to \SET{\mathbb{V}} \\
\mathcal{G}(n) &= \{ n \} \\
\mathcal{G}(t) &= \left\{ \langle \lam{x}e,\varrho\rangle \middle|
   \begin{array}{l}
   \forall (d_1,d_2) {\in} t. \forall v_1.\, v_1{\in} \mathcal{G}(d_1) \implies \\
   \exists v_2.\, \varrho(x{:=}v_1) \vdash e \Rightarrow v_2 
      \land v_2{\in} \mathcal{G}(d_2) 
   \end{array}
   \right\} 
\end{align*}

\begin{lemma}[$\mathcal{G}$ is downward closed]
\label{lem:sub-good}
If $v \in \mathcal{G}(d)$ and $d' \sqsubseteq d$, then $v \in \mathcal{G}(d')$.
\end{lemma}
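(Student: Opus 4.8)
The plan is to prove the statement directly by case analysis on the derivation of $d' \sqsubseteq d$, which by the definition of the ordering in Figure~\ref{fig:denot-function} amounts to a case split on whether $d$ is an integer or a table. No induction is needed, because the defining clause for $\mathcal{G}$ on tables is a \emph{universal} quantification over the entries of the table, so passing to a sub-table can only discharge proof obligations, never add them.

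In the integer case, $d = n$, and the only rule deriving $d' \sqsubseteq n$ is $n \sqsubseteq n$, forcing $d' = n$. Hence $\mathcal{G}(d') = \mathcal{G}(n) = \mathcal{G}(d)$, and $v \in \mathcal{G}(d')$ follows immediately from the hypothesis.

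In the table case, $d = t$ and the table rule gives $d' = t'$ with $t' \subseteq t$. From $v \in \mathcal{G}(t)$ I would first extract that $v$ is a closure $\langle \lam{x}e, \varrho\rangle$ satisfying, for every entry $(d_1,d_2) \in t$ and every $v_1 \in \mathcal{G}(d_1)$, the existence of a suitable $v_2$ with $\varrho(x{:=}v_1) \vdash e \Rightarrow v_2$ and $v_2 \in \mathcal{G}(d_2)$. To show $v \in \mathcal{G}(t')$, I take an arbitrary entry $(d_1,d_2) \in t'$ and an arbitrary $v_1 \in \mathcal{G}(d_1)$; since $t' \subseteq t$, that same entry lies in $t$, so instantiating the hypothesis at exactly this entry and value yields the required $v_2$. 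Because the entries are reused verbatim, with no change to $d_1$ or $d_2$, there is no need to recurse into $\mathcal{G}(d_1)$ or $\mathcal{G}(d_2)$, and the proof closes by refolding the definition of $\mathcal{G}(t')$.

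There is no real obstacle here: the only point demanding care is correctly unfolding and refolding the recursive definition of $\mathcal{G}$ and observing that its antitonicity in the subset ordering on tables is forced by the universal quantifier in the defining clause. The restrictiveness of $\sqsubseteq$ — relating integers only to themselves and tables only by subset — keeps the two cases cleanly separated and rules out any mixed integer/table situation.
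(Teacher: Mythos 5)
Your proof is correct, and it takes a genuinely different (and slightly more economical) route than the paper, which proves this lemma by induction on $d$. You observe that no induction is needed, and this observation is sound: the ordering $\sqsubseteq$ of Figure~\ref{fig:denot-function} is \emph{shallow} on tables --- $t' \sqsubseteq t$ means exactly $t' \subseteq t$, with entries compared by equality rather than recursively related by $\sqsubseteq$ --- and the defining clause of $\mathcal{G}$ on tables is a universal quantification over entries, so shrinking the table can only discharge obligations, never create them. Instantiating the hypothesis at the verbatim entries of $t'$ therefore closes the table case with no appeal to an inductive hypothesis, and the integer case is forced to be reflexive, exactly as you argue. What the paper's induction buys is robustness rather than necessity: if the ordering were ever deepened so that table entries are themselves related componentwise by $\sqsubseteq$ (a natural refinement in richer models, with contravariance on inputs and covariance on outputs), then the occurrences of $\mathcal{G}(d_1)$ and $\mathcal{G}(d_2)$ inside the defining clause would genuinely require induction hypotheses, and the paper's proof skeleton would survive that change while your case analysis would not. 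As the definitions stand, however, your argument is valid and makes explicit \emph{why} the lemma is not inductive, which the paper's one-line proof sketch leaves implicit.
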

\begin{proof}
The proof is a straightforward induction on $d$.
\end{proof}

We relate the two environments with the inductively defined predicate
$\mathcal{G}(\rho,\varrho)$.
\begin{gather*}
  \inference{}{\mathcal{G}(\emptyset, \emptyset)}
  \quad
  \inference{v \in \mathcal{G}(d) & \mathcal{G}(\rho,\varrho)}
            {\mathcal{G}(\rho(x{:=}d), \varrho(x{:=}v))}
\end{gather*}

\begin{lemma}
\label{lem:lookup-good}
  If $\mathcal{G}(\rho,\varrho)$,
  then $\varrho(x) \in \mathcal{G}(\rho(x))$
\end{lemma}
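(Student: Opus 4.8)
The plan is to proceed by rule induction on the derivation of $\mathcal{G}(\rho, \varrho)$. Since this predicate is defined inductively with exactly two rules—the empty base case $\mathcal{G}(\emptyset, \emptyset)$ and the extension rule—each case of the induction corresponds directly to one of these two rules, and no auxiliary results are needed.

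In the base case, both environments are empty, so $\rho = \emptyset$ and $\varrho = \emptyset$. Since these are partial maps, $\rho(x)$ is undefined for every $x$, so there is no variable in the domain to look up; the statement holds vacuously (or, under the implicit assumption that $x$ lies in the domain of $\rho$, this case simply cannot arise). In the inductive case, we have $\mathcal{G}(\rho(y{:=}d), \varrho(y{:=}v))$ derived from the two premises $v \in \mathcal{G}(d)$ and $\mathcal{G}(\rho, \varrho)$, where the latter supplies the induction hypothesis. Fixing the variable $x$ to be looked up, I would case-split on whether $x = y$. If $x = y$, then $\varrho(y{:=}v)(x) = v$ and $\rho(y{:=}d)(x) = d$, so the goal $\varrho(x) \in \mathcal{G}(\rho(x))$ reduces to $v \in \mathcal{G}(d)$, which is precisely one of the premises. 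If $x \neq y$, then both extended lookups fall through to the underlying environments, giving $\varrho(y{:=}v)(x) = \varrho(x)$ and $\rho(y{:=}d)(x) = \rho(x)$, and the goal follows immediately from the induction hypothesis applied to $\mathcal{G}(\rho, \varrho)$.

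There is no real obstacle here: the lemma is a routine structural induction, and the only point requiring a modicum of care is the case analysis on variable equality when resolving a lookup through an extension—ensuring the bound variable $y$ of the extension rule is kept distinct from the queried variable $x$ of the lemma statement. The empty-environment base case is discharged either vacuously or by the convention that lookup is only meaningful for variables actually in the domain, which is exactly the invariant maintained by the extension rule since it always grows $\rho$ and $\varrho$ in lockstep with the same variable name.
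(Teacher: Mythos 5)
Your proof is correct and follows exactly the paper's approach: induction on the derivation of $\mathcal{G}(\rho,\varrho)$, which the paper states in one line and you flesh out with the (routine) case split on whether the looked-up variable coincides with the one bound by the extension rule. Nothing is missing.
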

\begin{proof}
  The proof is by induction on the derivation of
  $\mathcal{G}(\rho,\varrho)$.
\end{proof}

\begin{lemma}
  \label{lem:denot-terminates}
  If $d \in \SEM{e}\rho$ and $\mathcal{G}(\rho,\varrho)$,
  then $\varrho \vdash e \Rightarrow v$ and $v \in \mathcal{G}(d)$
  for some $v$.
\end{lemma}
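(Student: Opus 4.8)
I would prove Lemma~\ref{lem:denot-terminates} by structural induction on the expression $e$, in the usual adequacy style for a logical relation: the statement is exactly the ``fundamental lemma'' relating the denotational meaning to the operational behavior, and the two already-established facts that make it go through are the downward closure of $\mathcal{G}$ (Lemma~\ref{lem:sub-good}) and the environment lookup property (Lemma~\ref{lem:lookup-good}). The key design insight is that every place the semantics in Figure~\ref{fig:denot-function} introduces slack through the ordering $\sqsubseteq$ (namely the clauses $d \sqsubseteq \rho(x)$ for variables, and $d_1 \sqsubseteq d_2$ and $d \sqsubseteq d_1'$ for application) can be absorbed by pushing $\mathcal{G}$ downward along $\sqsubseteq$. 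So the proof strategy is: produce an operational value from the induction hypotheses, then use Lemma~\ref{lem:sub-good} to relocate it underneath the weaker bound demanded by the conclusion.

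The base and easy cases are routine. For $e = n$, the only $d \in \SEM{n}\rho$ is $n$ itself, and $\varrho \vdash n \Rightarrow n$ with $n \in \mathcal{G}(n)$. For $e = x$, I have $d \sqsubseteq \rho(x)$; Lemma~\ref{lem:lookup-good} gives $\varrho(x) \in \mathcal{G}(\rho(x))$, the operational rule gives $\varrho \vdash x \Rightarrow \varrho(x)$, and Lemma~\ref{lem:sub-good} transports this to $\varrho(x) \in \mathcal{G}(d)$. The arithmetic and conditional cases follow the same shape: since $\mathcal{G}(n)=\{n\}$, applying the induction hypothesis to the scrutinee or operands forces the operational results to be exactly the integers appearing in the denotation, after which the corresponding operational rule fires and the induction hypothesis on the chosen branch (for $\IF$) or the arithmetic rule (for $\oplus$) supplies the final value. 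For $e = \lam{x}e'$, the element $d$ is a table $t$, the operational value is the closure $\langle \lam{x}e',\varrho\rangle$, and I must show it lies in $\mathcal{G}(t)$; unfolding that goal, I fix $(d_1,d_2)\in t$ and $v_1 \in \mathcal{G}(d_1)$, observe that $d_2 \in \SEM{e'}\rho(x{:=}d_1)$, build $\mathcal{G}(\rho(x{:=}d_1),\varrho(x{:=}v_1))$ from the environment-extension rule, and invoke the induction hypothesis on $e'$ to obtain the required $v_2$.

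The application case is the main obstacle and the heart of the argument. Here $d \in \SEM{e_1\,e_2}\rho$ unpacks to a table $t \in \SEM{e_1}\rho$, an argument value $d_2 \in \SEM{e_2}\rho$, an entry $(d_1,d_1') \in t$ with $d_1 \sqsubseteq d_2$, and $d \sqsubseteq d_1'$. Applying the induction hypothesis to $e_1$ yields $\varrho \vdash e_1 \Rightarrow v_1$ with $v_1 \in \mathcal{G}(t)$; because $t$ is a table, the definition of $\mathcal{G}$ forces $v_1$ to be a closure $\langle \lam{x}e_0,\varrho_0\rangle$, which is precisely what the big-step application rule requires. Applying the induction hypothesis to $e_2$ yields $\varrho \vdash e_2 \Rightarrow w$ with $w \in \mathcal{G}(d_2)$; using $d_1 \sqsubseteq d_2$ and Lemma~\ref{lem:sub-good} I move this down to $w \in \mathcal{G}(d_1)$, which is exactly the hypothesis the closure's membership in $\mathcal{G}(t)$ consumes at the entry $(d_1,d_1')$. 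That membership then hands me a $v_2$ with $\varrho_0(x{:=}w) \vdash e_0 \Rightarrow v_2$ and $v_2 \in \mathcal{G}(d_1')$. The three premises now assemble into $\varrho \vdash e_1\,e_2 \Rightarrow v_2$, and a final application of Lemma~\ref{lem:sub-good} with $d \sqsubseteq d_1'$ gives $v_2 \in \mathcal{G}(d)$, closing the case. The delicate part is simply keeping the two independent uses of $\sqsubseteq$-slack aligned with the two uses of downward closure; no well-foundedness or induction on operational derivations is needed, since the recursion is entirely structural in $e$ and the self-application through $\sqsubseteq$ is handled once and for all inside $\mathcal{G}$.
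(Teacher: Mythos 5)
Your proof is correct and takes essentially the same route as the paper's: structural induction on $e$, with Lemma~\ref{lem:lookup-good} handling variables and Lemma~\ref{lem:sub-good} absorbing the $\sqsubseteq$-slack at variables and (twice) at application, where the definition of $\mathcal{G}$ on tables forces the function's value to be a closure and supplies the body's evaluation. The paper gives only a one-line sketch of exactly this argument, and your expansion of the cases fills it in faithfully.
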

\begin{proof}
The proof is by induction on $e$.  The cases for integers and
arithmetic operations are straightforward.  The case for variables
uses Lemmas~\ref{lem:sub-good} and \ref{lem:lookup-good}.  The cases
for $\lambda$ abstraction and application use
Lemma~\ref{lem:sub-good}.
\end{proof}

%\soundwrtopsem*
\begin{theorem}[Sound wrt. op. sem.]
  \label{thm:sound-wrt-op-sem} (aka. Adequacy)\\
  If $\SEM{e}\emptyset = \SEM{n}\emptyset$, then $e \Downarrow n$.
\end{theorem}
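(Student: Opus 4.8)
The plan is to derive the theorem as an immediate corollary of the fundamental lemma, Lemma~\ref{lem:denot-terminates}, which already connects membership in a denotation to convergence under the operational semantics. Beyond that lemma, the only facts I need are two trivial singleton computations: $\SEM{n}\emptyset = \{n\}$, read directly off the defining equations in Figure~\ref{fig:denot-function}, and $\mathcal{G}(n) = \{n\}$, read off the definition of the logical relation $\mathcal{G}$.

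First I would unfold the hypothesis. Since $\SEM{n}\emptyset = \{n\}$ by definition, the assumed equality $\SEM{e}\emptyset = \SEM{n}\emptyset$ yields $n \in \SEM{e}\emptyset$; note that only the inclusion $\SEM{n}\emptyset \subseteq \SEM{e}\emptyset$ is actually used, so the full equality is more than the proof requires. Next I would instantiate Lemma~\ref{lem:denot-terminates} with $d \by n$, $\rho \by \emptyset$, and $\varrho \by \emptyset$. Its side condition $\mathcal{G}(\emptyset,\emptyset)$ holds by the base rule of the inductive definition of $\mathcal{G}(\rho,\varrho)$. The lemma then supplies a value $v$ with $\emptyset \vdash e \Rightarrow v$ and $v \in \mathcal{G}(n)$. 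Finally, because $\mathcal{G}(n) = \{n\}$, I conclude $v = n$, so $\emptyset \vdash e \Rightarrow n$, which is exactly $e \Downarrow n$.

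I do not expect a genuine obstacle at this stage: all the difficulty has already been discharged inside Lemma~\ref{lem:denot-terminates}, whose proof is the real induction on $e$ that threads the logical relation $\mathcal{G}$ through the abstraction and application cases (relying on downward closure, Lemma~\ref{lem:sub-good}, to accommodate the $\sqsubseteq$ slack in the application rule). The present theorem is just the clean specialization of that lemma to a closed term and a base-type element, where $\mathcal{G}$ collapses to the singleton $\{n\}$ and thereby forces the operationally computed value to coincide with $n$.
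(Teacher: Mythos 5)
Your proposal is correct and follows exactly the paper's own argument: extract $n \in \SEM{e}\emptyset$ from the premise, note $\mathcal{G}(\emptyset,\emptyset)$ holds by the base rule, apply Lemma~\ref{lem:denot-terminates} to obtain $\emptyset \vdash e \Rightarrow v$ with $v \in \mathcal{G}(n)$, and conclude $v = n$ since $\mathcal{G}(n) = \{n\}$. Your observation that only the inclusion $\SEM{n}\emptyset \subseteq \SEM{e}\emptyset$ is needed is a minor sharpening the paper does not state, but the proof route is identical.
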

\begin{proof}
From the premise we have $n \in \SEM{e}\emptyset$.
Also we immediately have $\mathcal{G}(\emptyset,\emptyset)$.
So by Lemma~\ref{lem:denot-terminates} we have
$\emptyset \vdash e \Rightarrow v$ and $v \in \mathcal{G}(n)$
for some $v$. So $v = n$ and therefore $e \Downarrow n$.
\end{proof}

This proof of soundness can also be viewed as a proof of
implementation correctness.  The operational semantics, being
operational, can be viewed as a kind of implementation, and in this
light, the above proof is an example of how convenient it can be to
prove (one direction of) implementation correctness using the
elementary semantics as the specification.

\subsection{Complete with respect to operational semantics}
\label{sec:complete-wrt-op-sem}

In this section we prove that the elementary semantics is complete
with respect to the small-step semantics for the CBV
$\lambda$-calculus~\citep{Pierce:2002hj}. The proof strategy is
adapted from work by \citet{Alessi:2003aa} on intersection types.  We
need to show that if $e \longrightarrow^{*} n$, then $\SEM{e}\emptyset
= \SEM{n}\emptyset$. The meaning of the last expression in the
reduction sequence is equal to $\SEM{n}\emptyset$, so if we could just
walk this backwards, one step at a time, we would have our
result. That is, we need to show that if $e \longrightarrow e'$, then
$\SEM{e}\rho = \SEM{e'}\rho$.  We can decompose the equality into
$\SEM{e}\rho \subseteq \SEM{e'}\rho$ and $\SEM{e'}\rho \subseteq
\SEM{e}\rho$.  The forward direction is equivalent to proving type
preservation for our intersection type system, which is
straightforward. Let us focus on the backward direction and the case
of $\beta$ reduction.

Consider the following example reduction:
\[
  (\lam{x} (x \app 1) + (x \app 2)) \app (\lam{y} \ldots)
  \longrightarrow
  ((\lam{y} \ldots) \app 1) + ((\lam{y} \ldots) \app 2)
\]
where $e=(\lam{x} \ldots) \app (\lam{y} \ldots)$ and $e'=((\lam{y}
\ldots) \app 1) + ((\lam{y} \ldots) \app 2)$.  For some arbitrary $d$,
we can assume that $d \in \SEM{e'}\emptyset$ and need to show $d \in
\SEM{e}\emptyset$. From $d \in \SEM{e'}\emptyset$ we know there must
have been some tables $t_1$ and $t_2$ such that $t_1 \in
\SEM{\lam{y}\ldots}\emptyset$ and $t_2 \in
\SEM{\lam{y}\ldots}\emptyset$, but we only know for sure that $1$ is
in the domain of $t_1$ and $2$ is in the domain of $t_2$.  Perhaps
$t_1 = \{ (1,7)\}$ and $t_2 = \{ (2,0) \}$.  However, to obtain $d \in
\SEM{e}\emptyset$ we need a single table $t_3$ that can be bound to
variable $x$, and has both $1$ and $2$ in its domain so that the
applications $(x \app 1)$ and $(x \app 2)$ make sense. Fortunately, we
can simply combine the two tables (Lemma~\ref{lem:combine-values}).
\begin{equation}
  t_3=t_1 \cup t_2 = \{ (1,7), (2,0) \} \label{eq:t3}
\end{equation}

In general, we also need a lemma about reverse substitution.  Recall
the rule for $\beta$ reduction
\[
(\lam{x} e_1) \app v \longrightarrow e_1[v/x]
\]
Then we need a lemma that says, for any $d_1$,
there is some $d_2 \in \SEM{v}\emptyset$ such that
\[
d_1 \in \SEM{e_1[v/x]}\emptyset
\quad\text{implies}\quad
d_1 \in \SEM{e_1}\emptyset(x\by d_2).
\]
Generalizing this so that it can be proved by induction gives us
Lemma~\ref{lem:reverse-subst-pres-denot} below.

We proceed with the formal development of the completeness proof.
%
%% Towards proving the reverse substitution lemma, to handle the case for
%% $\lambda$ we need an environment strengthening lemma that says meaning
%% is preserved when the environment is changed to have larger values (as
%% the result of using join).
%
We define equivalence of environments, written $\rho \approx \rho'$,
and note that $\rho \approx \rho'$ implies $\rho \sqsubseteq \rho'$.
\[
\rho \approx \rho' \equiv \forall x.\, \rho(x) = \rho'(x) 
\]

\begin{lemma}[Reverse substitution preserves meaning]
\label{lem:reverse-subst-pres-denot}
  If $d \in \SEM{e[\val{v}/y]}\rho$,
  then $d \in \SEM{e}\rho'$,
  $d' \in \SEM{\val{v}}\emptyset$,
  and $\rho' \approx \rho(y{:=}d')$ for some $\rho',d'$.
\end{lemma}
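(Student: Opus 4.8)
The plan is to prove the lemma by structural induction on $e$, in each case producing the witness $d'$ by combining, via the join $\sqcup$, the witnesses supplied by the induction hypotheses. The two facts I will lean on throughout are Lemma~\ref{lem:combine-values} (closure under join on values) and the environment-monotonicity consequence of Proposition~\ref{prop:change-env-le}: instantiating part~(1) with the same element on both sides ($d' = d$) gives that $d \in \SEM{e}\rho$ and $\text{fv}(e)\vdash\rho \sqsubseteq \rho'$ imply $d \in \SEM{e}\rho'$. I will also use that $\val{v}$ is closed, so $\SEM{\val{v}}\rho = \SEM{\val{v}}\emptyset$ for every $\rho$.

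For the base cases: when $e = y$, substitution gives $e[\val{v}/y] = \val{v}$, so the hypothesis reads $d \in \SEM{\val{v}}\emptyset$; I take $d' = d$, whence $d' \in \SEM{\val{v}}\emptyset$ and, since $d \sqsubseteq d' = \rho'(y)$, also $d \in \SEM{y}\rho'$ with $\rho' \approx \rho(y{:=}d')$. When $e$ is an integer $n$ or a variable $x \neq y$, the term is unchanged by substitution and its meaning does not depend on $\rho(y)$, so I may take $d'$ to be any element of the nonempty set $\SEM{\val{v}}\emptyset$ (it contains $\{n\}$ if $\val{v} = n$, and the empty table $\emptyset$ if $\val{v}$ is an abstraction).

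For the inductive cases for arithmetic, application, and conditionals, I apply the induction hypothesis to each subexpression whose meaning is consumed, obtaining witnesses $d'_1,\ldots,d'_k \in \SEM{\val{v}}\emptyset$, and set $d' = d'_1 \sqcup \cdots \sqcup d'_k$. Repeated use of Lemma~\ref{lem:combine-values} gives $d' \in \SEM{\val{v}}\emptyset$, and since each $d'_i \sqsubseteq d'$ we have $\rho(y{:=}d'_i) \sqsubseteq \rho(y{:=}d')$, so monotonicity lifts every subexpression's membership to the common environment $\rho(y{:=}d')$. Reassembling the defining clause of the semantics then yields $d \in \SEM{e}\rho(y{:=}d')$, and I take $\rho' = \rho(y{:=}d')$; the side conditions relating $d$, the intermediate elements, and $\sqsubseteq$ in those clauses do not mention the environment, so they survive unchanged.

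The main obstacle is the abstraction case $e = \lam{x} e_0$ (assuming $x \neq y$ and, since $\val{v}$ is closed, no capture). Here $d$ is a table $t$ with $d_2 \in \SEM{e_0[\val{v}/y]}\rho(x{:=}d_1)$ for every $(d_1,d_2) \in t$. Applying the induction hypothesis to each entry produces a \emph{separate} witness $d'_{(d_1,d_2)} \in \SEM{\val{v}}\emptyset$, and these generally differ across entries---exactly the situation of the motivating example where $\{(1,7)\}$ and $\{(2,0)\}$ must be reconciled. Because $t$ is finite, I take the finite join $d' = \bigsqcup_{(d_1,d_2)\in t} d'_{(d_1,d_2)}$, which lies in $\SEM{\val{v}}\emptyset$ by Lemma~\ref{lem:combine-values}; monotonicity then places each output $d_2$ into $\SEM{e_0}\rho(y{:=}d')(x{:=}d_1)$, so $t \in \SEM{\lam{x}e_0}\rho(y{:=}d')$ and $\rho' = \rho(y{:=}d')$ works. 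The argument hinges on $\val{v}$ being a syntactic value so that Lemma~\ref{lem:combine-values} applies: for an arbitrary expression these meanings could not be merged, which is precisely why the lemma is tailored to the call-by-value setting.
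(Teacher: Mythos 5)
Your proof is correct and takes essentially the same route as the paper: the paper also argues by induction (on $e' = e[\val{v}/y]$ rather than on $e$, an immaterial difference, with the same preliminary split on whether $e = y$) and relies on exactly the three ingredients you use---Proposition~\ref{prop:change-env-le} for environment monotonicity, Proposition~\ref{prop:e-val} to supply a witness when $y$ plays no role, and Lemma~\ref{lem:combine-values} to join the per-subterm and per-table-entry witnesses. The only loose ends are cosmetic: the empty-table case of the abstraction (where your indexed join is over an empty set, so one falls back on Proposition~\ref{prop:e-val}) and the shadowing case $e = \lam{y}e_0$ (where substitution stops and monotonicity on the non-free $y$ finishes it), both immediate.
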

\begin{proof}\ 
The proof is by induction on $e'=e[\val{v}/y]$.  But before
considering the cases for $e'$, we first consider whether or not
$e=y$. The proof uses Propositions \ref{prop:change-env-le} and
\ref{prop:e-val} and Lemma~\ref{lem:combine-values}.
\end{proof}

%% The proof of Lemma~\ref{lem:reverse-subst-pres-denot} is typeset in
%% the Appendix~\ref{sec:completeness-proofs}.

\begin{lemma}[Reverse reduction preserves meaning]\ 
\label{lem:reverse_step_pres_denot}
\label{lem:reverse-multi-step-pres-denot}
\begin{enumerate}
\item If $e \longrightarrow e'$, then $\SEM{e'}\rho \subseteq \SEM{e}\rho$.
\item If $e \longrightarrow^{*} e'$, then $\SEM{e'}\rho \subseteq
  \SEM{e}\rho$.
\end{enumerate}
\end{lemma}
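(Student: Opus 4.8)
The plan is to prove part (1) by induction on the derivation of $e \longrightarrow e'$ and then obtain part (2) by a routine induction on the length of the reduction sequence. For part (2) the base case $e = e'$ is immediate, and in the inductive case $e \longrightarrow e'' \longrightarrow^{*} e'$ I combine $\SEM{e''}\rho \subseteq \SEM{e}\rho$ from part (1) with $\SEM{e'}\rho \subseteq \SEM{e''}\rho$ from the induction hypothesis, using transitivity of $\subseteq$. So the real work is entirely in part (1).

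The congruence (context) cases are handled uniformly by observing that each semantic clause in Figure~\ref{fig:denot-function} is monotone in the meanings of its subexpressions. For instance, for $e_1 \app e_2 \longrightarrow e_1' \app e_2$ arising from $e_1 \longrightarrow e_1'$, the induction hypothesis gives $\SEM{e_1'}\rho \subseteq \SEM{e_1}\rho$; since the application clause only requires a witnessing table $t \in \SEM{e_1}\rho$, any $d \in \SEM{e_1' \app e_2}\rho$ immediately lies in $\SEM{e_1 \app e_2}\rho$. The cases for the argument position, the operands of $\oplus$, and the guard of a conditional are analogous. The arithmetic axiom $n_1 \oplus n_2 \longrightarrow n$ (with $n = n_1 \oplus n_2$) and the two conditional axioms are direct calculations: e.g. for $\IF n \THEN e_2 \ELSE e_3 \longrightarrow e_2$ with $n \neq 0$, any $d \in \SEM{e_2}\rho$ satisfies the conditional clause by taking the witness $n \in \SEM{n}\rho = \{n\}$, whose nonzero case is discharged by hypothesis and whose zero case is vacuous.

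The crux is the $\beta$ case $(\lam{x} e_1) \app \val{v} \longrightarrow e_1[\val{v}/x]$. Fix $d \in \SEM{e_1[\val{v}/x]}\rho$. By Lemma~\ref{lem:reverse-subst-pres-denot} there exist $\rho'$ and $d'$ with $d \in \SEM{e_1}\rho'$, $d' \in \SEM{\val{v}}\emptyset$, and $\rho' \approx \rho(x{:=}d')$; since $\approx$ entails $\sqsubseteq$, Proposition~\ref{prop:change-env-le} yields $d \in \SEM{e_1}\rho(x{:=}d')$. The key idea is then to build the singleton table $t = \{(d',d)\}$: its single entry witnesses $t \in \SEM{\lam{x} e_1}\rho$ precisely because $d \in \SEM{e_1}\rho(x{:=}d')$. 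Because $\val{v}$ is closed, its meaning is independent of the environment (Proposition~\ref{prop:e-val}), so $d' \in \SEM{\val{v}}\rho$. Instantiating the application clause with this $t$, the entry $(d',d) \in t$, the argument value $d'$, and the reflexive instances $d' \sqsubseteq d'$ and $d \sqsubseteq d$, I conclude $d \in \SEM{(\lam{x} e_1) \app \val{v}}\rho$, as required.

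The main obstacle is concentrated in the $\beta$ case, and specifically in the reverse-substitution Lemma~\ref{lem:reverse-subst-pres-denot} on which it rests: recovering a single element $d'$ and environment $\rho'$ from a meaning of the already-substituted term requires, in its own inductive proof, combining several meanings of $\val{v}$ into one via the join, which is exactly where closure under join on values (Lemma~\ref{lem:combine-values}) is needed. Once that lemma is in hand, assembling the singleton table and invoking the application clause is straightforward, and the remaining cases are mechanical.
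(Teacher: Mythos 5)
Your proposal is correct and follows essentially the same route as the paper: part (2) by induction on the reduction sequence, part (1) by induction on the step derivation with all the weight in the $\beta$ case, where Lemma~\ref{lem:reverse-subst-pres-denot} produces $d'$ and $\rho'$, Proposition~\ref{prop:change-env-le} transports $d$ into $\SEM{e_1}\rho(x{:=}d')$, and the singleton table $\{(d',d)\}$ witnesses membership in $\SEM{\lam{x}e_1}\rho$. One small citation slip: the step $d' \in \SEM{\val{v}}\emptyset \implies d' \in \SEM{\val{v}}\rho$ is justified not by Proposition~\ref{prop:e-val} (which only asserts non-emptiness) but by Proposition~\ref{prop:change-env-le}, whose free-variable side condition is vacuous for the closed value $\val{v}$ --- exactly as the paper does.
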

\begin{proof}\ 
  \begin{enumerate}
  \item The proof is by induction on the derivation of $e
    \longrightarrow e'$.  All of the cases are straightforward except
    for $\beta$ reduction. In that case we have
    $(\lam{x}e_1)\app \val{v} \longrightarrow e_1[\val{v}/x]$.
    Fix an arbitrary $d$ and assume $d \in \SEM{e_1[\val{v}/x]}\rho$.
    We need to show that $d \in \SEM{(\lam{x}e_1)\app \val{v}}\rho$.
    By Lemma~\ref{lem:reverse-subst-pres-denot} and the assumption
    there exist $d'$ and $\rho'$ such that
    $d \in \SEM{e_1}\rho'$, % \label{eq:ve1}
    $d' \in \SEM{\val{v}}\emptyset$, and % \label{eq:vpve}\\
    $\rho' \approx \rho(x\by d')$. % \label{eq:rprxv}
    Then we have $d \in \SEM{e_1}\rho(x\by d')$
    by Proposition~\ref{prop:change-env-le}, noting that
    $\rho' \sqsubseteq \rho(x\by d')$. Therefore we have
    $\{ (d',d)\} \in \SEM{\lam{x}e_1}\rho$. % \label{eq:vpvle1}
    Also, we have $d' \in \SEM{\val{v}}\rho$
    by another use of Proposition~\ref{prop:change-env-le},
    noting that $\emptyset \sqsubseteq \rho$.
    With these two facts, we conclude that
    $d \in \SEM{(\lam{x}e_1)\app \val{v}}\rho$.
    
  \item The proof is by induction on the derivation of $e
    \longrightarrow^{*} e'$. The base case is trivial and the
    induction step follows immediately from part (1).
  \end{enumerate}
\end{proof}

Next we prove the forward direction, that reduction preserves
meaning. The proof follows the usual pattern for preservation of a
type system. However, we shall continue to use the denotational
semantics here.

\begin{lemma}[Substitution preserves meaning]
 \label{lem:substitution}
  If $d \in \SEM{e}\rho'$, $d' \in \SEM{\val{v}}\emptyset$, and $\rho'
  \approx \rho(x\by d')$, then $d \in \SEM{e[\val{v}/x]}$
\end{lemma}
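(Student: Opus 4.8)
The plan is to prove Lemma~\ref{lem:substitution} (substitution preserves meaning) by structural induction on the expression $e$. The statement is the forward direction of the $\beta$-reduction case and, as the surrounding text notes, it follows the usual substitution-lemma pattern for a type system, except that we reason directly with the denotation. The induction hypothesis, suitably generalized, states that whenever $d \in \SEM{e}\rho'$ with $\rho' \approx \rho(x\by d')$ and $d' \in \SEM{\val{v}}\emptyset$, we can conclude $d \in \SEM{e[\val{v}/x]}\rho$.

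First I would set up the cases. The \emph{variable} case splits on whether the variable is $x$. If it is $y \neq x$, then $y[\val{v}/x] = y$ and the meaning depends only on $\rho'(y) = \rho(y)$ (using $\rho' \approx \rho(x\by d')$), so the result is immediate from the definition $\SEM{y}\rho = \{d \mid d \sqsubseteq \rho(y)\}$. If the variable is $x$ itself, then $x[\val{v}/x] = \val{v}$; from $d \in \SEM{x}\rho'$ we get $d \sqsubseteq \rho'(x) = d'$, and since $d' \in \SEM{\val{v}}\emptyset$ we use downward closure (Proposition~\ref{prop:change-env-le}) to obtain $d \in \SEM{\val{v}}\emptyset$, and then enlarge the environment from $\emptyset$ to $\rho$ by another appeal to Proposition~\ref{prop:change-env-le}, noting $\emptyset \sqsubseteq \rho$. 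The \emph{integer} and \emph{arithmetic} cases are routine, since substitution commutes with the set-builder definitions and the denotations do not consult the environment at leaves. The \emph{application} and \emph{conditional} cases follow directly by unfolding the semantics and applying the induction hypothesis to each subexpression; substitution distributes over these constructors, so the existential witnesses ($t$, $d_1$, $d_1'$, $d_2$, or the guard $n$) transfer immediately.

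The \emph{abstraction} case $\lam{z} e_0$ is the one that requires care, and I expect it to be the main obstacle. Here we must handle capture-avoidance: assume $z \neq x$ and $z$ not free in $\val{v}$, so that $(\lam{z} e_0)[\val{v}/x] = \lam{z}(e_0[\val{v}/x])$. From $d \in \SEM{\lam{z} e_0}\rho'$ we know $d$ is a table $t$ with $d' \in \SEM{e_0}\rho'(z\by d_0)$ for each entry $(d_0,d_0') \in t$. To apply the induction hypothesis I would rewrite the extended environment $\rho'(z\by d_0)$ in the form $\rho(z\by d_0)(x\by d')$ up to $\approx$, which holds because $z \neq x$ means the two updates commute. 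The delicate point is that $d'$, drawn from the \emph{closed} value $\val{v}$, does not mention $z$, so extending the environment with $z$ does not disturb the binding of $x$; this is exactly where closedness of $\val{v}$ (built into the $\val{v}$ grammar) is used. After the rewrite, the induction hypothesis yields $d_0' \in \SEM{e_0[\val{v}/x]}\rho(z\by d_0)$ for every entry, which repackages as $t \in \SEM{\lam{z}(e_0[\val{v}/x])}\rho = \SEM{(\lam{z} e_0)[\val{v}/x]}\rho$, completing the case.

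Throughout, the only nontrivial ingredient beyond the induction hypothesis is downward closure and environment weakening (Proposition~\ref{prop:change-env-le}), used precisely in the variable case to move $d'$ from the empty environment into $\rho$. No join lemma is needed in this forward direction — that was required only for the reverse direction (Lemma~\ref{lem:reverse-subst-pres-denot}), where two tables had to be merged. The hard part is thus purely bookkeeping about environment equivalence $\approx$ and commuting updates in the abstraction case, with closedness of $\val{v}$ ensuring that substitution into a binder cannot capture the substituted value.
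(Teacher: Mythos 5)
Your proof is correct and follows essentially the same route as the paper's: induction on $e$, with Proposition~\ref{prop:change-env-le} doing the work in the variable case and environment bookkeeping (commuting the updates under $\approx$) in the abstraction case. The only cosmetic difference is that the paper also cites Proposition~\ref{prop:change-env-le} in the $\lambda$ case—needed there to move memberships across $\approx$-equivalent environments—whereas you absorb that step into the induction hypothesis's $\approx$ premise, which amounts to the same thing.
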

\begin{proof}
  The proof is by induction on $e$. The case for variables 
  uses Proposition~\ref{prop:change-env-le}.
  The case for $\lambda$ uses Proposition~\ref{prop:change-env-le}.
\end{proof}

\begin{lemma}[Reduction preserves meaning]\ 
  \label{lem:preservation}
  \begin{enumerate}
  \item If $e \longrightarrow e'$, then $\SEM{e}\rho \subseteq \SEM{e'}\rho$.
  \item If $e \longrightarrow^{*} e'$, then $\SEM{e}\rho\subseteq \SEM{e'}\rho$.
  \end{enumerate}
\end{lemma}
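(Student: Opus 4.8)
The plan is to prove both parts together, with part (2) following immediately from part (1) by a routine induction on the length of the reduction sequence, so the real content is part (1): if $e \longrightarrow e'$ then $\SEM{e}\rho \subseteq \SEM{e'}\rho$. I would prove this by induction on the derivation of $e \longrightarrow e'$. The congruence/structural cases (reduction under application, under $\oplus$, under $\IF$) follow directly from the induction hypothesis together with the fact that each semantic clause is monotone in the meanings of its subexpressions: for example, if $e_1 \longrightarrow e_1'$ then $\SEM{e_1}\rho \subseteq \SEM{e_1'}\rho$ by the IH, and since $\SEM{e_1 \app e_2}\rho$ is defined by an existential over elements of $\SEM{e_1}\rho$, any witness drawn from $\SEM{e_1}\rho$ is also a witness drawn from $\SEM{e_1'}\rho$, giving $\SEM{e_1\app e_2}\rho \subseteq \SEM{e_1'\app e_2}\rho$. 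The cases for $\oplus$ and $\IF$ are analogous, and the arithmetic case where $n_1 \oplus n_2$ reduces to its numeric result is immediate since both sides denote the same singleton.

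The heart of the argument is the $\beta$ case: $(\lam{x}e_1)\app \val{v} \longrightarrow e_1[\val{v}/x]$, where I must show $\SEM{(\lam{x}e_1)\app \val{v}}\rho \subseteq \SEM{e_1[\val{v}/x]}\rho$. Fix $d \in \SEM{(\lam{x}e_1)\app \val{v}}\rho$. By the definition of application there exist $t, d_1, d_1', d_2$ with $t \in \SEM{\lam{x}e_1}\rho$, $d_2 \in \SEM{\val{v}}\rho$, $(d_1,d_1') \in t$, $d_1 \sqsubseteq d_2$, and $d \sqsubseteq d_1'$. From $t \in \SEM{\lam{x}e_1}\rho$ and $(d_1,d_1') \in t$, the abstraction clause gives $d_1' \in \SEM{e_1}\rho(x\by d_1)$. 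The plan is then to invoke Lemma~\ref{lem:substitution} (substitution preserves meaning) to transfer this into a statement about $e_1[\val{v}/x]$. To apply that lemma I need an environment $\rho'$ and element $d'$ with $d_1' \in \SEM{e_1}\rho'$, $d' \in \SEM{\val{v}}\emptyset$, and $\rho' \approx \rho(x\by d')$; I take $\rho' = \rho(x\by d_1)$ and $d' = d_1$, which requires $d_1 \in \SEM{\val{v}}\emptyset$. This follows because $d_2 \in \SEM{\val{v}}\rho$ with $\val{v}$ closed reduces (via Proposition~\ref{prop:change-env-le}) to $d_2 \in \SEM{\val{v}}\emptyset$, and then $d_1 \sqsubseteq d_2$ gives $d_1 \in \SEM{\val{v}}\emptyset$ by downward closure. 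Lemma~\ref{lem:substitution} then yields $d_1' \in \SEM{e_1[\val{v}/x]}\rho$, and a final appeal to Proposition~\ref{prop:change-env-le} with $d \sqsubseteq d_1'$ gives $d \in \SEM{e_1[\val{v}/x]}\rho$, as required.

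The main obstacle I anticipate is lining up the closedness hypotheses exactly as Lemma~\ref{lem:substitution} demands: that lemma is stated with $d' \in \SEM{\val{v}}\emptyset$ (empty environment) and with the equivalence $\rho' \approx \rho(x\by d')$, so I must be careful to produce the witness $d'$ in the empty-environment meaning rather than in $\SEM{\val{v}}\rho$, and to discharge the gap between $d_1 \sqsubseteq d_2$ and the exact element I feed in. The two applications of Proposition~\ref{prop:change-env-le} (one to strip the environment down to $\emptyset$ for the closed value, one to push $d \sqsubseteq d_1'$ through at the end) are the crucial moves that make the monotone/downward-closed structure of the semantics do the work. Everything else is bookkeeping over the existential witnesses in the application clause.
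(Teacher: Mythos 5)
Your proposal is correct and takes essentially the same route as the paper's proof: induction on the derivation of $e \longrightarrow e'$, with all congruence cases straightforward and the $\beta$ case discharged by combining Lemma~\ref{lem:substitution} with Proposition~\ref{prop:change-env-le}, which are exactly the two results the paper cites for that case. The details you fill in (using closedness of $\val{v}$ to move from $\SEM{\val{v}}\rho$ to $\SEM{\val{v}}\emptyset$, downward closure to pass from $d_2$ to $d_1$, and a final application of downward closure for $d \sqsubseteq d_1'$) are a faithful elaboration of the paper's proof sketch.
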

\begin{proof}\ 
\begin{enumerate}
\item The proof is by induction on $e \longrightarrow e'$.
  Most of the cases are straightforward.
  The case for $\beta$ uses Proposition~\ref{prop:change-env-le}
  and Lemma~\ref{lem:substitution}.
\item The proof is by induction on the derivation of $e \longrightarrow^{*} e'$,
  using part (1) in the induction step.
\end{enumerate}
\end{proof}

\begin{corollary}[Meaning is invariant under reduction]\ \\
  \label{lem:meaning-invarient-reduction}
  If $e \longrightarrow^{*} e'$, then $\SEM{e} = \SEM{e'}$.
\end{corollary}
\begin{proof}
  The two directions are proved by
  Lemma~\ref{lem:reverse-multi-step-pres-denot} and
  \ref{lem:preservation}.
\end{proof}

\begin{theorem}[Complete wrt. op. sem.]
  \label{thm:completeness}
  If $e \Downarrow n$, then $\SEM{e} \emptyset = \SEM{n}\emptyset$.
\end{theorem}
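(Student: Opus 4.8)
The plan is to reduce this statement to the already-established Corollary~\ref{lem:meaning-invarient-reduction}, which says that meaning is invariant under small-step reduction: $e \longrightarrow^{*} e'$ implies $\SEM{e} = \SEM{e'}$. The only gap between the hypothesis and that corollary is the difference between the two operational formulations. The hypothesis $e \Downarrow n$ is phrased in terms of the big-step relation $\emptyset \vdash e \Rightarrow n$ of \citet{Kahn:1987aa} used in Section~\ref{sec:sound-wrt-op-sem}, whereas the corollary is phrased in terms of the small-step reduction $\longrightarrow^{*}$ of Section~\ref{sec:complete-wrt-op-sem}. So the work is to connect these two relations.

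First I would bridge this gap by proving the standard equivalence between the big-step and small-step semantics for the CBV $\lambda$-calculus, namely that $e \Downarrow n$ implies $e \longrightarrow^{*} n$. This is the classical correspondence, proved by induction on the derivation of the big-step judgment. The key supporting fact is that small-step reduction is closed under evaluation contexts, so that the reduction sequences obtained inductively for the subexpressions of a compound expression can be composed and then capped off with the appropriate primitive step. For example, in the application case one composes a reduction of the operator to a $\lambda$ abstraction, a reduction of the operand to a value, and a final $\beta$ step; the arithmetic and conditional cases are handled analogously by composing the sub-reductions and appending the corresponding primitive reduction.

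With $e \longrightarrow^{*} n$ in hand, the conclusion is immediate: I would apply Corollary~\ref{lem:meaning-invarient-reduction} directly to obtain $\SEM{e} = \SEM{n}$, and in particular $\SEM{e}\emptyset = \SEM{n}\emptyset$, which is exactly the desired result.

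The main obstacle is the big-step/small-step equivalence lemma. Although routine in principle, it requires setting up the evaluation-context compatibility of $\longrightarrow$ and carefully matching each big-step rule to a composed small-step reduction sequence, together with a congruence lemma allowing reductions to be lifted into subterm positions. Once that infrastructure is available, the passage from the hypothesis to $e \longrightarrow^{*} n$, and thence to the equality of meanings, follows at once from the corollary.
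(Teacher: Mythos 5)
Your proposal is correct and takes essentially the same route as the paper: the paper's proof likewise obtains $e \longrightarrow^{*} n$ from the premise and concludes immediately by Corollary~\ref{lem:meaning-invarient-reduction}. The only difference is that the paper treats the step from $e \Downarrow n$ to $e \longrightarrow^{*} n$ as given, whereas you explicitly supply the standard big-step/small-step equivalence argument --- a sensible precaution, but not something the paper itself elaborates.
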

%\completewrtopsem*
\begin{proof}
  From the premise we have $e \longrightarrow^{*} n$, from which we
  conclude by use of Lemma~\ref{lem:meaning-invarient-reduction}.
\end{proof}

The completeness theorem is rather important for the elementary
semantics. It says that $\CE$ gives the right meaning to all the
terminating programs in the CBV $\lambda$-calculus.

We also prove that $\CE$ gives the right meaning to diverging
programs. That is, $\CE$ maps diverging programs to $\emptyset$. We
write $e\Uparrow$ when $e$ diverges.

\begin{proposition}[Diverging programs have empty meaning]
\label{prop:diverge-denot-empty}\ \\
If $e\Uparrow$ then $\SEM{e}\emptyset = \emptyset$.
\end{proposition}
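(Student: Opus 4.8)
The plan is to prove the contrapositive: if $\SEM{e}\emptyset$ is nonempty, then $e$ terminates. This is essentially immediate from Lemma~\ref{lem:denot-terminates}, which already packages the fact that a nonempty denotation forces big-step evaluation to a value.

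First I would suppose $\SEM{e}\emptyset \neq \emptyset$ and pick any $d \in \SEM{e}\emptyset$. The two empty environments are trivially related, $\mathcal{G}(\emptyset,\emptyset)$, by the base rule of the inductively defined predicate $\mathcal{G}(\rho,\varrho)$. Instantiating Lemma~\ref{lem:denot-terminates} with $\rho=\emptyset$ and $\varrho=\emptyset$ then yields some value $v$ with $\emptyset \vdash e \Rightarrow v$. Hence $e$ evaluates to a value under the big-step semantics of \citet{Kahn:1987aa}, i.e., $e$ terminates.

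It remains to connect this to the divergence predicate $\Uparrow$. Since $e\Uparrow$ asserts that $e$ has no terminating evaluation, the existence of a $v$ with $\emptyset \vdash e \Rightarrow v$ directly contradicts $e\Uparrow$. Taking the contrapositive of this argument establishes that $e\Uparrow$ implies $\SEM{e}\emptyset = \emptyset$.

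The only real subtlety—and thus the main obstacle—is this last bridge between the two operational presentations. Divergence is naturally phrased in terms of the small-step reduction $\longrightarrow$ (no reduction sequence reaches a value), whereas Lemma~\ref{lem:denot-terminates} delivers a \emph{big-step} evaluation $\emptyset \vdash e \Rightarrow v$. To close the gap one invokes the standard equivalence between the big-step and small-step semantics for this CBV calculus: if $\emptyset \vdash e \Rightarrow v$ then $e$ reduces to the syntactic value corresponding to $v$, so $e$ cannot diverge. Everything else is a direct application of the already-proven termination lemma, with the base case of $\mathcal{G}$ supplying the trivially related empty environments.
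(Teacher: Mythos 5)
Your proposal is correct and matches the paper's own proof essentially step for step: both invoke Lemma~\ref{lem:denot-terminates} with $\mathcal{G}(\emptyset,\emptyset)$ to obtain $\emptyset \vdash e \Rightarrow v$, then bridge from the big-step judgment to the small-step reduction $e \longrightarrow^{*} v$ to contradict $e\Uparrow$ (the paper phrases this as a proof by contradiction rather than contrapositive, a cosmetic difference). The ``subtlety'' you flag is exactly the step the paper takes when it writes ``Thus, we also have $e \longrightarrow^{*} v$,'' so your identification of that bridge as the only non-immediate ingredient is on target.
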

\begin{proof}
  Towards a contradiction, suppose $\SEM{e}\emptyset \neq \emptyset$.
  So $d \in \SEM{e}\emptyset$ for some $d$. We have
  $\mathcal{G}(\emptyset,\emptyset)$, so $\emptyset \vdash e
  \Rightarrow v$ by Lemma~\ref{lem:denot-terminates}.  Thus, we also
  have $e \longrightarrow^{*} v$.  Then
  from $e\Uparrow$, we have $v \longrightarrow e'$ for some
  $e'$, but that is impossible because $v$ is a value and so cannot
  further reduce.
\end{proof}

\noindent In contrast, syntactic values have non-empty meaning.

\begin{proposition}[Syntactic values have non-empty meaning]
  \label{prop:e-val}
  $\SEM{\val{v}}\rho \neq \emptyset$.
\end{proposition}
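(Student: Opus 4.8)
The plan is to prove $\SEM{\val{v}}\rho \neq \emptyset$ by case analysis on the grammar of closed syntactic values, $\val{v} ::= n \mid \lam{x} e$, exhibiting an explicit witness in each case rather than reasoning inductively.

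In the integer case, $\val{v} = n$, the semantics gives $\SEM{n}\rho = \{ n \}$ directly from Figure~\ref{fig:denot-function}, so $n \in \SEM{n}\rho$ and the set is non-empty.

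In the abstraction case, $\val{v} = \lam{x} e$, I would exhibit a specific witness rather than analyze the body $e$ at all. The semantics is $\SEM{\lam{x}e}\rho = \{ t \mid \forall (d,d') \in t.\, d' \in \SEM{e}\rho(x{:=}d) \}$. Taking $t$ to be the empty table $\emptyset \in \FSET{\mathbb{D}\times\mathbb{D}}$, the universally quantified membership condition holds vacuously, since $\emptyset$ contains no entries $(d,d')$. Hence $\emptyset \in \SEM{\lam{x}e}\rho$, and the set is again non-empty.

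There is essentially no obstacle here: the crucial observation is that the empty graph vacuously satisfies the defining predicate of any abstraction, so the meaning of the body is never needed. This is precisely why the proposition holds even for abstractions whose bodies diverge, and it is what distinguishes syntactic values from arbitrary expressions, whose meaning can be empty by Proposition~\ref{prop:diverge-denot-empty}.
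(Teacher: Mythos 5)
Your proof is correct and follows the same approach as the paper, which simply states that the proof is by case analysis on $\val{v}$. Your witnesses --- the singleton $\{n\}$ for integers and the empty table $\emptyset$, which vacuously satisfies the defining condition of $\SEM{\lam{x}e}\rho$, for abstractions --- are exactly the natural way to fill in that case analysis.
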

\begin{proof}
  The proof is by case analysis on $\val{v}$.
\end{proof}

%% Our proof of completeness related the elementary semantics to the
%% small-step operational semantics. We wonder whether a more elegant
%% proof is possible using big-step operational semantics, similar to our
%% proof of soundness. Hopefully the insights we have gained in the above
%% proof, especially Lemma~\ref{lem:substitution}, will provide some
%% guidance.

\subsection{Sound with respect to contextual equivalence}
\label{sec:sound-wrt-contextual-equiv}

We would like to use our elementary semantics to justify compiler
optimizations, which replace sub-expressions within a program with
other sub-expressions (that are hopefully more efficient).  Two
sub-expressions are \emph{contextual equivalent}, defined below, when
replacing one with the other does not change the behavior of the
program~\citep{James-Hiram-Morris:1968kx}. We define contexts $C$ with
the following grammar.
\begin{align*}
  C & ::= \box \mid C \oplus e \mid e \oplus C \mid \lam{x}C \mid C \app e
     \mid e \app C & \text{contexts}\\
e \simeq e' &\equiv \forall C.\,
  FV(C[e]) = FV(C[e']) = \emptyset {\implies}
  C[e] \Downarrow \text{ iff } C[e'] \Downarrow & \text{ctx. equivalence}
\end{align*}

The correctness property that we are after is that denotational
equality should imply contextual equivalence.  A common way to prove
this is to show that the denotational semantics is a congruence and
then use soundness and completeness of the semantics for
programs~\citep{Gunter:1992aa}. Indeed, we take that approach.

\begin{lemma}[$E$ is a congruence]
  \label{lem:congruence}
  For any context $C$, 
  if $\SEM{e} = \SEM{e'}$, then $\SEM{C[e]} = \SEM{C[e']}$.
\end{lemma}
\begin{proof}
  The proof is a straightforward induction on $C$.
\end{proof}

%\soundwrtcontextualequiv*
\begin{theorem}[Sound wrt. Contextual Equivalence]\ \\
  \label{thm:denot-sound-wrt-ctx-equiv}
  If $\SEM{e} = \SEM{e'}$,
  then $C[e] \Downarrow$ iff $C[e'] \Downarrow$
  for any closing context $C$.
\end{theorem}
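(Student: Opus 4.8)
The plan is to prove this by combining the congruence lemma (Lemma~\ref{lem:congruence}) with the soundness and completeness results already established for closed programs (Theorems~\ref{thm:sound-wrt-op-sem} and~\ref{thm:completeness}, together with Proposition~\ref{prop:diverge-denot-empty}). The structure is entirely standard for proving that denotational equality implies contextual equivalence, and all the hard analytic work has been done in the earlier sections; what remains is to plumb the pieces together.

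First I would apply the congruence lemma. Given the hypothesis $\SEM{e} = \SEM{e'}$ and an arbitrary closing context $C$, Lemma~\ref{lem:congruence} immediately yields $\SEM{C[e]} = \SEM{C[e']}$. Since $C$ is a closing context, both $C[e]$ and $C[e']$ are closed programs, so it suffices to show that for any two closed programs $P$ and $Q$ with $\SEM{P}\emptyset = \SEM{Q}\emptyset$, we have $P \Downarrow$ iff $Q \Downarrow$; by symmetry it is enough to prove one direction. The congruence lemma is stated for the full (open) denotations $\SEM{\cdot}$, but instantiating the equality at the empty environment gives $\SEM{C[e]}\emptyset = \SEM{C[e']}\emptyset$, which is what the termination arguments below consume.

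Second I would connect termination to the denotation being nonempty. The key observation is that, for a closed program $P$, we have $P \Downarrow$ if and only if $\SEM{P}\emptyset \neq \emptyset$. The forward direction: if $P \Downarrow n$ for some value, then by completeness (Theorem~\ref{thm:completeness}) $\SEM{P}\emptyset = \SEM{n}\emptyset$, and since $\SEM{n}\emptyset = \{n\} \neq \emptyset$ the denotation is nonempty. The reverse direction: if $\SEM{P}\emptyset \neq \emptyset$, then $P$ cannot diverge, because Proposition~\ref{prop:diverge-denot-empty} tells us that a diverging program has empty denotation; hence $P$ terminates. (One can also route the reverse direction directly through Lemma~\ref{lem:denot-terminates} with $\mathcal{G}(\emptyset,\emptyset)$.) Thus ``terminates'' is faithfully tracked by ``nonempty denotation.''

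Finally I would assemble the argument. Suppose $C[e] \Downarrow$. By the characterization just established, $\SEM{C[e]}\emptyset \neq \emptyset$. Using the denotational equality $\SEM{C[e]}\emptyset = \SEM{C[e']}\emptyset$ from congruence, we get $\SEM{C[e']}\emptyset \neq \emptyset$, and therefore $C[e'] \Downarrow$ by the same characterization. The converse direction is symmetric. I do not anticipate a genuine obstacle here, since every nontrivial ingredient is already proved; the only point demanding minor care is the bridge between ``converges to a value'' (the relational $\Downarrow$) and ``has nonempty meaning,'' which relies on pairing completeness for the terminating case with Proposition~\ref{prop:diverge-denot-empty} for the diverging case so that the two notions coincide on closed programs.
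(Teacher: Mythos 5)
Your skeleton matches the paper's: congruence (Lemma~\ref{lem:congruence}) transfers the hypothesis to $\SEM{C[e]} = \SEM{C[e']}$, completeness makes the denotation of the terminating side nonempty, and a soundness result converts the nonempty denotation of the other side back into termination. The paper's proof is exactly this pipeline: from $C[e] \Downarrow$ it gets $C[e] \longrightarrow^{*} \val{e}$, applies Theorem~\ref{thm:completeness} to get $\SEM{C[e]}\emptyset = \SEM{\val{e}}\emptyset$, uses Proposition~\ref{prop:e-val} for nonemptiness, and concludes $C[e'] \Downarrow$ via Lemma~\ref{lem:denot-terminates}.

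However, your \emph{preferred} reverse-direction argument has a genuine gap. The contrapositive of Proposition~\ref{prop:diverge-denot-empty} only yields that $C[e']$ does not diverge, and in this untyped calculus that is strictly weaker than $C[e'] \Downarrow$: a closed term can also reduce to a stuck term (e.g.\ $1 \app 2$, or $(\lam{x}x) + 1$), whereas $\Downarrow$ means reduction to a syntactic value. The step ``cannot diverge, hence terminates'' silently assumes a converge/diverge dichotomy that does not hold here; repairing it requires showing that stuck terms have empty meaning, which is essentially Lemma~\ref{lem:denot-terminates} again. So the load-bearing step must be your parenthetical route: from $v \in \SEM{C[e']}\emptyset$ and $\mathcal{G}(\emptyset,\emptyset)$, Lemma~\ref{lem:denot-terminates} gives $\emptyset \vdash C[e'] \Rightarrow w$, hence $C[e'] \Downarrow$ --- which is precisely what the paper does. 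A second, smaller slip: your forward direction cites Theorem~\ref{thm:completeness} together with $\SEM{n}\emptyset = \{n\}$, which covers only programs converging to integers; when $C[e]$ converges to a $\lambda$ abstraction you need meaning invariance under reduction (Corollary~\ref{lem:meaning-invarient-reduction}) plus Proposition~\ref{prop:e-val} to obtain nonemptiness, as the paper does.
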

\begin{proof}
  We discuss one direction of the iff, that $C[e]\Downarrow$ implies
  $C[e']\Downarrow$. The other direction is similar.  From the premise,
  congruence gives us $\SEM{C[e]} = \SEM{C[e']}$.  From
  $C[e]\Downarrow$ we have $C[e] \longrightarrow^{*} \val{e}$ for some
  $\val{e}$.  Therefore we have $\SEM{C[e]}\emptyset =
  \SEM{\val{e}}\emptyset$ by completeness
  (Theorem~\ref{thm:completeness}).  Then we also have
  $\SEM{C[e']}\emptyset = \SEM{\val{e}}$.  So by
  Proposition~\ref{prop:e-val}, we have $v \in \SEM{\val{e}}$ for
  some $v$, and therefore $v \in \SEM{C[e']}$.  We conclude that
  $\emptyset \vdash C[e'] \Rightarrow w$ for some $w$ by soundness
  (Lemma~\ref{lem:denot-terminates}).
\end{proof}

\section{Mechanized Correctness of an Optimizer}
\label{sec:optimize}

We turn to address the question of whether the declarative semantics
is useful. Our first case study is proving the correctness of a
compiler optimization pass: constant folding and function
inlining. The setting is still the untyped $\lambda$-calculus extended
with integers and arithmetic. Figure~\ref{fig:optimizer} defines the
optimizer as a function $\mathcal{O}$ that maps an expression and a
counter to an expression. One of the challenging problems in creating
a good inliner is determining when to stop. Here we use a counter that
limits inlining to a fixed depth $k$. A real compiler would use a
smarter heuristic but it would employ the same program
transformations.

The third equation in the definition of $\mathcal{O}$ performs
constant folding. For an arithmetic operation $e_1 \oplus e_2$, it
recursively optimizes $e_1$ and $e_2$. If the results are integers,
then it performs the arithmetic. Otherwise it outputs an arithmetic
expression.
The fourth equation optimizes the body of a $\lambda$ abstraction.
The fifth equation, for function application, is the most interesting.
If $e_1$ optimizes to a $\lambda$ abstraction and $e_2$ optimizes to a
syntactic value $v_2$, then we perform inlining by substituting $v_2$
for parameter $x$ in the body of the function. We then optimize the
result of the substitution, making this a rather aggressive
polyvariant
optimizer~\citep{Jagannathan:1996aa,Waddell:1997fk,Banerjee:1997aa,Turbak:1997aa}.
The counter is decremented on this recursive call to ensure
termination.

\begin{figure}[tbp]
\begin{align*}
  \mathcal{O}\EXP{ x }k &= x \\
  \mathcal{O}\EXP{ n }k &= n \\
  \mathcal{O}\EXP{ e_1 \oplus e_2 }k &=
  \begin{cases}
    n_1 \oplus n_2 & \text{if }
    \mathcal{O}\EXP{e_1 }k = n_1 \text{ and }
    \mathcal{O}\EXP{e_2 }k = n_2 \\
    \EXP{\mathcal{O}\EXP{e_1 }k \oplus \mathcal{O}\EXP{e_2 }k}
    & \text{otherwise}
  \end{cases} \\
  \mathcal{O}\EXP{ \lambda x.\, e }k &=
    \lambda x.\, \mathcal{O}\EXP{ e }k \\
  \mathcal{O}\EXP{ e_1\,e_2 }k &=
  \begin{cases}
    \mathcal{O}\EXP{ e[v_2/x] }(k{-}1)
    & \text{if } k \geq 1 \text{ and } \mathcal{O}\EXP{e_1 }k = \lambda x.\, e \\
     & \text{and } \mathcal{O}\EXP{e_2 }k = v_2 \\
    \EXP{\mathcal{O}\EXP{e_1 }k \app \mathcal{O}\EXP{e_2 }k} & \text{otherwise}
  \end{cases} 
\end{align*}

\caption{A compiler optimization pass that folds and propagates
  constants and inlines function calls.}
\label{fig:optimizer}
\end{figure}

We turn to proving the optimizer correct with respect to the
declarative semantics. The proof is pleasantly straightforward!

\begin{lemma}[Optimizer Preserves Denotations]
  \label{lem:O2-correct-aux}
  $\CE(\mathcal{O}\EXP{e}k) = \SEM{e}$
\end{lemma}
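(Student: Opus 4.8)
The statement $\CE(\mathcal{O}\EXP{e}k) = \SEM{e}$ asserts that the optimizer preserves the denotation of every expression, for every fuel value $k$. The plan is to prove this by induction, and the first decision is the induction measure. Since the optimizer recurses on $k{-}1$ in the inlining case (after performing a substitution $e[v_2/x]$ that may \emph{increase} the structural size of the expression), a plain structural induction on $e$ will not close. The clean choice is a well-founded induction on the pair $(k, e)$ ordered lexicographically with $k$ first: every recursive call either strictly decreases $k$ (the inlining branch) or keeps $k$ fixed while recursing on a strict subexpression (all other branches). I would state the lemma for all $\rho$ simultaneously, i.e. prove $\SEM{\mathcal{O}\EXP{e}k}\rho = \SEM{e}\rho$ for all $\rho$, so that the environment is available in the inductive hypotheses.

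Next I would dispatch the easy cases. Variables and integers are immediate from the definition of $\mathcal{O}$. For $e_1 \oplus e_2$ and for $\lam{x} e$, the result follows from the induction hypotheses on the strict subexpressions (here $k$ is unchanged, so the subexpression ordering suffices) together with the compositional clauses of $\CE$ in Figure~\ref{fig:denot-function}; in the arithmetic case one checks both branches of the \textbf{case}, using that $\SEM{n}\rho = \{n\}$ to see that when the recursive results are literals $n_1, n_2$ the folded value $n_1 \oplus n_2$ has the same meaning as the unfolded application of $\oplus$. For the application $e_1\app e_2$, the non-inlining branch again follows directly from the two induction hypotheses (same $k$, strict subexpressions) and the application clause of $\CE$.

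The main obstacle is the inlining branch: $\mathcal{O}\EXP{e_1}k = \lam{x} e$, $\mathcal{O}\EXP{e_2}k = v_2$, $k \geq 1$, and $\mathcal{O}\EXP{e_1\app e_2}k = \mathcal{O}\EXP{e[v_2/x]}(k{-}1)$. I must show $\SEM{\mathcal{O}\EXP{e[v_2/x]}(k{-}1)}\rho = \SEM{e_1\app e_2}\rho$. First, the induction hypothesis applies to $e[v_2/x]$ at fuel $k{-}1$ (justified by the lexicographic order, since $k{-}1 < k$), giving $\SEM{\mathcal{O}\EXP{e[v_2/x]}(k{-}1)}\rho = \SEM{e[v_2/x]}\rho$. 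So it remains to prove the purely denotational fact that $\SEM{e[v_2/x]}\rho = \SEM{(\lam{x} e)\app v_2}\rho$. This is precisely $\beta$-equivalence for the semantics, and since $\mathcal{O}\EXP{e_2}k = v_2$ is a syntactic value, I can invoke the reduction/expansion machinery already established: $(\lam{x} e)\app v_2 \longrightarrow e[v_2/x]$, so Corollary~\ref{lem:meaning-invarient-reduction} (meaning invariant under reduction) gives $\SEM{(\lam{x} e)\app v_2}\rho = \SEM{e[v_2/x]}\rho$.

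The last gap to close is relating the meaning of $e_1\app e_2$ to that of $(\lam{x} e)\app v_2$: the optimizer guarantees $\mathcal{O}\EXP{e_1}k = \lam{x} e$ and $\mathcal{O}\EXP{e_2}k = v_2$, and the induction hypotheses (at the \emph{same} fuel $k$, on the strict subexpressions $e_1, e_2$) give $\SEM{e_1}\rho = \SEM{\lam{x} e}\rho$ and $\SEM{e_2}\rho = \SEM{v_2}\rho$. By congruence (Lemma~\ref{lem:congruence}), equal denotations of the parts give equal denotations of the application, so $\SEM{e_1\app e_2}\rho = \SEM{(\lam{x} e)\app v_2}\rho$. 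Chaining the three equalities completes the branch. I anticipate the one subtle point is making the lexicographic measure explicit enough that the mechanized proof accepts the recursive call on the substituted term despite its possibly larger size; everything else reduces to the compositional clauses of $\CE$ plus the already-proved invariance of meaning under $\beta$-reduction.
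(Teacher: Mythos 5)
Your proposal is correct and takes essentially the same approach as the paper: the paper's proof is precisely a well-founded induction on the lexicographic ordering of $k$ then the size of $e$, with every case discharged by invariance of meaning under reduction (Corollary~\ref{lem:meaning-invarient-reduction}) and congruence of the semantics (Lemma~\ref{lem:congruence}). Your write-up is simply an expanded version of that argument, spelling out the chain of equalities $\SEM{\mathcal{O}\EXP{e_1\app e_2}k} = \SEM{e[v_2/x]} = \SEM{(\lam{x}e)\app v_2} = \SEM{e_1 \app e_2}$ in the inlining case.
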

\begin{proof}
  The proof is by induction on the termination metric for
  \(\mathcal{O}\), which is the lexicographic ordering of \(k\) then
  the size of \(e\). All the cases are straightforward to prove
  because reduction preserves meaning
  (Lemma~\ref{lem:meaning-invarient-reduction}) and because meaning is
  a congruence (Lemma~\ref{lem:congruence}). 
\end{proof}

\noindent The mechanized proof of Lemma~\ref{lem:O2-correct-aux} is
under 30 lines!

\begin{theorem}[Correctness of the Optimizer]
  \label{thm:O2-correct}
  $e \simeq \mathcal{O}\EXP{e}k$
\end{theorem}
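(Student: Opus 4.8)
The plan is to derive the final theorem, $e \simeq \mathcal{O}\EXP{e}k$, as an essentially immediate corollary of the two main results already established: Lemma~\ref{lem:O2-correct-aux} (the optimizer preserves denotations, $\SEM{\mathcal{O}\EXP{e}k} = \SEM{e}$) and Theorem~\ref{thm:denot-sound-wrt-ctx-equiv} (denotational equality implies contextual equivalence). The strategy is to compose these: semantic soundness converts the denotational equality supplied by the optimizer-correctness lemma directly into the contextual equivalence we want.

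First I would invoke Lemma~\ref{lem:O2-correct-aux} to obtain $\SEM{e} = \SEM{\mathcal{O}\EXP{e}k}$. This is the semantic heart of the argument, and all the real work — the induction on the termination metric, the use of reverse-reduction invariance (Lemma~\ref{lem:meaning-invarient-reduction}) to justify inlining via $\beta$-substitution, and the congruence property (Lemma~\ref{lem:congruence}) to push the equality through the sub-expression structure — has already been discharged there. So this step is a citation, not a calculation.

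Next I would feed this denotational equality into Theorem~\ref{thm:denot-sound-wrt-ctx-equiv}, which states that $\SEM{e} = \SEM{e'}$ implies $C[e]\Downarrow$ iff $C[e']\Downarrow$ for every closing context $C$. Instantiating $e'$ with $\mathcal{O}\EXP{e}k$ yields exactly the defining condition of contextual equivalence $e \simeq \mathcal{O}\EXP{e}k$, namely that for all closing contexts $C$ we have $C[e]\Downarrow$ iff $C[\mathcal{O}\EXP{e}k]\Downarrow$. The only bookkeeping is to match the quantification in the definition of $\simeq$ (which ranges over all $C$ with $FV(C[e]) = FV(C[\mathcal{O}\EXP{e}k]) = \emptyset$) against the ``closing context'' hypothesis of the theorem; here one must observe that $\mathcal{O}$ does not introduce free variables, so $C$ closes one side exactly when it closes the other and the two free-variable conditions coincide.

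The main obstacle, such as it is, is not in this final proof but lies upstream: the real content is concentrated in Lemma~\ref{lem:O2-correct-aux}, where reverse-reduction invariance must be available in the form $\SEM{(\lam{x}e)\app v} = \SEM{e[v/x]}$ so that the inlining case preserves meaning, and where the termination metric (lexicographic on $k$ then the size of $e$) must genuinely decrease on the inlining recursion — the counter $k$ drops even as the substituted expression may grow. Once those are granted, as they are by the cited results, the theorem itself requires no induction and amounts to chaining the two established facts; I would therefore expect its mechanized proof to be only a line or two.
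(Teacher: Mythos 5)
Your proposal is correct and takes essentially the same approach as the paper, whose proof likewise follows immediately by combining Lemma~\ref{lem:O2-correct-aux} with Theorem~\ref{thm:denot-sound-wrt-ctx-equiv}. Your additional remark that $\mathcal{O}$ introduces no free variables, so the closing-context conditions on both sides coincide, is a minor piece of bookkeeping the paper leaves implicit.
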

\begin{proof}
  The proof follows immediately from the above
  Lemma~\ref{lem:O2-correct-aux} and soundness with respect to
  contextual equivalence
  (Theorem~\ref{thm:denot-sound-wrt-ctx-equiv}).
\end{proof}

\section{Elementary Semantics in a Non-determinism Monad}
\label{sec:non-determinism-monad}

% figure placement

Toward making the elementary semantics easier to scale to larger
languages, we show how to hide the set-valued aspect of the denotation
function $\SEM{e}\rho$ behind a non-determinism
monad~\citep{wadler92:_essence}.  Recall that a non-deterministic
computation returns not just one result, but a set of results.
%% The monad
%% will provide non-deterministic choice and backtracking. To give
%% meaning to a $\lambda$ abstraction, we shall non-deterministically
%% choose a finite list of values for its domain and then recursively
%% interpret the body to obtain the corresponding output values. To give
%% meaning to function application, we perform table lookup; if there is
%% a match for the argument value, then we return the output value from
%% the table. If there is not a match, then the interpreter backtracks.
%
Non-deterministic choice is provided by the $\mathsf{bind}$ operation.
It chooses, one at a time, an element $a$ from a sub-computation $m$,
and proceeds with another sub-computation $f$ that depends on $a$,
then collect all the results into a set. The following is the
definition of $\mathsf{bind}$ together with short-hand notation.
\begin{align*}
  \mathsf{bind} & : \SET{\alpha} \to
        (\alpha \to \SET{\beta}) \to \SET{\beta} \\[-0.5ex]
  \mathsf{bind}\;m\;f &=
      \{ b \mid \exists a. \, a \in m \land b \in f(a) \} \\[-0.5ex]
  X \leftarrow m_1; m_2 &\equiv
         \mathsf{bind}\;m_1\; (\lambda X.\, m_2)
\end{align*}
Backtracking is provided by the way the $\mathsf{zero}$ operation
interacts with $\mathsf{bind}$. The $\mathsf{zero}$ operation simply
says to ``fail'' or ``abort'' by returning an empty set.
\[
  \mathsf{zero} : \SET{\alpha} \qquad
  \mathsf{zero} = \emptyset
\]
As usual, the monad also provides a $\mathsf{return}$ operation to
inject a result into the monad, in this case producing a (singleton)
set.
\[
  \mathsf{return} : \alpha \to \SET{\alpha} \qquad
  \mathsf{return}\,a = \{ a \} 
\]

With these monad operations in hand, we create some auxiliary
functions that are needed for the interpreter.  The following
$\mathsf{mapM}$ function applies a function $f$ to each element of a
finite set, producing a finite set, all within the non-determinism
monad. (We write $\uplus$ for the union of two sets that have no
elements in common.)
\begin{align*}
  \mathsf{mapM} &: \FSET{\alpha} \to(\alpha\to\SET{\beta}) \to
        \SET{\FSET{\beta}}\\[-0.5ex]
  \mathsf{mapM}\;\emptyset\;f  &= \return \emptyset \\[-0.5ex]
  \mathsf{mapM}\;(\{ a \} \uplus as)\;f  &=
    b \leftarrow f(a); bs \leftarrow \mathsf{mapM}\;as\;f; \return \{b\}\cup bs
\end{align*}
The next auxiliary function makes sure that the semantics is downward
closed with respect to $\sqsubseteq$. The function $\mathsf{down}\,d$
chooses an arbitrary element $d'$ and returns it if $d' \sqsubseteq d$
and otherwise backtracks to pick another element.
\begin{align*}
  \down &: \mathbb{D} \to \SET{\mathbb{D}} \\
  \down d &= d' \leftarrow \mathbb{D}; 
       \IF d' \sqsubseteq d \THEN \return d' \ELSE \zero
\end{align*}

The non-deterministic interpreter for the CBV $\lambda$-calculus is
defined in Figure~\ref{fig:denot-interp}. Let us focus on the cases
for $\lambda$ abstractions and function application. The cases for
integers, arithmetic, and conditionals look just as they would for any
interpreter in monadic style.
For $\SEM{\lam{x}e}\rho$, we non-deterministically choose a finite
domain $ds$ and then map over it to produce the function's table.  For
each input $d$, we interpret the body $e$ in an environment extended
with $x$ bound to $d$. This produces the output $d'$, which we pair
with $d$ to form one entry in the table.
For application $\SEM{e_1\app e_2}\rho$, we interpret $e_1$ and $e_2$
to the elements $d_1$ and $d_2$, then check whether $d_1$ is a
function table. If it is, we non-deterministically select an entry
$(d,d')$ from the table and see if the argument $d_2$ matches the
parameter $d$.  If so we return $\down d'$, otherwise we backtrack and
try another entry in the table or possibly backtrack even further and
try another table altogether!

\begin{figure}[tbp]
\hfill\fbox{$\SEM{e}\rho$}
\begin{align*}
\SEM{ n }\rho &= \mathsf{return}\,n \\
\SEM{e_1 \oplus e_2}\rho &= 
  \begin{array}[t]{l}
  d_1 \leftarrow \SEM{e_1}\rho;\;
  d_2 \leftarrow \SEM{e_2}\rho;\\
  \begin{array}[t]{l}
  \CASE (d_1,d_2) \OF
  \;  (n_1,n_2) \Rightarrow \mathsf{return}\, n_1 \oplus n_2 \;
  |\; \_ \Rightarrow \zero
  \end{array}
  \end{array} \\
\SEM{ x }\rho &= \down \rho(x) \\
\SEM{ \lam{x} e }\rho &=
    ds \leftarrow \FSET{\mathbb{D}};
    \mathsf{mapM}\;ds\;(\lam{d}
      d' \leftarrow \SEM{e}\rho(x\by d); \mathsf{return}\,(d,d')) \\
\SEM{ e_1\; e_2 } \rho &=
  \begin{array}[t]{l}
  d_1 \leftarrow \SEM{e_1}\rho;\;
  d_2 \leftarrow \SEM{e_2}\rho;\\
  \CASE d_1 \OF \\
   \;\, t \Rightarrow (d,d') \leftarrow t; \;
   \IF d \sqsubseteq d_2 \THEN \down d' \ELSE \zero  \\
  |\; \_ \Rightarrow \zero
  \end{array} \\
  \SEM{ \IF e_1 \THEN e_2 \ELSE e_3}\rho &=
\begin{array}[t]{l}
d_1 \leftarrow \SEM{e_1}\rho;\\
\CASE d_1 \OF\\
 \;\,n \Rightarrow
   \IF n \neq 0 \THEN \SEM{e_2}\rho \ELSE \SEM{e_3}\\
\mid \_ \Rightarrow \zero
\end{array}
\end{align*}
\caption{Elementary semantics for CBV $\lambda$-calculus as
  a non-deterministic interpreter.}
\label{fig:denot-interp}
\end{figure}

%% The non-deterministic interpreter of Figure~\ref{fig:denot-interp} is
%% equivalent to the other views of the declarative semantics, though we
%% leave that as an exercise to the reader.

\section{Elementary Semantics and Soundness for System F}
\label{sec:cannot-go-wrong}

This section serves two purposes: it demonstrates that our elementary
semantics is straightforward to extend to a typed language with
first-class polymorphism, and it demonstrates how to use the
elementary semantics to prove type soundness. Our setting is the
polymorphic $\lambda$-calculus (System F)~\citep{GIRARD72,REYNOLDS74C}
extended with general recursion ($\mathtt{fix}$). The idea is that
this section is a generalization of Milner's ``well-typed expressions
do not go wrong''~\citep{Milner:1978kh}.

\subsection{Static Semantics}

The syntax and type system of this language is defined in
Figure~\ref{fig:system-f}. We represent type variables using DeBruijn
indices, with $\forall$ and $\Lambda$ acting as implicit binding
forms. The shift operation $\shift{k}{c}{A}$ increases the DeBruijn
indices greater or equal to $c$ in type $A$ by $k$ and the
substitution operation $[i\mapsto B]A$ replaces DeBruijn index $i$
within type $A$ with type $B$~\citep{Pierce:2002hj}.

\begin{figure}[tbp]
\[
\begin{array}{rcll}
  i,j  & \in & \mathbb{N} & \text{DeBruijn indices}\\
  A,B,C & ::= & \INT \mid A \to B \mid \forall\,A \mid i
   \qquad & \text{types} \\
 e & ::= & n \mid e \oplus e \mid x \mid \lam{x\of A} e \mid e\; e \mid \Lambda e
 \mid e [A] \mid \fix{x\of A} e \qquad & \text{expressions} 
\end{array}
\]
\begin{gather*}
  \Gamma \vdash n : \INT 
  \qquad
  \frac{}
       {\Gamma \vdash x : \mathsf{lookup}(\Gamma,x)}
   \\[2ex]
   \frac{\ext{\Gamma}{x}{A} \vdash e : B}
        {\Gamma \vdash \lam{x\of A} e : A \to B}
   \quad
   \frac{\ext{\Gamma}{x}{A\to B} \vdash e : A \to B }
        {\Gamma \vdash \fix{x\of A{\to}B} e : A \to B}  
   \quad
   \frac{\Gamma \vdash e : A \to B \quad
         \Gamma \vdash e' : A}
        {\Gamma \vdash e \; e' : B}
   \\[2ex]
   \frac{\tyext{\Gamma} \vdash e : A}
        {\Gamma \vdash \Lambda e : \forall A}
   \qquad
   \frac{\Gamma \vdash e : \forall A}
        {\Gamma \vdash e[B] : [0\mapsto B]A}
\end{gather*}

\caption{Syntax and type system of System F extended with general
  recursion (via \texttt{fix}).}
\label{fig:system-f}
\end{figure}

Type environments and their operations deserve some explanation in the
way they handle type variables. A \emph{type environment} $\Gamma$ is
a pair consisting of 1) a mapping for term variables and 2) a natural
number representing the number of enclosing type-variable binders. The
mapping is from variables names to a pair of a) the variable's type
and b) the number of enclosing type-variable binders at the variable's
point of definition. We define the operations $\mathsf{extend}$,
$\mathsf{tyExtend}$, and $\mathsf{lookup}$ in the following way.
\begin{align*}
  \ext{\Gamma}{x}{A} &\equiv (\fst{\Gamma}(x\by (A,\snd{\Gamma}), \; \snd{\Gamma}) \\
  \tyext{\Gamma} & \equiv (\fst{\Gamma}, \; \snd{\Gamma} + 1) \\
  \lookup{\Gamma}{x} & \equiv \shift{k-j}{0}{A}
  \qquad \text{where}\; (A,j) = \fst{\Gamma}(x) 
     \text{ and } k = \snd{\Gamma}
\end{align*}
Thus, the $\mathsf{lookup}$ operation properly transports the type of
a variable from its point of definition to its occurrence by
shifting its type variables the appropriate amount.  We say that a
type environment $\Gamma$ is \emph{well-formed} if
$\snd{\fst{\Gamma}(x)} \leq \snd{\Gamma}$ for all $x \in
\text{dom}(\fst{\Gamma})$.

\subsection{Denotational Semantics}

The domain differs from that of the $\lambda$-calculus in two
respects. First, we add a $\mathsf{wrong}$ element to represent a
runtime type error, just like \citet{Milner:1978kh}.  Second, we add
an element to represent type abstraction. From a runtime point of
view, a type abstraction $\Lambda e$ simply produces a thunk. That is,
it delays the execution of expression $e$ until the point of type
application (aka. instantiation). A thunk contains an optional
element: if the expression $e$ does not terminate, then the thunk
contains $\mathsf{none}$, otherwise the thunk contains the result of
$e$.
\[
\begin{array}{rcll}
  o & ::= & \mathsf{none} \mid \mathsf{some}(d) & \text{optional elements} \\
  t & ::= & \{ (d_1,d'_1),\ldots,(d_n,d'_n) \} & \text{tables}\\
  d & ::= & n \mid t \mid \abs{o} \mid \mathsf{wrong}
  \qquad & \text{elements}
\end{array}
\]

Building on the non-determinism monad
(Section~\ref{sec:non-determinism-monad}), we add support for the
short-circuiting due to errors with the following alternative form of
bind.
\[
  X := E_1; E_2 \equiv
  X \leftarrow E_1;
    \mathbf{if} \, X = \mathsf{wrong} \,\mathbf{then}\, \mathsf{return}\,\mathsf{wrong}
    \;\mathbf{else}\; E_2
\]

We define an auxiliary function $\mathsf{apply}$ to give the semantics
of function application.  The main difference with respect to
Figure~\ref{fig:denot-interp} is returning $\wrong$ when $e_1$
produces a result that is not a function table. We also replace uses of
$\leftarrow$ with $:=$ to short-circuit the computation in case
$\wrong$ is produced by a sub-computation.
\begin{align*}
  \mathsf{apply}(D_1,D_2) &= 
  \begin{array}[t]{l}
  d_1 := D_1;\;
  d_2 := D_2;\\
  \CASE d_1 \OF
   \; t \Rightarrow (d,d') \leftarrow t; \;
   \IF d \sqsubseteq d_2 \THEN \mathsf{down}\,d' \ELSE \zero  \\
  |\; \_ \Rightarrow \return \mathsf{wrong}  
  \end{array} 
\end{align*}

An elementary semantics for System F extended with \texttt{fix} is
defined in Figure~\ref{fig:denot-system-f}. Regarding integers and
arithmetic, the only difference is that arithmetic operations return
$\wrong$ when an input in not an integer. Regarding $\lambda$
abstraction, the semantics is the same as for the untyped
$\lambda$-calculus (Figure~\ref{fig:denot-interp}).

To give meaning to $(\fix{x\of A}e)$ we form its ascending Kleene
chain but never take its supremum. In other words, we define a
auxiliary function $\mathsf{iterate}$ that starts with no tables
($\emptyset$) and then iteratively feeds the function to itself some
finite number of times, produces ever-larger sets of tables that
better approximate the function. The parameter $L$ is the meaning
function $\mathcal{E}$ partially applied to the expression $e$. We
fully apply $L$ inside of $\mathsf{iterate}$, binding $x$ to the
previous approximations.

The declarative semantics of polymorphism is straightforward.  The
meaning of a type abstraction $\Lambda e$ is $\abs{\mathsf{some}(d)}$
if $e$ evaluates to $d$. The meaning of $\Lambda e$ is
$\abs{\mathsf{none}}$ if $e$ diverges. The meaning of a type
application $e[A]$ is to force the thunk, that is, it is any element
below $d$ if $e$ evaluates to $\abs{\mathsf{some}(d)}$. On the other
hand, if $e$ evaluates to $\abs{\mathsf{none}}$, then the type
application diverges. Finally, if $e$ evaluates to something other
than a thunk, the result is $\wrong$.

\begin{figure}[tbp]
\begin{align*}
\SEM{ n }\rho &= \mathsf{return}\,n \\
\SEM{e_1 \oplus e_2}\rho &= 
  \begin{array}[t]{l}
  d_1 := \SEM{e_1}\rho;\;
  d_2 := \SEM{e_2}\rho;\\
  \CASE (d_1,d_2) \OF
    (n_1,n_2) \Rightarrow \mathsf{return}\, n_1 \oplus n_2 \;
  |\; \_ \Rightarrow \mathsf{return}\, \mathsf{wrong}  
  \end{array} \\
\SEM{ x }\rho &= \mathsf{down}\,\rho(x) \\
\SEM{ \lam{x\of A} e }\rho &=
    ds \leftarrow \FSET{\mathbb{D}};
    \mathsf{mapM}\;ds\;(\lam{d}
      d' := \SEM{e}\rho(x\by d); \mathsf{return}\,(d,d')) \\   
\SEM{ e_1\; e_2 } \rho &=
   \mathsf{apply}(\SEM{ e_1 }\rho, 
          \SEM{ e_2 }\rho) \\
\SEM{ \fix{x\of A}e } \rho &=
        k \leftarrow \mathbb{N}; \mathsf{iterate}(k, x, \SEM{ e }, \rho) \\
  & \quad \textbf{where}
      \begin{array}[t]{l}
        \mathsf{iterate}(0, x, L, \rho) = \zero \\
        \mathsf{iterate}(k+1, x, L, \rho) = 
            d := \mathsf{iterate}(k, x, L, \rho);
            L \, \rho(x\by d)
      \end{array}\\
\SEM{ \Lambda e } \rho &=
\begin{array}[t]{l}
 \IF \SEM{ e }\rho = \emptyset \THEN \return \abs{\mathsf{none}} \\
 \ELSE d := \SEM{ e } \rho; \return \abs{\mathsf{some}(d)}
\end{array} \\
\SEM{ e[A] } \rho &=
    x := \SEM{ e } \rho;
  \begin{array}[t]{l}
   \CASE x \OF \\
    \;\; \abs{\mathsf{none}} \Rightarrow \zero \\
    \mid \abs{\mathsf{some}(d)} \Rightarrow \mathsf{down}\,d \\
    \mid \_ \Rightarrow \return \mathsf{wrong}
  \end{array}
\end{align*}

\caption{An elementary semantics for System F with general recursion.}
  \label{fig:denot-system-f}
\end{figure}

One strength of this elementary semantics is that it enables the use
of monads to implicitly handle error propagation, which is important
for scaling up to large language
specifications~\citep{Chargueraud:2013aa,Owens:2016aa,Amin:2017aa}.

\subsection{Semantics of Types}

We define types in the domain $\SET{\mathbb{D}}$ with the meaning
function $\mathcal{T}$ defined in Figure~\ref{fig:system-f-types}.
The meaning of $\INT$ is the integers.  To handle type variables,
$\mathcal{T}$ has a second parameter $\eta$ that maps each DeBruijn
index to a set of elements, i.e., the meaning of a type variable. So
the meaning of index $i$ is basically $\eta_i$.  The
$\mathsf{cleanup}$ function serves to make sure that $\mathcal{T}$ is
downward closed and that it does not include $\wrong$. (Putting the
use of $\mathsf{cleanup}$ in $\TSEM{i}$ instead of $\TSEM{\forall A}$
makes for a slightly simpler definition of well-typed environments,
which is defined below.)  We write $|\eta|$ for the length of a
sequence $\eta$.  The meaning of a function type $A \to B$ is all the
finite tables $t$ in which those entries with input in $\TSEM{A}$ have
output in $\TSEM{B}$.

Last but not least, the meaning of a universal type $\forall A$
includes $\abs{\mathsf{none}}$ but also $\abs{\mathsf{some}(d)}$
whenever $d$ is in the meaning of $A$ but with $\eta$ extended with an
arbitrary set of elements.

\begin{figure}[tbp]
\begin{align*}
\TSEM{ \INT } \eta &= \mathbb{Z} \\
\TSEM{ i } \eta &= 
  \begin{cases}
      \mathsf{cleanup}(\eta_i) &\text{if } i < |\eta| \\
      \emptyset & \text{otherwise}
  \end{cases} \\
\TSEM{ A\to B } \eta &= 
  \{ t\mid 
    \forall (d_1,d'_2) \in t.\,
        d_1 \in \TSEM{A}\eta 
        \implies 
        \exists d_2.\, d_2 \in \TSEM{B}\eta \land
            d'_2 \sqsubseteq d_2
       \} \\
\TSEM{ \forall A } \eta &= 
  \{ \abs{\mathsf{some}(d)} \mid 
         \forall D.\, d \in \TSEM{A} D\eta \}
  \cup \{ \abs{\mathsf{none}} \} 
\end{align*}

\[
\mathsf{cleanup}(D) \equiv
 \{ d \mid \exists d'.\, d' \in D \land d \sqsubseteq d' \land d \neq \mathsf{wrong} \}
\]

\caption{A semantics of types for System F.}
\label{fig:system-f-types}
\end{figure}

We write $\vdash \rho,\eta : \Gamma$ to say that environments $\rho$
and $\eta$ are well-typed according to $\Gamma$ and define it
inductively as follows.
\begin{gather*}
 \vdash \emptyset,\emptyset : \emptyenv %(\emptyset, 0)
 \quad
  \inference{\vdash \rho,\eta : \Gamma \quad v \in \TSEM{ A } \eta}
       {\vdash \rho(x\by v),\eta : \ext{\Gamma}{x}{A} }
 \quad
  \inference{\vdash \rho,\eta : \Gamma}
       {\vdash \rho,V\eta : \tyext{\Gamma} }
\end{gather*}

\subsection{Type Soundness}
\label{sec:system-f-type-soundness}

We prove type soundness for System F, that is, if a program is
well-typed and has type $A$, then its meaning (in a well-typed
environment) is a subset of the meaning of its type $A$.

\begin{restatable}[Semantic Soundness]{thm}{cannotgowrong}\ \\
\label{thm:welltyped-dont-go-wrong}
If $~\Gamma \vdash e : A$ and $\vdash \rho,\eta : \Gamma$,
then
$\SEM{e} \rho \subseteq \TSEM{ A } \eta$.
\end{restatable}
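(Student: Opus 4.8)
The plan is to prove the statement by structural induction on the typing derivation $\Gamma \vdash e : A$, carrying along the hypothesis $\vdash \rho,\eta : \Gamma$. This is the standard ``semantic type soundness'' or fundamental-lemma argument: I show that each typing rule is sound with respect to the semantics of types $\mathcal{T}$. The goal in each case is to take an arbitrary element $d \in \SEM{e}\rho$ and show $d \in \TSEM{A}\eta$. Several cases are routine: for $n : \INT$ the meaning is the singleton $\{n\} \subseteq \mathbb{Z} = \TSEM{\INT}\eta$; for arithmetic $e_1 \oplus e_2$, the inductive hypotheses force both operands into $\mathbb{Z}$, so the $\wrong$ branch is never taken and the result is an integer. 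For a variable $x$, the result element comes from $\down\rho(x)$, and I would use the well-typedness of the environment together with the fact (built into $\TSEM{i}$ via $\mathsf{cleanup}$, and into $\TSEM{A\to B}$, etc.) that the type meanings are downward closed under $\sqsubseteq$ to land the element in $\lookup{\Gamma}{x}$ --- here the DeBruijn shifting in $\mathsf{lookup}$ will require a separate shift-invariance lemma relating $\TSEM{\shift{k}{c}{A}}\eta$ to $\TSEM{A}$ under a correspondingly shifted $\eta$.

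For the $\lambda$-abstraction case, I take a table $t \in \SEM{\lam{x\of A}e}\rho$; by the semantics each entry $(d_1,d_1') \in t$ satisfies $d_1' \in \SEM{e}\rho(x\by d_1)$. To show $t \in \TSEM{A\to B}\eta$ I only need to check entries whose input $d_1$ lies in $\TSEM{A}\eta$; for such an entry, extending well-typedness via $\vdash \rho(x\by d_1),\eta : \ext{\Gamma}{x}{A}$ lets me apply the inductive hypothesis on the body to get $d_1' \in \TSEM{B}\eta$, which suffices (taking $d_2 = d_1'$ and reflexivity of $\sqsubseteq$). The application case uses $\mathsf{apply}$: from $d_1 \in \TSEM{A\to B}\eta$ and $d_2 \in \TSEM{A}\eta$, the semantics selects an entry $(d,d')$ of the table $d_1$ with $d \sqsubseteq d_2$, and I would invoke downward closure of $\TSEM{A}\eta$ to get $d \in \TSEM{A}\eta$, then the defining property of $\TSEM{A\to B}\eta$ to produce an output in $\TSEM{B}\eta$ above $\down d'$. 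The type-abstraction and type-application cases thread through the thunk machinery: for $\Lambda e : \forall A$, a result is either $\abs{\mathsf{none}}$ (always in $\TSEM{\forall A}\eta$) or $\abs{\mathsf{some}(d)}$ with $d \in \SEM{e}\rho$, and the rule $\vdash \rho,V\eta : \tyext{\Gamma}$ lets me apply the inductive hypothesis for \emph{every} choice of $V$, exactly matching the universally-quantified $D$ in $\TSEM{\forall A}\eta$. For $e[B] : [0\mapsto B]A$ I force the thunk and must relate $\TSEM{A}(D\eta)$ with the right instantiating set $D$ to $\TSEM{[0\mapsto B]A}\eta$, which is a type-substitution lemma.

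The case I expect to be the genuine obstacle is $\fix{x\of A{\to}B}e$. Its meaning is a union over $k \in \mathbb{N}$ of the finite iterates $\mathsf{iterate}(k, x, \SEM{e}, \rho)$, and I must show every element produced by every iterate lies in $\TSEM{A\to B}\eta$. The natural approach is an inner induction on $k$: the base case $\mathsf{iterate}(0,\dots) = \zero = \emptyset$ is vacuous, and the step case binds $d := \mathsf{iterate}(k,\dots)$ and evaluates $\SEM{e}\rho(x\by d)$. To apply the outer inductive hypothesis on $e$ I need $\vdash \rho(x\by d),\eta : \ext{\Gamma}{x}{A\to B}$, which requires $d \in \TSEM{A\to B}\eta$ --- precisely what the inner induction on $k$ supplies. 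So the fixpoint case reduces to showing that $\TSEM{A\to B}\eta$ is closed under the one-step unfolding, which follows from the body being well-typed in the extended environment; the two anticipated technical side-lemmas (shift invariance and type substitution for $\mathcal{T}$, plus downward closure of every $\TSEM{A}\eta$) are what make the variable, application, and instantiation cases go through cleanly.
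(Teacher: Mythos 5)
Your proposal is correct and follows essentially the same route as the paper's proof: an outer induction on the typing derivation, with the variable case resting on downward closure of $\mathcal{T}$ and a shift-invariance lemma for lookup, the application case on a soundness lemma for $\mathsf{apply}$, the type-application case on a compositionality (type-substitution) lemma for $\mathcal{T}$, and the \texttt{fix} case on an inner induction on the iteration count $k$ whose inductive step is discharged by the outer induction hypothesis. These are precisely the supporting lemmas the paper isolates (Lemmas~\ref{lem:T-down-closed}, \ref{lem:shift-cons-preserves-T}, \ref{lem:lookup-wfenv}, \ref{lem:fun-app}, \ref{lem:compositionality}, and \ref{lem:iterate-sound}), so the decomposition matches in all essentials.
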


\noindent We use four small lemmas. The first corresponds to
Milner's Proposition 1~\citep{Milner:1978kh}.

\begin{lemma}[$\mathcal{T}$ is downward closed]\ \\
\label{lem:T-down-closed}
If $d \in \TSEM{ A } \eta$ and $d' \sqsubseteq d$,
then $d' \in \TSEM{ A } \eta$.
\end{lemma}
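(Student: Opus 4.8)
The plan is to proceed by structural induction on the type $A$, following the definition of $\TSEM{\cdot}\eta$ in Figure~\ref{fig:system-f-types}. For each form of type we assume $d' \sqsubseteq d$ and $d \in \TSEM{A}\eta$ and show $d' \in \TSEM{A}\eta$. First I would handle the base cases. For $A = \INT$ we have $\TSEM{\INT}\eta = \mathbb{Z}$, so $d \in \mathbb{Z}$ means $d = n$ for some integer $n$; by inspection of the ordering $\sqsubseteq$ in Figure~\ref{fig:denot-function}, the only element below an integer is that same integer, so $d' = d \in \mathbb{Z}$. For a type variable $A = i$, when $i < |\eta|$ the meaning is $\mathsf{cleanup}(\eta_i)$, which is by construction already downward closed: unfolding the definition, $d \in \mathsf{cleanup}(\eta_i)$ gives some $d'' \in \eta_i$ with $d \sqsubseteq d''$ and $d \neq \wrong$, and then from $d' \sqsubseteq d \sqsubseteq d''$ (using transitivity of $\sqsubseteq$) together with $d' \neq \wrong$ (since nothing below a non-$\wrong$ element can be $\wrong$) we get $d' \in \mathsf{cleanup}(\eta_i)$. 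The $i \geq |\eta|$ subcase is vacuous since the meaning is $\emptyset$.

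Next I would treat the function type $A \to B$. Here $\TSEM{A\to B}\eta$ is a set of tables, so $d$ is a table $t$ and $d' \sqsubseteq d$ forces $d'$ to be a table $t'$ with $t' \subseteq t$ (again by the ordering rules). To show $t' \in \TSEM{A\to B}\eta$ I must verify the defining condition for $t'$: for every $(d_1, d'_2) \in t'$ with $d_1 \in \TSEM{A}\eta$, there is some $d_2 \in \TSEM{B}\eta$ with $d'_2 \sqsubseteq d_2$. But since $t' \subseteq t$, any such entry is also in $t$, and the condition holds for $t$ by hypothesis; so it transfers immediately. This case requires no appeal to the induction hypothesis and is essentially just monotonicity of the quantified condition under taking subsets of the table.

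The main work is the universal type $A = \forall A_0$, where $\TSEM{\forall A_0}\eta = \{\abs{\mathsf{some}(d_0)} \mid \forall D.\, d_0 \in \TSEM{A_0}(D\eta)\} \cup \{\abs{\mathsf{none}}\}$. The key step is the case analysis on the shape of $d$. If $d = \abs{\mathsf{none}}$, then by the ordering rules the only element below it is itself, so $d' = \abs{\mathsf{none}}$ and we are done. If $d = \abs{\mathsf{some}(d_0)}$, I expect the subtlety to lie in how $\sqsubseteq$ relates thunks: I would need $d'$ to be of the form $\abs{\mathsf{some}(d'_0)}$ with $d'_0 \sqsubseteq d_0$, and then for each $D$ apply the induction hypothesis at type $A_0$ (with environment $D\eta$) to push $d'_0 \in \TSEM{A_0}(D\eta)$ from $d_0 \in \TSEM{A_0}(D\eta)$. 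The hard part will be pinning down exactly what $\sqsubseteq$ says about thunk elements, since Figure~\ref{fig:denot-function} only gives the ordering for integers and tables; the System~F domain extends $\mathbb{D}$ with $\abs{o}$ and $\wrong$, so the ordering must have been correspondingly extended (presumably $\abs{\mathsf{some}(d'_0)} \sqsubseteq \abs{\mathsf{some}(d_0)}$ iff $d'_0 \sqsubseteq d_0$, with $\wrong$ ordered only with respect to itself). Once that extended ordering is made explicit, the induction hypothesis closes the case cleanly.

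Overall I anticipate the induction itself is routine and the real content is bookkeeping about the $\sqsubseteq$ relation on the extended domain — in particular confirming that $\wrong$ never appears below a non-$\wrong$ element (which is what makes the $\mathsf{cleanup}$ case and the exclusion of $\wrong$ work) and that the thunk ordering is covariant in its payload (which is what makes the $\forall$ case go through).
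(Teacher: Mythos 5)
Your proposal is correct and matches the paper's proof, which is stated in one line as ``a straightforward induction on $A$'' --- your case analysis (identity ordering on integers, downward closure of $\mathsf{cleanup}$ via transitivity, subset-of-table for $A \to B$ needing no induction hypothesis, and the thunk case for $\forall A$ using the IH under the extended environment $D\eta$) is exactly the detail that slogan elides. You also correctly flag the only real bookkeeping obligation, namely that the paper's $\sqsubseteq$ must be extended to the System~F domain with $\wrong$ related only to itself and the thunk ordering covariant in its payload, which is how the mechanization handles it.
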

\begin{proof}
  A straightforward induction on $A$
\end{proof}

\begin{lemma}[$\mathsf{wrong}$ not in $\mathcal{T}$]
\label{lem:wrong-not-in-T}
For any $A$ and $\eta$,
$\mathsf{wrong} \notin \TSEM{ A } \eta$.
\end{lemma}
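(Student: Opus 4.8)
The plan is to prove this by a straightforward case analysis on the structure of the type $A$, exploiting that the domain $\mathbb{D}$ is a disjoint sum in which $\wrong$ is a distinct constructor, separate from integers $n$, tables $t$, and thunks $\abs{o}$. In each case I would observe that the elements admitted into $\TSEM{A}\eta$ all share a shape that already rules out $\wrong$, so that no inductive hypothesis is required.

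First, for $A = \INT$ we have $\TSEM{\INT}\eta = \mathbb{Z}$, which contains only integers, so $\wrong \notin \TSEM{\INT}\eta$ because $\wrong$ is not an integer. For a function type $A = B \to C$, every member of $\TSEM{B\to C}\eta$ is a table $t$ by definition, and since $\wrong$ is not a table it is excluded. For a universal type $A = \forall B$, every member is either $\abs{\mathsf{some}(d)}$ or $\abs{\mathsf{none}}$, i.e.\ a thunk, so once again $\wrong$ does not appear.

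The only case in which the environment $\eta$ could conceivably inject a $\wrong$ is the type-variable case $A = i$. Here $\TSEM{i}\eta$ is either $\emptyset$ (when $i \geq |\eta|$), which trivially excludes $\wrong$, or $\mathsf{cleanup}(\eta_i)$ (when $i < |\eta|$). I would then appeal directly to the definition of $\mathsf{cleanup}$, which carries the explicit conjunct $d \neq \wrong$; hence $\wrong \notin \mathsf{cleanup}(\eta_i)$ no matter what $\eta_i$ contains. This is precisely the reason $\mathsf{cleanup}$ was placed at the type-variable clause rather than elsewhere.

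I expect no genuine obstacle here: the argument is a direct case analysis rather than a real induction, since in each branch the syntactic shape of the admitted elements settles the matter. The one step deserving attention is the type-variable case, as it is the only place where $\wrong$ might have entered through $\eta$; there the proof rests entirely on the design of $\mathsf{cleanup}$.
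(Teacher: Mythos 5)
Your proof is correct and takes essentially the same route as the paper, whose proof is simply ``a straightforward induction on $A$'': your case analysis is exactly what that induction unfolds to, with the accurate extra observation that the induction hypothesis is never invoked because each clause of $\mathcal{T}$ admits only elements of a shape (integers, tables, thunks) that excludes $\mathsf{wrong}$, and the type-variable case is guarded by the $d \neq \mathsf{wrong}$ conjunct in $\mathsf{cleanup}$.
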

\begin{proof}
A straightforward induction on $A$.
\end{proof}

\begin{lemma}
\label{lem:wfenv-good-ctx}
If $\vdash \rho : \Gamma$, 
then $\Gamma$ is a well-formed type environment.
\end{lemma}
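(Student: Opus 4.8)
The plan is to prove this by induction on the derivation of the environment well-typing judgment, which (despite the abbreviated notation $\vdash \rho : \Gamma$ in the statement) is the three-rule relation $\vdash \rho,\eta : \Gamma$ defined just above. Notice at the outset that the judgment's semantic side-conditions---the membership $v \in \TSEM{A}\eta$ and the component $\eta$ itself---play no role whatsoever: the lemma is a purely structural fact about how $\Gamma$ is built. The goal in each case is to re-establish the well-formedness condition, namely that $\snd{\fst{\Gamma}(x)} \leq \snd{\Gamma}$ for every $x \in \text{dom}(\fst{\Gamma})$.

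For the base case $\Gamma = \emptyenv$, the term map $\fst{\Gamma}$ has empty domain, so the condition holds vacuously. For the term-extension rule, $\Gamma = \ext{\Gamma_0}{x}{A}$ and the induction hypothesis gives well-formedness of $\Gamma_0$. Unfolding $\mathsf{extend}$, the binder count is frozen, $\snd{\Gamma} = \snd{\Gamma_0}$, while the map gains the entry $x \mapsto (A, \snd{\Gamma_0})$. For the freshly bound $x$ the condition holds with equality, since $\snd{\fst{\Gamma}(x)} = \snd{\Gamma_0} = \snd{\Gamma}$; for every other variable $y$ neither $\fst{\Gamma}(y)$ nor the count changed, so the inequality is exactly the induction hypothesis. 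For the type-extension rule, $\Gamma = \tyext{\Gamma_0}$: here $\mathsf{tyExtend}$ leaves the map unchanged but increments the count, $\snd{\Gamma} = \snd{\Gamma_0} + 1$. Hence for every $x$ we have $\snd{\fst{\Gamma}(x)} = \snd{\fst{\Gamma_0}(x)} \leq \snd{\Gamma_0} \leq \snd{\Gamma_0} + 1 = \snd{\Gamma}$, the first inequality being the induction hypothesis.

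I expect no genuine obstacle: the entire argument is bookkeeping against the definitions of $\mathsf{extend}$ and $\mathsf{tyExtend}$. The one place deserving a moment of care is the term-extension case, where one must observe that $\mathsf{extend}$ records the binder count \emph{at the point of definition} as exactly the current $\snd{\Gamma_0}$, so the new entry sits at the boundary and satisfies the well-formedness inequality as an equality rather than strictly. This is precisely the invariant that later lets $\mathsf{lookup}$ shift a variable's type by the nonnegative amount $k - j$ when transporting it from its definition point to its occurrence.
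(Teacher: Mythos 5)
Your proposal is correct and matches the paper's proof, which is exactly an induction on the derivation of the environment well-typing judgment (the paper states this in one line and leaves the bookkeeping implicit). Your case analysis of the three rules, and in particular the observation that $\mathsf{extend}$ records the current binder count so the new entry satisfies the inequality with equality, is precisely the detail the paper's one-line proof elides.
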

\begin{proof}
This is proved by induction on the derivation of $\vdash \rho :
\Gamma$.
\end{proof}

\begin{lemma}
  \label{lem:lift-append-preserves-T}
  \label{lem:shift-cons-preserves-T}
  $
  \TSEM{ A } (\eta_1 \eta_3) =
  \TSEM{ \shift{|\eta_2|}{ |\eta_1|}{A} }
   (\eta_1\eta_2\eta_3)
  $
\end{lemma}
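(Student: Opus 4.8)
The plan is to prove the equation by structural induction on the type $A$, keeping the environments $\eta_1,\eta_2,\eta_3$ universally quantified throughout. The generalization over all prefixes $\eta_1$ (equivalently, over all shift cutoffs) is \emph{essential}: the case for $\forall$ reinstantiates the induction hypothesis under a longer prefix, as explained below, and this is the conceptual crux of the argument. No earlier result is needed; the lemma is a self-contained induction that only unfolds the definitions of $\mathcal{T}$, of the shift $\shift{k}{c}{\cdot}$, and of lookup in a concatenated environment.

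First I would dispatch the base cases. For $A = \INT$ the shift leaves $\INT$ unchanged and both sides equal $\mathbb{Z}$. For an index $A = i$ I would unfold the shift on indices together with lookup in a concatenation and split into three ranges. If $i < |\eta_1|$ the shift is the identity and both sides look up $(\eta_1)_i$. If $|\eta_1| \le i < |\eta_1| + |\eta_3|$ the shift sends $i$ to $i + |\eta_2|$, and both the left side, which reads $(\eta_1\eta_3)_i$, and the right side, which reads $(\eta_1\eta_2\eta_3)_{i+|\eta_2|}$, resolve to the same element $(\eta_3)_{i-|\eta_1|}$. If $i \ge |\eta_1| + |\eta_3|$ then both looked-up positions are out of range and both sides are $\emptyset$. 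In each subcase the two sets agree, so applying $\mathsf{cleanup}$ to equal sets gives equal results. This variable case, with its arithmetic bookkeeping on the cutoff and the three ranges, is the most technically fiddly part, but it is routine once the ranges are separated.

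Next I would handle the inductive cases. For $A = B \to C$ the shift distributes over $\to$ with the same cutoff $|\eta_1|$, so I would apply the induction hypotheses to $B$ and $C$ (with the same $\eta_1,\eta_2,\eta_3$); since the interpretation of a function type is determined entirely by the interpretations of $B$ and $C$, the two sets of tables coincide. For $A = \forall B$ the shift increments the cutoff, giving $\shift{|\eta_2|}{|\eta_1|+1}{B}$ under the binder. After separating the $\abs{\mathsf{none}}$ component, which is literally identical on both sides, the two $\abs{\mathsf{some}(d)}$ components reduce to a universal quantification over a fresh set $D$. Here I would invoke the induction hypothesis on $B$ but with the prefix taken to be $D\eta_1$ rather than $\eta_1$: since $|D\eta_1| = |\eta_1| + 1$, the hypothesis yields $\TSEM{B}(D\eta_1\eta_3) = \TSEM{\shift{|\eta_2|}{|\eta_1|+1}{B}}(D\eta_1\eta_2\eta_3)$ for every $D$, which is exactly what is needed for the two $\abs{\mathsf{some}(d)}$ components to match.

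The main obstacle is therefore not any single computation but the setup: stating the induction hypothesis generally enough, quantified over all prefix environments $\eta_1$, so that the $\forall$ case can re-apply it under the extended prefix $D\eta_1$ with the matching incremented cutoff. With that generalization in place, the function case follows by a direct appeal to the hypotheses and the remaining effort is only the index arithmetic described above.
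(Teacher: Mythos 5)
Your proposal is correct and matches the paper's proof, which is exactly a structural induction on $A$ (the paper states only this, leaving the details implicit). Your elaboration of the three index ranges and, in particular, the need to keep the prefix $\eta_1$ universally quantified so the $\forall$ case can reinstantiate the hypothesis with prefix $D\eta_1$ and cutoff $|\eta_1|+1$ is precisely the generalization that makes the paper's one-line induction go through.
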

\begin{proof}
This lemma is proved by induction on $A$. 
\end{proof}

Next we prove one or two lemmas for each language feature.
Regarding term variables, we show that variable lookup is sound in a
well-typed environment.

\begin{lemma}[Variable Lookup]\ \\
\label{lem:lookup-wfenv}
If \ $\vdash \rho,\eta : \Gamma$ and $x \in \text{dom}(\Gamma)$,
then $\rho(x) \in \TSEM{ \mathsf{lookup}(\Gamma,x) } \eta$
\end{lemma}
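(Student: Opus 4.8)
The plan is to proceed by induction on the derivation of $\vdash \rho,\eta : \Gamma$, following the three rules that define well-typed environments. In the base case $\vdash \emptyset,\emptyset : \emptyenv$ the domain of $\Gamma$ is empty, so the hypothesis $x \in \text{dom}(\Gamma)$ is false and the lemma holds vacuously. The whole proof hinges on tracking how the two environment operations affect the result of $\mathsf{lookup}$, so I would keep careful bookkeeping of the stored binder count $j$ versus the ambient count $k = \snd{\Gamma}$.

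In the term-variable extension case we have $\vdash \rho(y\by v),\eta : \ext{\Gamma'}{y}{B}$ built from $\vdash \rho,\eta : \Gamma'$ together with the side premise $v \in \TSEM{B}\eta$. Here I would split on whether $x = y$. When $x = y$, the stored binder count is $j = \snd{\Gamma'}$ and the ambient count is $k = \snd{\ext{\Gamma'}{y}{B}} = \snd{\Gamma'}$, so $\lookup{\ext{\Gamma'}{y}{B}}{y} = \shift{0}{0}{B} = B$ (shift by zero is the identity); since $\rho(y\by v)(y) = v$, the goal is exactly the side premise. When $x \neq y$, neither $\fst{\Gamma'}(x)$ nor $\snd{\Gamma'}$ is affected by the extension, so $\lookup{\ext{\Gamma'}{y}{B}}{x} = \lookup{\Gamma'}{x}$ and $\rho(y\by v)(x) = \rho(x)$, and the induction hypothesis applies directly (noting $x \in \text{dom}(\Gamma')$).

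The type-variable extension case is the crux. Here $\vdash \rho,V\eta : \tyext{\Gamma'}$ is built from $\vdash \rho,\eta : \Gamma'$, and the term-variable domain is unchanged, so the induction hypothesis gives $\rho(x) \in \TSEM{\lookup{\Gamma'}{x}}\eta$. Writing $(A,j) = \fst{\Gamma'}(x)$ and $k = \snd{\Gamma'}$, we have $\lookup{\Gamma'}{x} = \shift{k-j}{0}{A}$, whereas $\tyext$ increments the ambient count to $k+1$, so $\lookup{\tyext{\Gamma'}}{x} = \shift{(k+1)-j}{0}{A}$. Using well-formedness from Lemma~\ref{lem:wfenv-good-ctx} we know $j \leq k$, so these shift amounts are genuine naturals and by composition of shifts $\shift{(k+1)-j}{0}{A} = \shift{1}{0}{\shift{k-j}{0}{A}}$. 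It then remains to show $\TSEM{B}\eta = \TSEM{\shift{1}{0}{B}}(V\eta)$ for $B = \lookup{\Gamma'}{x}$, which is exactly Lemma~\ref{lem:shift-cons-preserves-T} instantiated at $\eta_1 = \emptyset$, $\eta_2 = V$, and $\eta_3 = \eta$ (so $|\eta_1| = 0$ and $|\eta_2| = 1$). Combining this equality with the induction hypothesis closes the case.

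The main obstacle is this type-variable case: one must get the De~Bruijn arithmetic right, invoke well-formedness to justify that $k-j$ is a true difference rather than a truncated subtraction (otherwise the increment-by-one collapses and the composition identity fails), and massage the shift into the precise instance of Lemma~\ref{lem:shift-cons-preserves-T}. The only auxiliary fact I would need beyond what the excerpt states explicitly is the shift-composition identity $\shift{a}{0}{\shift{b}{0}{A}} = \shift{a+b}{0}{A}$, together with $\shift{0}{0}{A} = A$, both of which are standard De~Bruijn lemmas.
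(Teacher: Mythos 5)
Your proposal is correct and follows essentially the same route as the paper: induction on the derivation of $\vdash \rho,\eta : \Gamma$, with the first two cases immediate and the type-variable-extension case handled via well-formedness (Lemma~\ref{lem:wfenv-good-ctx}) and the shift lemma (Lemma~\ref{lem:shift-cons-preserves-T}) instantiated exactly as you describe. The De~Bruijn bookkeeping you spell out (shift composition and the need for $j \leq k$ to avoid truncated subtraction) is precisely the "some work" the paper's proof sketch alludes to.
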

\begin{proof}
The proof is by induction on the derivation of $\vdash \rho,\eta :
\Gamma$. The first two cases are straightforward but the third case
requires some work and uses lemmas~\ref{lem:wfenv-good-ctx} and
\ref{lem:shift-cons-preserves-T}.
\end{proof}

\noindent The following lemma proves that function application is
sound, similar to Milner's Proposition 2.

\begin{lemma}[Application cannot go wrong]\ \\
\label{lem:fun-app}
If $D \subseteq \TSEM{ A {\to} B } \eta$
and $D' \subseteq \TSEM{ A } \eta$, 
then $\mathsf{apply}(D,D') \subseteq \TSEM{ B } \eta$.
\end{lemma}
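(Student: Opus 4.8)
The plan is to unfold the monadic definition of $\mathsf{apply}$ into an explicit characterization of its result set, and then verify membership in $\TSEM{B}\eta$ element by element, relying on downward closure of $\mathcal{T}$ and the defining property of arrow types. I fix an arbitrary $r \in \mathsf{apply}(D,D')$; the goal is $r \in \TSEM{B}\eta$.

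The first step is to rule out every branch of $\mathsf{apply}$ that can produce $\wrong$. By Lemma~\ref{lem:wrong-not-in-T} we have $\wrong \notin \TSEM{A{\to}B}\eta$ and $\wrong \notin \TSEM{A}\eta$, so the hypotheses $D \subseteq \TSEM{A{\to}B}\eta$ and $D' \subseteq \TSEM{A}\eta$ guarantee that no $d_1 \in D$ or $d_2 \in D'$ equals $\wrong$; hence the two short-circuiting branches of the $:=$ bind are unreachable. Moreover, the definition of $\TSEM{A{\to}B}\eta$ admits only tables, so every $d_1 \in D$ is some table $t$, which eliminates the default case returning $\wrong$. Consequently $r$ must arise from the table branch: there exist a table $t \in D$, an element $d_2 \in D'$, and an entry $(d,d') \in t$ with $d \sqsubseteq d_2$, such that $r \in \down d'$, i.e.\ $r \sqsubseteq d'$.

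It remains to connect $r$ to $\TSEM{B}\eta$. From $d_2 \in D' \subseteq \TSEM{A}\eta$ together with $d \sqsubseteq d_2$, downward closure (Lemma~\ref{lem:T-down-closed}) gives $d \in \TSEM{A}\eta$. Since $t \in D \subseteq \TSEM{A{\to}B}\eta$ and $(d,d') \in t$, the defining property of $\TSEM{A{\to}B}\eta$ now applies: from $d \in \TSEM{A}\eta$ we obtain some $d_2'$ with $d' \sqsubseteq d_2'$ and $d_2' \in \TSEM{B}\eta$. Chaining $r \sqsubseteq d' \sqsubseteq d_2'$ by transitivity of $\sqsubseteq$ yields $r \sqsubseteq d_2'$, and a second appeal to Lemma~\ref{lem:T-down-closed} concludes $r \in \TSEM{B}\eta$.

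I expect the main obstacle to be bookkeeping rather than conceptual: carefully desugaring the nested $:=$, $\leftarrow$, $\down$, and case forms of $\mathsf{apply}$ into the set comprehension they denote, and confirming that the three $\wrong$-producing positions (the two short-circuits and the non-table default) are all genuinely excluded by the two cited facts, namely $\wrong \notin \TSEM{\cdot}\eta$ and that members of an arrow type are tables. Once that set-level characterization is in hand, the remaining argument is a short chain of two downward-closure steps and one transitivity step. Transitivity of $\sqsubseteq$ is immediate from its definition, since the table case is set inclusion and the integer case is trivial.
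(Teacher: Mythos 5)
Your proof is correct and follows exactly the route the paper intends: the paper's proof is stated only as ``direct, using Lemmas~\ref{lem:wrong-not-in-T} and \ref{lem:T-down-closed},'' and your argument is precisely that direct unfolding—excluding the $\wrong$-producing branches via Lemma~\ref{lem:wrong-not-in-T} and the table-only form of $\TSEM{A\to B}\eta$, then chaining the arrow-type property with two applications of downward closure (Lemma~\ref{lem:T-down-closed}). No gaps; your elaboration is a faithful filling-in of the paper's sketch.
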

\begin{proof}
The proof is direct and uses lemmas~\ref{lem:wrong-not-in-T} and
\ref{lem:T-down-closed}.
\end{proof}

\noindent When it comes to polymorphism, our proof necessarily differs
considerably from Milner's, as we must deal with first-class
polymorphism. So instead of Milner's Proposition 4 (about type
substitution), we instead have a Compositionality Lemma analogous to
what you would find in a proof of
Parametricity~\citep{Skorstengaard:2015aa}. We need this lemma because
the typing rule for type application is expressed in terms of type
substitution ($e[B]$ has type $[0\mapsto B]A$) but the meaning of
$\forall A$ is expressed in terms of extending the environment $\eta$.

\begin{lemma}[Compositionality]
\label{lem:compositionality}
\[
  \TSEM{ A } (\eta_1 D \eta_2)
  = \TSEM{ [ |\eta_1| \mapsto B] A }  (\eta_1 \eta_2)
 \qquad \text{where } D = \TSEM{ B } (\eta_1\eta_2).
\]
\end{lemma}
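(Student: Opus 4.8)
The plan is to prove the Compositionality Lemma by structural induction on the type $A$, with $\eta_1, \eta_2, B$ (and hence $D = \TSEM{B}(\eta_1\eta_2)$) held fixed. The single subtle point is the bookkeeping on DeBruijn indices: the substitution $[\,|\eta_1| \mapsto B]A$ replaces exactly the index $|\eta_1|$, so as I descend under binders I must generalize the statement to track how the insertion position moves. I would therefore prove the more general claim that for all $\eta_1, \eta_2$ and all $A$,
\[
  \TSEM{A}(\eta_1 D \eta_2) = \TSEM{[\,|\eta_1| \mapsto B]A}(\eta_1\eta_2),
  \qquad D = \TSEM{B}(\eta_1\eta_2),
\]
so that the induction hypothesis applies with a longer prefix $\eta_1$ in the $\forall$ case.

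First I would dispatch the easy cases. For $A = \INT$, both sides equal $\mathbb{Z}$ and substitution is the identity, so there is nothing to do. For $A = A_1 \to A_2$, I unfold the definition of $\TSEM{A_1 \to A_2}$ as a set of tables $t$ constrained by their entries, observe that substitution commutes with the arrow ($[\,i \mapsto B](A_1 \to A_2) = ([\,i\mapsto B]A_1) \to ([\,i\mapsto B]A_2)$), and apply the induction hypothesis to $A_1$ and $A_2$ separately; since the membership condition on $t$ is built pointwise from $\TSEM{A_1}$ and $\TSEM{A_2}$, the two sets of tables coincide.

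The index case $A = i$ is where the real content lives, and it splits by comparing $i$ with $|\eta_1|$. When $i < |\eta_1|$, the inserted $D$ sits to the right of position $i$, so $\TSEM{i}(\eta_1 D \eta_2)$ reads the same slot $\eta_1$ as $\TSEM{i}(\eta_1\eta_2)$, and substitution leaves $i$ unchanged; the two agree via $\mathsf{cleanup}$. When $i = |\eta_1|$, the left side returns $\mathsf{cleanup}(D) = \mathsf{cleanup}(\TSEM{B}(\eta_1\eta_2))$, while $[\,|\eta_1|\mapsto B](|\eta_1|) = \shift{|\eta_1|}{0}{B}$, so I must show $\mathsf{cleanup}(\TSEM{B}(\eta_1\eta_2)) = \TSEM{\shift{|\eta_1|}{0}{B}}(\eta_1\eta_2)$; this is exactly Lemma~\ref{lem:shift-cons-preserves-T} (taking the prefix to be empty and inserting $\eta_1$), combined with the observation that $\TSEM{B}$ is already downward-closed and $\wrong$-free (Lemmas~\ref{lem:T-down-closed} and \ref{lem:wrong-not-in-T}), so applying $\mathsf{cleanup}$ is idempotent here. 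When $i > |\eta_1|$, substitution decrements the index to $i-1$ to account for the removed binder, and on the semantic side deleting $D$ shifts the slot accordingly, so both sides again read the same element of $\eta_2$.

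Finally, for $A = \forall A'$ I use the definition of $\TSEM{\forall A'}$, which quantifies over an arbitrary prepended set $V$: an element $\abs{\mathsf{some}(d)}$ is in the left side iff $d \in \TSEM{A'}(V\eta_1 D \eta_2)$ for all $V$. Here I apply the generalized induction hypothesis with the prefix $V\eta_1$ in place of $\eta_1$, noting that $|V\eta_1| = |\eta_1| + 1$ matches the index shift that the substitution $[\,i\mapsto B]$ performs when pushed under the $\forall$ binder (where it becomes $[\,i+1 \mapsto \shift{1}{0}{B}]$); I would also invoke Lemma~\ref{lem:shift-cons-preserves-T} to reconcile $\TSEM{B}(\eta_1\eta_2)$ with $\TSEM{\shift{1}{0}{B}}(V\eta_1\eta_2)$ so that the same $D$ is inserted at the deeper level. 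The element $\abs{\mathsf{none}}$ belongs to both sides unconditionally, closing the case. The main obstacle throughout is keeping the DeBruijn arithmetic—the interaction between the insertion position $|\eta_1|$, the shifting in $\mathsf{lookup}$/substitution, and the binder in the $\forall$ case—perfectly aligned; the semantic reasoning itself is routine once the index bookkeeping is correct.
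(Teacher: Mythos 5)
Your overall strategy is exactly the paper's: induction on $A$ with the statement generalized over the split $\eta_1,\eta_2$, so that the $\forall$ case can apply the induction hypothesis with the longer prefix $V\eta_1$ and the shifted type $\shift{1}{0}{B}$, then reconcile the inserted set via Lemma~\ref{lem:shift-cons-preserves-T}. The paper's proof says no more than this (it declares every other case straightforward), and your $\forall$ case is a faithful, more explicit rendering of it.

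However, your variable case at $i = |\eta_1|$ contains a step that fails, because it presupposes a substitution convention incompatible with the one your own $\forall$ case (and the lemma's side condition) relies on. You assert $[\,|\eta_1|\mapsto B\,](|\eta_1|) = \shift{|\eta_1|}{0}{B}$. That identity belongs to the ``shift at the leaves'' formulation of substitution, in which $B$ is read relative to $\eta_2$ alone and the side condition of the lemma would have to be $D = \TSEM{B}\eta_2$. But the lemma fixes $D = \TSEM{B}(\eta_1\eta_2)$, i.e., $B$ is already relative to the full outer context, and correspondingly the substitution (Pierce-style, as the paper cites) shifts $B$ as it descends under binders---which is precisely why your $\forall$ case pushes the substitution through as $[\,i+1 \mapsto \shift{1}{0}{B}\,]$. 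Under that convention, $[\,|\eta_1|\mapsto B\,](|\eta_1|) = B$, with no shift. Concretely, the equation you propose to close the case, $\mathsf{cleanup}(\TSEM{B}(\eta_1\eta_2)) = \TSEM{\shift{|\eta_1|}{0}{B}}(\eta_1\eta_2)$, cannot be obtained from Lemma~\ref{lem:shift-cons-preserves-T}: that lemma (empty prefix, inserting $\eta_1$) yields $\TSEM{\shift{|\eta_1|}{0}{B}}(\eta_1\eta_2) = \TSEM{B}\eta_2$, and $\TSEM{B}\eta_2 \neq \TSEM{B}(\eta_1\eta_2)$ in general when $B$ has free indices (take $B = 0$: one side reads the first slot of $\eta_1$, the other the first slot of $\eta_2$). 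The repair is simpler than what you wrote: with the correct convention the right-hand side is just $\TSEM{B}(\eta_1\eta_2) = D$, the left-hand side is $\mathsf{cleanup}(D)$, and these agree by the cleanup-idempotence argument you already give via Lemmas~\ref{lem:T-down-closed} and~\ref{lem:wrong-not-in-T}; no appeal to Lemma~\ref{lem:shift-cons-preserves-T} is needed in that case.
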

\begin{proof}
The proof is by induction on $A$.  All of the cases were
straightforward except for $A=\forall A'$.  In that case we use the
induction hypothesis to show the following and then apply
Lemma~\ref{lem:shift-cons-preserves-T}.
\[
  \TSEM{ A' } (D \eta_1 D_B \eta_2)
  {=}
 \TSEM{ ([|D\eta_1|\mapsto \shift{1}{0}{B}} A' ] (D\eta_1\eta_2)
\text{ where }
D_B {=} \TSEM{ \shift{1}{0}{B} } (D\eta_1\eta_2)
\]
\end{proof}

\noindent The last lemma proves that $\mathsf{iterate}$ produces
tables of the appropriate type.

\begin{lemma}[Iterate cannot go wrong]
\label{lem:iterate-sound}
If
\begin{itemize}
\item  $d \in \mathsf{iterate}(k, x, L,\rho)$ and
\item for any $d'$, $d' \in \TSEM{ A \to B } \eta$
 implies $L\app \rho(x\by d')\app\eta \subseteq \TSEM{ A{\to} B } \eta$,
\end{itemize}
then $d \in \TSEM{ A{\to} B } \eta$.
\end{lemma}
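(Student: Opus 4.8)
The plan is to proceed by induction on the iteration count $k$, invoking the induction hypothesis together with Lemma~\ref{lem:wrong-not-in-T} to discharge the error-propagation case and the second hypothesis of the lemma to handle the productive case. In the base case $k=0$ we have $\mathsf{iterate}(0,x,L,\rho) = \zero = \emptyset$, so the assumption $d \in \mathsf{iterate}(0,x,L,\rho)$ is vacuously false and there is nothing to show.

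For the inductive step, assume the claim for $k$ and suppose $d \in \mathsf{iterate}(k{+}1,x,L,\rho)$. I would unfold the definition
\[
  \mathsf{iterate}(k{+}1,x,L,\rho) = d_0 := \mathsf{iterate}(k,x,L,\rho);\; L\app\rho(x\by d_0),
\]
recalling that $X := E_1; E_2$ abbreviates $X \leftarrow E_1;\, \IF X=\wrong \THEN \return\wrong \ELSE E_2$. By the definition of $\bind$, the membership $d \in \mathsf{iterate}(k{+}1,x,L,\rho)$ produces an intermediate element $d_0$ with $d_0 \in \mathsf{iterate}(k,x,L,\rho)$ and either (i) $d_0 = \wrong$ and $d = \wrong$, or (ii) $d_0 \neq \wrong$ and $d \in L\app\rho(x\by d_0)$.

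Applying the induction hypothesis to $d_0 \in \mathsf{iterate}(k,x,L,\rho)$ gives $d_0 \in \TSEM{A\to B}\eta$. By Lemma~\ref{lem:wrong-not-in-T}, $\wrong \notin \TSEM{A\to B}\eta$, so $d_0 \neq \wrong$ and case (i) is impossible; only (ii) survives. In case (ii), since $d_0 \in \TSEM{A\to B}\eta$, the second hypothesis of the lemma instantiated at $d' = d_0$ yields $L\app\rho(x\by d_0)\app\eta \subseteq \TSEM{A\to B}\eta$, and from $d \in L\app\rho(x\by d_0)$ we obtain $d \in \TSEM{A\to B}\eta$, which closes the induction. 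The only real subtlety---and the main obstacle---is the short-circuiting $:=$ bind: one must unfold it carefully and observe that the $\wrong$-propagating branch is exactly what the induction hypothesis together with Lemma~\ref{lem:wrong-not-in-T} eliminate, so that no $\wrong$ can escape $\mathsf{iterate}$ at the function type $A\to B$.
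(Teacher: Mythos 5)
Your proof is correct and follows the same route as the paper, which simply states that the lemma ``is straightforward to prove by induction on $k$.'' Your write-up supplies exactly the details that induction requires---the vacuous base case, unfolding the short-circuiting bind $:=$, eliminating the $\wrong$-propagating branch via Lemma~\ref{lem:wrong-not-in-T} applied to the induction hypothesis, and closing with the second hypothesis at $d'=d_0$---so it matches the intended argument.
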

\begin{proof}
This is straightforward to prove by induction on $k$.
\end{proof}

\noindent We proceed to the main theorem, that well-typed programs
cannot go wrong.

\cannotgowrong*
\begin{proof}
  The proof is by induction on the derivation of $\Gamma \vdash e :
  A$.  The case for variables uses Lemmas~\ref{lem:T-down-closed} and
  \ref{lem:lookup-wfenv}.  The case for application uses
  Lemma~\ref{lem:fun-app}.  The case for \texttt{fix} applies
  Lemma~\ref{lem:iterate-sound}, using the induction hypothesis to
  establish the second premise of that lemma. The case for type
  application uses Lemma~\ref{lem:compositionality}.

\end{proof}

\subsection{Comparison to Syntactic Type Safety}

The predominant approach to proving type safety of a programming
language is via \emph{progress and
  preservation}~\cite{Martin-Lof:1985aa,wright94:_type_soundness,Pierce:2002hj,Harper:2012aa}
over a small-step operational semantics. Recall that the Progress
Lemma says that a well-typed program can either reduce or it is a
value, but it is never stuck, which would correspond to a runtime type
error.
%The Preservation Lemma says that reduction preserves well-typedness.
The strength of this syntactic approach is that the semantics does not
need to explicitly talk about runtime type errors, but nevertheless
they are ruled out by the Progress Lemma.

In comparison, the semantic soundness approach that we used in this
section relies on using an explicit $\wrong$ element to distinguish
between programs that diverge versus programs that encounter a runtime
type error. The downside of this approach is that the author of the
semantics could mix up $\wrong$ and $\zero$ (divergence) and the
Semantic Soundness (Theorem~\ref{thm:welltyped-dont-go-wrong}) would
still hold. However, a simple auditing of the semantics can catch this
kind of mistake. Also, we plan to investigate whether techniques such
as step-indexing~\citep{Siek:2012ac,Owens:2016aa,Amin:2017aa} could be
used to distinguish divergence from $\wrong$.

%% Big-step operational semantics also suffer from the problem of not
%% distinguishing between errors and divergence.  However, it is
%% straightforward to instrument such semantics with a step-index
%% (aka. fuel) that makes it possible to differentiate between errors and
%% divergence~\citep{Siek:2012ac,Owens:2016aa,Amin:2017aa}.  It is not
%% obvious how to apply step-indexing to the declarative semantics, but
%% nevertheless, we plan to further investigate this direction.

\subsection{From Type Soundness to Parametricity}

An exciting direction for future research is to use the elementary
semantics for proving Parametricity and using it to construct Free
Theorems, replacing the frame models in the work of
\citet{Wadler:1989fk}. The idea would be to to adapt $\mathcal{T}$,
our unary relation on $\mathbb{D}$, into $\mathcal{V}$, a binary
relation on $\mathbb{D}$. We would define the following logical
relation $\mathcal{R}$ in terms of $\mathcal{V}$ and the 
semantics $\mathcal{E}$.
\[
  \mathcal{R}\llbracket A \rrbracket \eta =
       \{ (e_1,e_2) \mid 
         \exists v_1 v_2.\; v_1 \in \SEM{e_1}\emptyset
         \land v_2 \in \SEM{e_1}\emptyset \land
           (v_1,v_2) \in \mathcal{V}\llbracket A \rrbracket\eta \}
\]
Then the Parametricity Theorem could be formulated as: 
\begin{quote}
If $\Gamma \vdash e : A$ and $\vdash \rho,\eta : \Gamma$, then $(e,e)
\in \mathcal{R}\llbracket A \rrbracket \eta$. 
\end{quote}
The proof would likely be similar to our proof of Semantic Soundness.

\section{Related Work}
\label{sec:related-work}

\paragraph{Intersection Type Systems}

The type system view of our elementary semantics
(Section~\ref{sec:denot-type-system}) is a variant of the intersection
type system invented by \citet{Coppo:1979aa} to study the untyped
$\lambda$-calculus.
%% They proved that their intersection type system
%% was equivalent to the $P\omega$ model~\citep{Scott:1976lq} and to
%% operational equivalence~\citep{James-Hiram-Morris:1968kx}.
Researchers have studied numerous properties and variations of the
intersection type
system~\citep{Hindley:1982aa,Coppo:1984aa,Coppo:1987aa,Alessi:2003aa,Alessi:2006aa}. Our
mechanized proof of completeness with respect to operational semantics
(Section~\ref{sec:complete-wrt-op-sem}) is based on a (non-mechanized)
proof by \citet{Alessi:2003aa}.

By making subtle changes to the subtyping relation it is possible to
capture alternate semantics~\citep{Alessi:2004aa} such as
call-by-value~\citep{Egidi:1992aa,Rocca:2004aa} or lazy
evaluation~\citep{Abramsky:1987aa}. \citet{Barendregt:2013aa} give a
thorough survey of these type systems. Our $\top$ type corresponds to
the type $\nu$ of \citet{Egidi:1992aa} and \citet{Alessi:2003aa}.  Our
subtyping relation is rather minimal, omitting the usual rules for
function subtyping and the distributive rule for intersections and
function types. Our study of singleton integer types within an
intersection type system appears to be a novel combination.

Intersection type systems have played a role in the full abstraction
problem for the lazy $\lambda$-calculus, in the guise of domain
logics~\citep{Abramsky:1987aa,Abramsky:1990vv,Jeffrey:1993aa}

The problem of inhabitation for intersection type systems has seen
recent progress~\citep{Dudenhefner:2017aa} and applications to
example-directed synthesis~\citep{Frankle:2016aa}.

%% \paragraph{Declarative Semantics}

%% The model-theoretic semantics for logic programming
%% languages~\citep{Emden:1976fu,Lloyd:1984aa} is declarative in the same
%% sense as our semantics is declarative, it makes nontrivial use of
%% existential quantifiers. A query (or goal) $G$ is true in program $P$
%% if there exists a substitution $S$ such that $S(G)$ is a logical
%% consequence of $P$. Unfortunately, the operational semantics for logic
%% programs is incomplete with respect to this declarative semantics.
%% \citet{Andrews:1992aa} tackled this problem by developing a
%% proof-theoretic version of the declarative semantics that better
%% matched the operational semantics.
%% %
%% Of course, the declarative semantics presented in this paper differs
%% from those for logic programming languages in that it gives meaning to
%% first-class functions.

%% \citet{Carlos:1992aa} give operational, denotational (fixpoint), and
%% declarative (model-theoretic) semantics for a conditional rewriting
%% system, that is, a kind of lazy functional language but without
%% $\lambda$ abstraction. They show that the operational and declarative
%% semantics are equivalent but that the denotational one is different.

\paragraph{Other Semantics}

There are many other approaches to programming language semantics that
we have not discussed, from axiomatic
semantics~\citep{Floyd:1969aa,Hoare:1969kw} to
games~\citep{Abramsky:2000kx,Hyland:2000ve}, event
structures~\citep{Winskel:1987fp}, and
traces~\citep{Reynolds:2004fk,Jeffrey:2005kx}.
%% The query-response way
%% of thinking about functions that we discussed in
%% Section~\ref{sec:denot-logic-program} is reminiscent of game
%% semantics.
Our function tables can be viewed as finitary versions of the tree
models for SPCF~\citep{Cartwright:1992ys,Cartwright:1994ly}, a
language with exceptions, and we are interested in seeing whether our
model might be fully abstract in that setting.

\section{Conclusions and Future Work}
\label{sec:conclusion}

In this paper we present an elementary semantics for a CBV
$\lambda$-calculus that represents a $\lambda$ abstraction with an
infinite set of finite tables.  We give a mechanized proof that this
semantics is correct with respect to the operational semantics of the
CBV $\lambda$-calculus and we present two case studies that begin to
demonstrate that this semantics is useful.  We leverage the
compositionality of our semantics in a proof of correctness for a
compiler optimization. We extend the semantics to handle parametric
polymorphism and prove type soundness, i.e., well-typed programs
cannot go wrong.

Of course, we have just scratched the surface in investigating how
well elementary semantics scales to full programming languages.  We
invite the reader to help us explore elementary semantics for mutable
state, exceptions, continuations, recursive types, dependent types,
objects, threads, shared memory, and low-level languages, to name just
a few. Regarding applications, there is plenty to try regarding proofs
of program correctness and compiler correctness. For your next
programming languages project, give elementary semantics a try!

{\tiny
\bibliographystyle{plainnat}
\bibliography{all}

\begin{thebibliography}{73}
\providecommand{\natexlab}[1]{#1}
\providecommand{\url}[1]{\texttt{#1}}
\expandafter\ifx\csname urlstyle\endcsname\relax
  \providecommand{\doi}[1]{doi: #1}\else
  \providecommand{\doi}{doi: \begingroup \urlstyle{rm}\Url}\fi

\bibitem[Abramsky(1990)]{Abramsky:1990vv}
S.~Abramsky.
\newblock The lazy lambda calculus.
\newblock In \emph{Research Topics in Functional Programming}, pages 65--116.
  Addison-Welsey, Reading, MA, 1990.
\newblock ISBN 0-201-17236-4.

\bibitem[Abramsky(1987)]{Abramsky:1987aa}
Samson Abramsky.
\newblock \emph{Domain Theory and the Logic of Observable Properties}.
\newblock PhD thesis, University of London, October 1987.

\bibitem[Abramsky et~al.(2000)Abramsky, Jagadeesan, and
  Malacaria]{Abramsky:2000kx}
Samson Abramsky, Radha Jagadeesan, and Pasquale Malacaria.
\newblock Full abstraction for pcf.
\newblock \emph{Inf. Comput.}, 163\penalty0 (2):\penalty0 409--470, 2000.
\newblock ISSN 0890-5401.

\bibitem[Aczel(1988)]{Aczel:1988aa}
Peter Aczel.
\newblock \emph{Non-well-founded Sets}.
\newblock Number~14 in CSLI Lecture Notes. Center for the Study of Language and
  Information, 1988.

\bibitem[Ahmed et~al.(2009)Ahmed, Dreyer, and Rossberg]{Ahmed:2009aa}
Amal Ahmed, Derek Dreyer, and Andreas Rossberg.
\newblock State-dependent representation independence.
\newblock In \emph{Proceedings of the 36th Annual ACM SIGPLAN-SIGACT Symposium
  on Principles of Programming Languages}, POPL '09, pages 340--353, New York,
  NY, USA, 2009. ACM.
\newblock ISBN 978-1-60558-379-2.
\newblock \doi{10.1145/1480881.1480925}.
\newblock URL \url{http://doi.acm.org/10.1145/1480881.1480925}.

\bibitem[Alessi et~al.(2003)Alessi, Barbanera, and
  Dezani-Ciancaglini]{Alessi:2003aa}
Fabio Alessi, Franco Barbanera, and Mariangiola Dezani-Ciancaglini.
\newblock Intersection types and computational rules.
\newblock \emph{Electronic Notes in Theoretical Computer Science}, 84:\penalty0
  45 -- 59, 2003.
\newblock ISSN 1571-0661.
\newblock \doi{http://dx.doi.org/10.1016/S1571-0661(04)80843-0}.
\newblock URL
  \url{http://www.sciencedirect.com/science/article/pii/S1571066104808430}.
\newblock WoLLIC'2003, 10th Workshop on Logic, Language, Information and
  Computation.

\bibitem[Alessi et~al.(2004)Alessi, Barbanera, and
  Dezani-Ciancaglini]{Alessi:2004aa}
Fabio Alessi, Franco Barbanera, and Mariangiola Dezani-Ciancaglini.
\newblock \emph{Tailoring Filter Models}, pages 17--33.
\newblock Springer Berlin Heidelberg, Berlin, Heidelberg, 2004.
\newblock ISBN 978-3-540-24849-1.
\newblock \doi{10.1007/978-3-540-24849-1_2}.
\newblock URL \url{http://dx.doi.org/10.1007/978-3-540-24849-1_2}.

\bibitem[Alessi et~al.(2006)Alessi, Barbanera, and
  Dezani-Ciancaglini]{Alessi:2006aa}
Fabio Alessi, Franco Barbanera, and Mariangiola Dezani-Ciancaglini.
\newblock Intersection types and lambda models.
\newblock \emph{Theoretical Compututer Science}, 355\penalty0 (2):\penalty0
  108--126, 2006.

\bibitem[Amin and Rompf(2017)]{Amin:2017aa}
Nada Amin and Tiark Rompf.
\newblock Type soundness proofs with definitional interpreters.
\newblock In \emph{Proceedings of the 44th ACM SIGPLAN Symposium on Principles
  of Programming Languages}, POPL 2017, pages 666--679, New York, NY, USA,
  2017. ACM.
\newblock ISBN 978-1-4503-4660-3.
\newblock \doi{10.1145/3009837.3009866}.
\newblock URL \url{http://doi.acm.org/10.1145/3009837.3009866}.

\bibitem[Banerjee(1997)]{Banerjee:1997aa}
Anindya Banerjee.
\newblock A modular, polyvariant and type-based closure analysis.
\newblock In \emph{Proceedings of the Second ACM SIGPLAN International
  Conference on Functional Programming}, ICFP '97, pages 1--10, New York, NY,
  USA, 1997. ACM.
\newblock ISBN 0-89791-918-1.
\newblock \doi{10.1145/258948.258951}.
\newblock URL \url{http://doi.acm.org/10.1145/258948.258951}.

\bibitem[Barendregt et~al.(2013)Barendregt, Dekkers, and
  Statman]{Barendregt:2013aa}
Henk Barendregt, Wil Dekkers, and Richard Statman.
\newblock \emph{Lambda Calculus with Types}.
\newblock Perspectives in Logic. Cambridge University Press, 2013.
\newblock \doi{10.1017/CBO9781139032636}.

\bibitem[Benton et~al.(2009)Benton, Kennedy, and Varming]{Benton:2009ab}
Nick Benton, Andrew Kennedy, and Carsten Varming.
\newblock \emph{Some Domain Theory and Denotational Semantics in Coq}, pages
  115--130.
\newblock Springer Berlin Heidelberg, Berlin, Heidelberg, 2009.
\newblock ISBN 978-3-642-03359-9.
\newblock \doi{10.1007/978-3-642-03359-9_10}.
\newblock URL \url{http://dx.doi.org/10.1007/978-3-642-03359-9_10}.

\bibitem[Cartwright and Felleisen(1992)]{Cartwright:1992ys}
Robert Cartwright and Matthias Felleisen.
\newblock Observable sequentiality and full abstraction.
\newblock In \emph{POPL '92: Proceedings of the 19th ACM SIGPLAN-SIGACT
  symposium on Principles of programming languages}, pages 328--342, New York,
  NY, USA, 1992. ACM Press.
\newblock ISBN 0-89791-453-8.

\bibitem[Cartwright et~al.(1994)Cartwright, Curien, and
  Felleisen]{Cartwright:1994ly}
Robert Cartwright, Pierre-Louis Curien, and Matthias Felleisen.
\newblock Fully abstract semantics for observably sequential languages.
\newblock \emph{Inf. Comput.}, 111\penalty0 (2):\penalty0 297--401, 1994.
\newblock ISSN 0890-5401.

\bibitem[Chargu{\'e}raud(2013)]{Chargueraud:2013aa}
Arthur Chargu{\'e}raud.
\newblock Pretty-big-step semantics.
\newblock In \emph{Proceedings of the 22Nd European Conference on Programming
  Languages and Systems}, ESOP'13, pages 41--60, Berlin, Heidelberg, 2013.
  Springer-Verlag.
\newblock ISBN 978-3-642-37035-9.
\newblock \doi{10.1007/978-3-642-37036-6_3}.
\newblock URL \url{http://dx.doi.org/10.1007/978-3-642-37036-6_3}.

\bibitem[Church(1932)]{Church:1932aa}
Alonzo Church.
\newblock A set of postulates for the foundation of logic.
\newblock \emph{Annals of Mathematics}, 33\penalty0 (2):\penalty0 pp. 346--366,
  1932.
\newblock ISSN 0003486X.
\newblock URL \url{http://www.jstor.org/stable/1968337}.

\bibitem[Coppo et~al.(1979)Coppo, Dezani-Ciancaglini, and Salle']{Coppo:1979aa}
M.~Coppo, M.~Dezani-Ciancaglini, and P.~Salle'.
\newblock \emph{Functional characterization of some semantic equalities inside
  $\lambda$-calculus}, pages 133--146.
\newblock Springer Berlin Heidelberg, Berlin, Heidelberg, 1979.
\newblock ISBN 978-3-540-35168-9.
\newblock \doi{10.1007/3-540-09510-1_11}.
\newblock URL \url{http://dx.doi.org/10.1007/3-540-09510-1_11}.

\bibitem[Coppo et~al.(1984)Coppo, Dezani-Ciancaglini, Honsell, and
  Longo]{Coppo:1984aa}
M.~Coppo, M.~Dezani-Ciancaglini, F.~Honsell, and G.~Longo.
\newblock Extended type structures and filter lambda models.
\newblock In G.~Longo G.~Lolli and A.~Marcja, editors, \emph{Logic Colloquium
  '82}, volume 112 of \emph{Studies in Logic and the Foundations of
  Mathematics}, pages 241 -- 262. Elsevier, 1984.
\newblock \doi{http://dx.doi.org/10.1016/S0049-237X(08)71819-6}.
\newblock URL
  \url{http://www.sciencedirect.com/science/article/pii/S0049237X08718196}.

\bibitem[Coppo et~al.(1987)Coppo, Dezani-Ciancaglini, and Zacchi]{Coppo:1987aa}
M.~Coppo, M.~Dezani-Ciancaglini, and M.~Zacchi.
\newblock Type theories, normal forms, and d∞-lambda-models.
\newblock \emph{Information and Computation}, 72\penalty0 (2):\penalty0 85 --
  116, 1987.
\newblock ISSN 0890-5401.
\newblock \doi{http://dx.doi.org/10.1016/0890-5401(87)90042-3}.
\newblock URL
  \url{http://www.sciencedirect.com/science/article/pii/0890540187900423}.

\bibitem[Dockins(2014)]{Dockins:2014aa}
Robert Dockins.
\newblock Formalized, effective domain theory in coq.
\newblock In \emph{Interactive Theorem Proving}, ITP, 2014.

\bibitem[Dudenhefner and Rehof(2017)]{Dudenhefner:2017aa}
Andrej Dudenhefner and Jakob Rehof.
\newblock Intersection type calculi of bounded dimension.
\newblock In \emph{Proceedings of the 44th ACM SIGPLAN Symposium on Principles
  of Programming Languages}, POPL 2017, pages 653--665, New York, NY, USA,
  2017. ACM.
\newblock ISBN 978-1-4503-4660-3.
\newblock \doi{10.1145/3009837.3009862}.
\newblock URL \url{http://doi.acm.org/10.1145/3009837.3009862}.

\bibitem[Egidi et~al.(1992)Egidi, Honsell, and Della~Rocca]{Egidi:1992aa}
Lavinia Egidi, Furio Honsell, and Simona~Ronchi Della~Rocca.
\newblock Operational, denotational and logical descriptions: A case study.
\newblock \emph{Fundam. Inf.}, 16\penalty0 (2):\penalty0 149--169, February
  1992.
\newblock ISSN 0169-2968.
\newblock URL \url{http://dl.acm.org/citation.cfm?id=161643.161646}.

\bibitem[Engeler(1981)]{Engeler:1981aa}
Erwin Engeler.
\newblock Algebras and combinators.
\newblock \emph{algebra universalis}, 13\penalty0 (1):\penalty0 389--392, Dec
  1981.
\newblock ISSN 1420-8911.
\newblock \doi{10.1007/BF02483849}.
\newblock URL \url{https://doi.org/10.1007/BF02483849}.

\bibitem[Felleisen et~al.(1987)Felleisen, Friedman, Kohlbecker, and
  Duba]{Felleisen:1987ab}
Matthias Felleisen, Daniel~P. Friedman, Eugene Kohlbecker, and Bruce Duba.
\newblock A syntactic theory of sequential control.
\newblock \emph{Theoretical Computer Science}, 52\penalty0 (3):\penalty0 205 --
  237, 1987.
\newblock ISSN 0304-3975.
\newblock \doi{http://dx.doi.org/10.1016/0304-3975(87)90109-5}.
\newblock URL
  \url{http://www.sciencedirect.com/science/article/pii/0304397587901095}.

\bibitem[Floyd(1969)]{Floyd:1969aa}
R.~W. Floyd.
\newblock Assigning meanings to programs.
\newblock In \emph{Proceedings of Symposia in Applied Mathematics}, volume~19,
  pages 19--32, 1969.

\bibitem[Frankle et~al.(2016)Frankle, Osera, Walker, and
  Zdancewic]{Frankle:2016aa}
Jonathan Frankle, Peter-Michael Osera, David Walker, and Steve Zdancewic.
\newblock Example-directed synthesis: A type-theoretic interpretation.
\newblock In \emph{Proceedings of the 43rd Annual ACM SIGPLAN-SIGACT Symposium
  on Principles of Programming Languages}, POPL '16, pages 802--815, New York,
  NY, USA, 2016. ACM.
\newblock ISBN 978-1-4503-3549-2.
\newblock \doi{10.1145/2837614.2837629}.
\newblock URL \url{http://doi.acm.org/10.1145/2837614.2837629}.

\bibitem[Frisch et~al.(2008)Frisch, Castagna, and Benzaken]{Frisch:2008aa}
Alain Frisch, Giuseppe Castagna, and V{\'e}ronique Benzaken.
\newblock Semantic subtyping: Dealing set-theoretically with function, union,
  intersection, and negation types.
\newblock \emph{J. ACM}, 55\penalty0 (4):\penalty0 19:1--19:64, September 2008.

\bibitem[Girard(1972)]{GIRARD72}
Jean-Yves Girard.
\newblock \emph{Interpretation fonctionelle et elimination des coupures de
  l'arithmetique d'ordre superieur}.
\newblock PhD thesis, Paris, France, 1972.

\bibitem[Gunter(1992)]{Gunter:1992aa}
Carl~A. Gunter.
\newblock \emph{Semantics of Programming Languages: Structures and Techniques}.
\newblock MIT Press, Cambridge, MA, USA, 1992.
\newblock ISBN 0-262-07143-6.

\bibitem[Harper(2012)]{Harper:2012aa}
Professor~Robert Harper.
\newblock \emph{Practical Foundations for Programming Languages}.
\newblock Cambridge University Press, New York, NY, USA, 2012.
\newblock ISBN 1107029570, 9781107029576.

\bibitem[Hindley(1982)]{Hindley:1982aa}
J.~R. Hindley.
\newblock \emph{The simple semantics for Coppo-Dezani-Sall{\'e} types}, pages
  212--226.
\newblock Springer Berlin Heidelberg, Berlin, Heidelberg, 1982.
\newblock ISBN 978-3-540-39184-5.
\newblock \doi{10.1007/3-540-11494-7_15}.
\newblock URL \url{http://dx.doi.org/10.1007/3-540-11494-7_15}.

\bibitem[Hoare(1969)]{Hoare:1969kw}
C.~A.~R. Hoare.
\newblock An axiomatic basis for computer programming.
\newblock \emph{Commun. ACM}, 12\penalty0 (10):\penalty0 576--580, 1969.
\newblock ISSN 0001-0782.

\bibitem[Hur and Dreyer(2011)]{Hur:2011aa}
Chung-Kil Hur and Derek Dreyer.
\newblock A kripke logical relation between ml and assembly.
\newblock In \emph{Proceedings of the 38th Annual ACM SIGPLAN-SIGACT Symposium
  on Principles of Programming Languages}, POPL '11, pages 133--146, New York,
  NY, USA, 2011. ACM.
\newblock ISBN 978-1-4503-0490-0.
\newblock \doi{10.1145/1926385.1926402}.
\newblock URL \url{http://doi.acm.org/10.1145/1926385.1926402}.

\bibitem[Hyland and Ong(2000)]{Hyland:2000ve}
J.~M.~E. Hyland and C.-H.~L. Ong.
\newblock On full abstraction for pcf: I, ii, and iii.
\newblock \emph{Inf. Comput.}, 163\penalty0 (2):\penalty0 285--408, 2000.
\newblock ISSN 0890-5401.

\bibitem[Jagannathan and Wright(1996)]{Jagannathan:1996aa}
Suresh Jagannathan and Andrew Wright.
\newblock Flow-directed inlining.
\newblock In \emph{Proceedings of the ACM SIGPLAN 1996 Conference on
  Programming Language Design and Implementation}, PLDI '96, pages 193--205,
  New York, NY, USA, 1996. ACM.
\newblock ISBN 0-89791-795-2.
\newblock \doi{10.1145/231379.231417}.
\newblock URL \url{http://doi.acm.org/10.1145/231379.231417}.

\bibitem[Jeffrey and Rathke(2005)]{Jeffrey:2005kx}
A.~S.~A. Jeffrey and J.~Rathke.
\newblock Java jr.: Fully abstract trace semantics for a core java language.
\newblock In \emph{Proc. European Symposium on Programming}, 2005.

\bibitem[Jeffrey(1993)]{Jeffrey:1993aa}
Alan Jeffrey.
\newblock A fully abstract semantics for concurrent graph reduction.
\newblock Technical Report 12/93, University of Sussex, 1993.

\bibitem[Kahn(1987)]{Kahn:1987aa}
Gilles Kahn.
\newblock Natural semantics.
\newblock In \emph{Symposium on Theoretical Aspects of Computer Science}, pages
  22--39, 1987.

\bibitem[Landin(1964)]{Landin:1964dk}
P.~J. Landin.
\newblock The mechanical evaluation of expressions.
\newblock \emph{The Computer Journal}, 6\penalty0 (4):\penalty0 308--320, 1964.

\bibitem[Leroy(2009)]{Leroy:2009aa}
Xavier Leroy.
\newblock Formal verification of a realistic compiler.
\newblock \emph{Commun. ACM}, 52\penalty0 (7):\penalty0 107--115, July 2009.
\newblock ISSN 0001-0782.
\newblock \doi{10.1145/1538788.1538814}.
\newblock URL \url{http://doi.acm.org/10.1145/1538788.1538814}.

\bibitem[Martin-L\"{o}f(1985)]{Martin-Lof:1985aa}
P.~Martin-L\"{o}f.
\newblock Constructive mathematics and computer programming.
\newblock In \emph{Proc. Of a Discussion Meeting of the Royal Society of London
  on Mathematical Logic and Programming Languages}, pages 167--184, Upper
  Saddle River, NJ, USA, 1985. Prentice-Hall, Inc.
\newblock ISBN 0-13-561465-1.
\newblock URL \url{http://dl.acm.org/citation.cfm?id=3721.3731}.

\bibitem[McCarthy(1960)]{McCarthy:1960dz}
John McCarthy.
\newblock Recursive functions of symbolic expressions and their computation by
  machine, part i.
\newblock \emph{Commun. ACM}, 3\penalty0 (4):\penalty0 184--195, 1960.
\newblock ISSN 0001-0782.

\bibitem[Milner(1978)]{Milner:1978kh}
Robin Milner.
\newblock A theory of type polymorphism in programming.
\newblock \emph{Journal of Computer and System Sciences}, 17\penalty0
  (3):\penalty0 348--375, 1978.

\bibitem[Milner(1995)]{Milner:1995aa}
Robin Milner.
\newblock \emph{Communication and Concurrency}.
\newblock Prentice Hall International (UK) Ltd., Hertfordshire, UK, UK, 1995.
\newblock ISBN 0-13-115007-3.

\bibitem[Morris(1968)]{James-Hiram-Morris:1968kx}
James~H. Morris.
\newblock \emph{Lambda-calculus Models of Programming Languages}.
\newblock PhD thesis, MIT, Cambridge, MA, USA, December 1968.

\bibitem[Owens et~al.(2016)Owens, Myreen, Kumar, and Tan]{Owens:2016aa}
Scott Owens, Magnus~O. Myreen, Ramana Kumar, and Yong~Kiam Tan.
\newblock Functional big-step semantics.
\newblock In \emph{Proceedings of the 25th European Symposium on Programming
  Languages and Systems - Volume 9632}, pages 589--615, New York, NY, USA,
  2016. Springer-Verlag New York, Inc.
\newblock ISBN 978-3-662-49497-4.
\newblock \doi{10.1007/978-3-662-49498-1_23}.
\newblock URL \url{http://dx.doi.org/10.1007/978-3-662-49498-1_23}.

\bibitem[Pierce(2002)]{Pierce:2002hj}
Benjamin~C. Pierce.
\newblock \emph{Types and {P}rogramming {L}anguages}.
\newblock MIT Press, 2002.

\bibitem[Plotkin(1973)]{Plotkin:1973fk}
G.D. Plotkin.
\newblock Lambda-definability and logical relations.
\newblock Technical Report SAI-RM-4, University of Edinburgh, School of
  Artificial Intelligence, October 1973.

\bibitem[Plotkin(1981)]{Plotkin:1981aa}
G.D. Plotkin.
\newblock A structural approach to operational semantics.
\newblock Technical Report DAIMI FN-19, Computer Science Dept., Aarhus
  Universitet, 1981.
\newblock URL \url{https://books.google.com/books?id=tt7sjgEACAAJ}.

\bibitem[Plotkin(1972)]{Plotkin:1972aa}
Gordon~D. Plotkin.
\newblock A set-theoretical definition of application.
\newblock Technical Report MIP-R-95, University of Edinburgh, 1972.

\bibitem[Plotkin(1993)]{Plotkin:1993ab}
Gordon~D. Plotkin.
\newblock Set-theoretical and other elementary models of the λ-calculus.
\newblock \emph{Theoretical Computer Science}, 121\penalty0 (1):\penalty0 351
  -- 409, 1993.
\newblock ISSN 0304-3975.
\newblock \doi{http://dx.doi.org/10.1016/0304-3975(93)90094-A}.
\newblock URL
  \url{http://www.sciencedirect.com/science/article/pii/030439759390094A}.

\bibitem[Reynolds(1974)]{REYNOLDS74C}
John~C. Reynolds.
\newblock Towards a theory of type structure.
\newblock In \emph{Programming Symposium}, volume~19 of \emph{LNCS}, pages
  408--425. Springer-Verlag, 1974.

\bibitem[Reynolds(2004)]{Reynolds:2004fk}
John~C. Reynolds.
\newblock Toward a grainless semantics for shared-variable concurrency.
\newblock In \emph{Proceedings of the 24th Conference on Foundations of
  Software Technology and Theoretical Computer Science (FSTTCS 2004)}, December
  2004.

\bibitem[{Ronchi Della Rocca} and Paolini(2004)]{Rocca:2004aa}
Simona {Ronchi Della Rocca} and Luca Paolini.
\newblock \emph{The Parametric Lambda Calculus}.
\newblock Springer, 2004.

\bibitem[Schmidt(2003)]{Schmidt:2003aa}
David~A. Schmidt.
\newblock Programming language semantics.
\newblock In \emph{Encyclopedia of Computer Science}, pages 1463--1466. John
  Wiley and Sons Ltd., Chichester, UK, 2003.
\newblock ISBN 0-470-86412-5.
\newblock URL \url{http://dl.acm.org/citation.cfm?id=1074100.1074733}.

\bibitem[Scott(1970)]{Scott:1970dp}
Dana Scott.
\newblock Outline of a mathematical theory of computation.
\newblock Technical Report PRG-2, Oxford University, November 1970.

\bibitem[Scott(1976)]{Scott:1976lq}
Dana Scott.
\newblock Data types as lattices.
\newblock \emph{SIAM Journal on Computing}, 5\penalty0 (3):\penalty0 522--587,
  1976.

\bibitem[Scott and Strachey(1971)]{Scott:1971ab}
Dana Scott and Christopher Strachey.
\newblock Toward a mathematical semantics for computer languages.
\newblock Technical Report PRG-6, Oxford Programming Research Group, 1971.

\bibitem[Scott(1993)]{Scott:1993uq}
Dana~S. Scott.
\newblock A type-theoretic alternative to iswim, cuch, owhy.
\newblock \emph{Theoretical Computer Science}, 121\penalty0 (411-440), 1993.

\bibitem[Siek(2017)]{Siek:2017aa}
Jeremy Siek.
\newblock Declarative semantics for functional languages.
\newblock \emph{Archive of Formal Proofs}, July 2017.
\newblock ISSN 2150-914x.
\newblock \url{http://isa-afp.org/entries/Decl_Sem_Fun_PL.html}, Formal proof
  development.

\bibitem[Siek(2012)]{Siek:2012ac}
Jeremy~G. Siek.
\newblock Big-step, diverging or stuck?, July 2012.
\newblock URL
  \url{http://siek.blogspot.com/2012/07/big-step-diverging-or-stuck.html}.

\bibitem[Skorstengaard(2015)]{Skorstengaard:2015aa}
Lau Skorstengaard.
\newblock An introduction to logical relations.
\newblock Technical report, Aarhus University, 2015.

\bibitem[Smyth and Plotkin(1982)]{Smyth:1982aa}
M.~B. Smyth and Gordon~D. Plotkin.
\newblock The category-theoretic solution of recursive domain equations.
\newblock \emph{SIAM Journal on Computing}, 11\penalty0 (4), November 1982.

\bibitem[Statman(1985)]{Statman:1985aa}
R.~Statman.
\newblock Logical relations and the typed λ-calculus.
\newblock \emph{Information and Control}, 65\penalty0 (2--3):\penalty0 85 --
  97, 1985.
\newblock ISSN 0019-9958.
\newblock \doi{http://dx.doi.org/10.1016/S0019-9958(85)80001-2}.
\newblock URL
  \url{http://www.sciencedirect.com/science/article/pii/S0019995885800012}.

\bibitem[Tristan and Leroy(2008)]{Tristan:2008aa}
Jean-Baptiste Tristan and Xavier Leroy.
\newblock Formal verification of translation validators: A case study on
  instruction scheduling optimizations.
\newblock In \emph{Proceedings of the 35th Annual ACM SIGPLAN-SIGACT Symposium
  on Principles of Programming Languages}, POPL '08, pages 17--27, New York,
  NY, USA, 2008. ACM.
\newblock ISBN 978-1-59593-689-9.
\newblock \doi{10.1145/1328438.1328444}.
\newblock URL \url{http://doi.acm.org/10.1145/1328438.1328444}.

\bibitem[Turbak et~al.(1997)Turbak, Dimock, Muller, and Wells]{Turbak:1997aa}
Franklyn Turbak, Allyn Dimock, Robert Muller, and J.~B. Wells.
\newblock Compiling with polymorphic and polyvariant flow types.
\newblock In \emph{Workshop on Types in Compilation}, 1997.

\bibitem[Waddell and Dybvig(1997)]{Waddell:1997fk}
Oscar Waddell and R.~Kent Dybvig.
\newblock Fast and effective procedure inlining.
\newblock In \emph{Proceedings of the 4th International Symposium on Static
  Analysis}, SAS '97, pages 35--52, London, UK, 1997. Springer-Verlag.

\bibitem[Wadler(1989)]{Wadler:1989fk}
Philip Wadler.
\newblock Theorems for free!
\newblock In \emph{FPCA '89: Proceedings of the fourth international conference
  on Functional programming languages and computer architecture}, pages
  347--359. ACM, 1989.
\newblock ISBN 0-89791-328-0.

\bibitem[Wadler(1992)]{wadler92:_essence}
Philip Wadler.
\newblock The essence of functional programming.
\newblock In \emph{Symposium on {P}rinciples of {P}rogramming {L}anguages},
  1992.

\bibitem[Wand(1979)]{Wand:1979aa}
Mitchell Wand.
\newblock Fixed-point constructions in order-enriched categories.
\newblock \emph{Theoretical Computer Science}, 8\penalty0 (1):\penalty0 13 --
  30, 1979.
\newblock ISSN 0304-3975.
\newblock \doi{http://dx.doi.org/10.1016/0304-3975(79)90053-7}.
\newblock URL
  \url{http://www.sciencedirect.com/science/article/pii/0304397579900537}.

\bibitem[Winskel(1987)]{Winskel:1987fp}
Glynn Winskel.
\newblock Event structures.
\newblock In \emph{Advances in Petri Nets 1986}, number 255 in LNCS, 1987.

\bibitem[Winskel(1993)]{Winskel:1993uq}
Glynn Winskel.
\newblock \emph{{The Formal Semantics of Programming Languages}}.
\newblock {Foundations of Computing}. MIT Press, 1993.

\bibitem[Wright and Felleisen(1994)]{wright94:_type_soundness}
Andrew~K. Wright and Matthias Felleisen.
\newblock A syntactic approach to type soundness.
\newblock \emph{Information and Computation}, 115\penalty0 (1):\penalty0
  38--94, 1994.
\newblock ISSN 0890-5401.

\end{thebibliography}
}

\end{document}